\title{Local search breaks $1.75$ for Graph Balancing\footnote{Research
was supported by German Research Foundation (DFG) project JA 612/15-2}}
\author{Klaus Jansen \and Lars Rohwedder}
\date{Christian-Albrechts-Universit\"at, Kiel, Germany \\
  \texttt{\{kj, lro\}@informatik.uni-kiel.de}}
\newtheorem{theorem}{Theorem}
\newtheorem{lemma}{Lemma}
\newtheorem{definition}{Definition}
\newtheorem{claim}{Claim}
\newtheorem{fact}{Fact}
\newcommand{\OPT}{\mathrm{OPT}}
\begin{document}

\maketitle
\begin{abstract}
  {\sc Graph Balancing} is the problem of orienting the edges of a weighted multigraph
  so as to minimize the maximum weighted in-degree.
  Since the introduction of the problem the best algorithm known achieves
  an approximation ratio of $1.75$ and it is based on rounding a linear program
  with this exact integrality gap.
  It is also known that there is no $(1.5 - \epsilon)$-approximation algorithm, 
  unless $\mathrm{P}=\mathrm{NP}$. Can we do better than $1.75$?
  
  We prove that a different LP formulation, the configuration LP, has a strictly smaller
  integrality gap.
  \textsc{Graph Balancing} was the last one in a group of related problems from literature, for which
  it was open whether the configuration LP is stronger than previous, simple LP relaxations.
  We base our proof on a local search approach that has been applied successfully to the more general
  {\sc Restricted Assignment} problem, which in turn is a prominent special case
of makespan minimization on unrelated machines.
  With a number of technical novelties we are able to obtain a bound of
  $1.749$ for the case of {\sc Graph Balancing}. 
  It is not clear whether the local search algorithm we present terminates in polynomial time,
  which means that the bound is non-constructive.
  However, it is a strong evidence that a better approximation algorithm is possible using the configuration LP
  and it allows the optimum to be estimated within a factor better than $1.75$.

  A particularly interesting aspect of our techniques is the way we handle small edges in the local search.
  We manage to exploit the configuration constraints enforced on small edges
  in the LP. This may be of interest to other problems such as {\sc Restricted Assignment} as
  well.
\end{abstract}
\section{Introduction}
In this paper we consider weighted, undirected multigraphs that may contain loops.
We write such a multigraph as $G = (V, E, r, w)$, where $V$ is the set of vertices,
$E$ is the set of edge identities,
and $r$ is a function $E\rightarrow \{\{u, v\} : u, v\in V\}$ that defines
the endpoints for every edge.
Note that in the definition above we allow $u = v$, which describes a loop.
$E$ is often defined as a set of vertex pairs. We use the function $r$ instead, since
it avoids some issues due to multigraphs.
The weight function $w: E\rightarrow \mathbb R_{>0}$ assigns positive weights to the edges.
In the {\sc Graph Balancing} problem we want to compute
an orientation of the edges, i.e., one of the ways to turn the graph
into a directed graph. The goal is to minimize the maximum
weighted in-degree over all vertices, that is
\begin{equation*}
  \max_{v\in V} \sum_{e\in \delta^-(v)} w(e) ,
\end{equation*}
where $\delta^-(v)$ are the incoming edges of vertex $v$ in
the resulting digraph.
Apart from being an arguably natural problem,
{\sc Graph Balancing} has been of particular interest
to the scheduling community. It is one of the simplest special
cases of makespan minimization on unrelated machines for which
an inapproximability bound of $1.5 - \epsilon$ is known, which is 
already the best that is known in the general problem. 
In the interpretation as a scheduling problem,
 machines correspond to vertices and
jobs to edges, i.e., each job has only two potential machines
to which it can be assigned.
The problem was introduced by Ebenlendr, Krc{\'{a}}l, and
Sgall~\cite{DBLP:journals/algorithmica/EbenlendrKS14}.
They gave a $1.75$-approximation and showed that no
$(1.5-\epsilon)$-approximation is possible unless $\mathrm{P} = \mathrm{NP}$.
Their algorithm rounds the solution of a particular
linear programming formulation.
This appears to be the best one can hope for using their techniques, since
the ratio between integral optimum and fractional optimum of the LP, the integrality gap, can be arbitrarily close to $1.75$~\cite{DBLP:journals/algorithmica/EbenlendrKS14}.
Using a completely different approach to~\cite{DBLP:journals/algorithmica/EbenlendrKS14},
Huang and Ott developed a purely combinatorial algorithm for the problem~\cite{DBLP:conf/esa/HuangO16}.
With $5/3 + 4/21 \approx 1.857$, however, their approximation ratio is inferior to the original algorithm.
Another algorithm for {\sc Graph Balancing}, developed by Wang and Sitters~\cite{DBLP:journals/ipl/WangS16},
achieves an approximation ratio of $11/6 \approx 1.833$,
i.e., also worse than the original,
but notable for being simpler.
For the special case of only two different edge weights, three independent groups found a tight $1.5$-approximation~\cite{DBLP:journals/corr/ChakrabartyS16, DBLP:conf/esa/HuangO16, DBLP:journals/algorithms/PageS16}.

A good candidate for a stronger linear program to that from~\cite{DBLP:journals/algorithmica/EbenlendrKS14}
is the configuration LP. It was introduced by Bansal and Sviridenko
for the more general problem {\sc Scheduling on Unrelated Machines}
and the closely related {\sc Santa Claus} problem~\cite{DBLP:conf/stoc/BansalS06}.
It is easy to show that this LP is at least as strong as the LP from~\cite{DBLP:journals/algorithmica/EbenlendrKS14}
(see the same paper), i.e., the integrality gap must be at most $1.75$ as well.
The best lower bound known is $1.5$ (see for instance~\cite{DBLP:conf/swat/JansenLM16}, this holds even for
the case of {\sc Graph Balancing}).
In recent literature, the configuration LP has enabled breakthroughs in
the restricted variants for both of the problems above~\cite{DBLP:journals/talg/AsadpourFS12,DBLP:journals/siamcomp/Svensson12}.
The restricted variant of {\sc Scheduling on Unrelated Machines} (also known as {\sc Restricted Assignment}) can be seen as
{\sc Graph Balancing} with hyperedges. In particular, it contains the {\sc Graph Balancing} problem as a special case.
In this setting, the configuration LP was shown first to have an integrality gap of at
most $33/17 \approx 1.941$~\cite{DBLP:journals/siamcomp/Svensson12}, which was improved to
$11/6\approx 1.833$ by us~\cite{DBLP:conf/soda/JansenR17}.
This non-constructive proof is by a local search algorithm that is not known to terminate in polynomial time.
In this paper, we present a sophisticated local search algorithm for {\sc Graph Balancing}
and obtain the following result.
\begin{theorem}
  The configuration LP has an integrality gap of at most $1.749$ in the {\sc Graph Balancing} case.
\end{theorem}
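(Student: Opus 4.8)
The plan is to argue by contradiction via a non-constructive local search, in the spirit of the local-search proofs for \textsc{Restricted Assignment}. Scaling so that the configuration LP is feasible for target makespan $1$ (which in particular forces every edge weight to be at most $1$), fix $\alpha = 1.749$ and suppose that no orientation attains maximum weighted in-degree at most $\alpha$. We run a local search that maintains a \emph{partial} orientation in which every vertex has weighted in-degree at most $\alpha$, and show it must terminate with every edge oriented --- a contradiction. The algorithm repeatedly picks an unoriented edge $e_0$ and tries to place it: orienting it towards an endpoint $v$ is allowed only if the resulting load stays below $\alpha$; otherwise some edges currently oriented towards $v$ must first be re-oriented to their other endpoints. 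This spawns a tree of \emph{desired moves} rooted at $e_0$, whose nodes are (vertex, edge) pairs and whose growth is driven by ``blocking'' edges that are too heavy to move before their target vertex has been unloaded. When a desired move becomes executable without any blocker, we perform the corresponding augmenting sequence and restart.

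The core of the argument is that this tree cannot grow indefinitely. As in our earlier work, we attach to the current state a potential vector --- essentially the load profile, layer by layer in the tree --- and show that every augmenting step decreases it lexicographically, while the number of states of bounded potential is finite, giving termination (though not in polynomial time). The only remaining failure mode is an unbounded tree that never augments; we rule this out by a volume argument against the configuration LP. Concretely, the vertices touched by the tree carry a large amount of ``stuck'' weight, and the convex combination of makespan-$1$ configurations that the LP solution induces on exactly those vertices must place all of it; but the blocking structure we maintain (each blocked vertex sits close to load $\alpha$ with a prescribed pattern of heavy edges) leaves too little fractional room, contradicting LP feasibility.

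Two ingredients are needed to push the bound below $1.75$. First, we exploit that in \textsc{Graph Balancing} every edge has at most two endpoints: a blocking edge on a vertex has a single alternative target, so the move tree is far more rigid than in the hyperedge case and a heavy edge can block at most one vertex at a time; this rigidity is what lets the accounting survive the loss of the $1/4$ slack one would otherwise budget for. Second, and this is the technically novel part, we do not treat small edges individually. We fix a threshold $\tau$ separating ``big'' from ``small'' edges (a constant to be optimized; we expect it slightly above $1/2$), move small edges only in aggregate, and crucially use the \emph{configuration} constraints they induce: on a vertex carrying a big edge of weight $w$, only configurations whose small part has total weight at most $1-w$ are available, so the fractional solution is forced to route a definite fraction of the nearby small weight elsewhere. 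Quantifying this forced spill of small edges is what produces the extra room that lowers the ratio to $1.749$.

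I expect the main obstacle to be exactly this last point: making the aggregate small-edge argument compatible with the move-and-blocking bookkeeping. Individually tiny edges are easy to move but awkward to charge; in aggregate they behave like one flexible object, but one must still control where the local search is permitted to deposit them without creating new violations, and simultaneously show the configuration LP cannot have concealed almost all of them near already-loaded vertices. Lining up the threshold $\tau$, the notion of an ``executable'' aggregate small move, and the constant $1.749$ in the final case analysis is where the real work lies; the rest is a careful but largely routine adaptation of the established local-search framework.
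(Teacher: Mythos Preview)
What you have written is a plan, not a proof: you explicitly defer ``the real work'' of choosing the threshold, defining executable aggregate small-edge moves, and carrying out the case analysis that pins down $1.749$. So the central gap is simply that the argument is not carried out.

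Beyond that, the direction you sketch diverges from the paper in several places where the actual difficulty lies. First, the paper does \emph{not} use a partial orientation with a tree of desired moves; it starts from a full orientation and maintains an ordered list $P$ of pending flips together with an inductively defined ``repelled'' relation, and the stuck case is resolved by exhibiting an explicit unbounded solution $(y,z)$ to the \emph{dual} of the configuration LP, not by a volume argument on the primal. Second, there is no single big/small threshold ``slightly above $1/2$'': the paper uses a three-way split into tiny ($\le 1/3$), small ($(1/3,1/2]$), and big ($>1/2$) edges, and the leverage comes from assigning tiny edges an inflated dual value $z_e=\beta\,w(e)$ with $\beta>1$, compensated by a ``critical vertex amortization'' that pairs each non-tiny critical vertex with the target of a tiny flip. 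Third, and most importantly for getting below $1.75$ in the general case, the paper introduces two kinds of pending flips (raw and regular) and a dynamically maintained set $Q$ of big edges that must \emph{not} be given $z$-value $1$; big edges in $F\setminus Q$ get $z_e=1$ with a separate ``big edge amortization'' term $b_v$, while those in $Q$ or outside $F$ get $\min\{w(e),R\}$. Your idea of ``moving small edges only in aggregate'' does not correspond to any of this machinery, and without something playing the role of $Q$ and the raw/regular distinction the case analysis for big edges of intermediate weight (the $0.5<w\le 0.6$ range in the paper) will not close.
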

In other words, it is stronger than the LP from~\cite{DBLP:journals/algorithmica/EbenlendrKS14}.
Although this does not give a polynomial time approximation algorithm, it is
strong evidence that such an algorithm can be developed using the configuration LP.
Furthermore, the optimal solution can be estimated in polynomial time within a factor
of $1.749+\epsilon$ for any $\epsilon > 0$ by solving the configuration LP with error $\epsilon$.
We emphasize that the purpose of this paper is to show a separation between the configuration LP
and the previously used LP relaxation. The constants in the proof are not
optimized. We chose to keep the case analysis (which is already difficult)
and constants as simple as possible instead of improving the third decimal place.
A summary of results regarding the configuration LP is given in Table~\ref{table-results}.
In fact, for all of the problems except for \textsc{Graph Balancing} it was known whether
or not the configuration LP improves over the previous state-of-the-art.

Our work in~\cite{DBLP:conf/soda/JansenR18} indicates
that earlier bounds on the integrality gap of the configuration LP in
related problems disregard many constraints enforced on small edges/jobs
and that without them the LP might be much weaker.
The backbone of our new proof is the utilization of these constraints. This
approach appears to be relevant not only to {\sc Graph Balancing}, but also to
other related local search based proofs.
\paragraph*{Other related work.}
The problem of minimizing the maximum out-degree is equivalent to the maximum in-degree.
The very similar problem of maximizing the minimum in- or out-degree has been settled by Wiese
and Verschae~\cite{DBLP:journals/scheduling/VerschaeW14}. They gave a 2-approximation
and this is the best possible assuming $\mathrm{P}\neq\mathrm{NP}$.
Surprisingly, this holds even in the unrelated case when the value of an edge may be different
on each end. They do not use the configuration LP, but it is easy to also get a bound of $2$
on its integrality gap using their ideas.
For the restricted case (a special case of the unrelated one) this bound of $2$ was already
proven by~\cite{DBLP:conf/focs/ChakrabartyCK09}.
We are not aware of any evidence that {\sc Graph Balancing} is easier on
simple graphs (without multiedges and loops).
The same reduction for the state-of-the-art lower bound holds even in that case.
A number of recent publications deal with the important question on
how local search algorithms, similar to the ones we employ in this paper, can be turned into
efficient algorithms~\cite{DBLP:journals/corr/Annamalai16, DBLP:journals/talg/AnnamalaiKS17, DBLP:conf/ipco/JansenR17, PolacekS16}.

\begin{table}
\centering
\caption{Integrality gap of the configuration LP for various problems}
\label{table-results}
\begin{tabular}{l l l}
  \hline\hline
  & Lower bound & Upper bound \\
  Scheduling on Unrelated Machines & $2$ & $2$~\cite{DBLP:journals/mp/LenstraST90} \\
  $\supset$ Restricted Assignment & $1.5$ & $1.833..$~\cite{DBLP:conf/soda/JansenR17} \\
  $\quad\supset$ Graph Balancing & $1.5$ & $\mathbf{1.749}$ \\
  $\supset$ Unrelated Graph Balancing & $2$~\cite{DBLP:journals/scheduling/VerschaeW14} & $2$ \\
  Santa Claus & $\infty$~\cite{DBLP:conf/stoc/BansalS06} & $\infty$~\cite{DBLP:conf/stoc/BansalS06} \\
  $\supset$ Restricted Santa Claus & $2$ & $3.833..$~\cite{DBLP:journals/corr/abs-1807-03626,DBLP:journals/corr/abs-1807-04152} \\
  $\supset$ Max-Min Unrelated Graph Balancing & $2$ & $2$~\cite{DBLP:journals/scheduling/VerschaeW14} \\ [1ex]
  \hline
\end{tabular}
\end{table}
\section{Preliminaries}
\paragraph{Notation.}
For some $v\in V$ we will denote by $\delta(v)$ the incident edges, i.e. those $e\in E$ with
$v\in r(e)$. When a particular orientation is clear from the context, we will write
$\delta^-(v)$ for the incoming edges and $\delta^+(v)$ for the outgoing edges of a vertex.
For some $F\subseteq E$ we will denote by $\delta_F(v)$ the incident edges of $v$ restricted to $F$
and $\delta^-_F(v)$, $\delta^+_F(v)$ accordingly.
For some $e\in E$ we will describe by $t(e) \in r(e)$ the vertex it is oriented towards and by $s(e)\in r(e)$
the vertex it is leaving. For a loop $e$, i.e., $r(e) = \{v\}$ for some $v\in V$, it always holds that $t(e) = s(e)$.
For a subset of edges $S\subseteq E$ we will write $w(S)$ for $\sum_{e\in S} w(e)$ and similar
for other functions over the edges.
\paragraph{The configuration LP.}
A configuration is a subset of edges that can be oriented towards a particular vertex without exceeding
a particular makespan $\tau$. Formally, we define the configurations as
\begin{equation*}
  \mathcal C(v, \tau) := \{C\subseteq \delta(v) : w(C) \le \tau\},\quad v\in V, \tau\in\mathbb R .
\end{equation*}
The configuration LP has no objective function. Instead it is parameterized by $\tau$, the makespan.
The optimum of the configuration LP is the lowest $\tau$ for which it is feasible. It will be denoted
by $\OPT^*$ throughout the paper.
\\[1em]
\fbox{\begin{minipage}{\textwidth}
Primal of the configuration LP.
\begin{align*}
  \sum_{v\in V} \sum_{C\in\mathcal C(v, \tau)} x_{v, C} &\le 1 &\forall v\in V \\
  \sum_{v\in r(e)}\sum_{C\in\mathcal C(v, \tau) : e\in C} x_{v, C} &\ge 1 &\forall e\in E \\
  x_{v, C} &\ge 0
\end{align*}
\end{minipage}}
\\[1em]
Although the configuration LP has exponential size,
a $(1 + \epsilon)$-approximation can be computed in polynomial time for every $\epsilon > 0$~\cite{DBLP:conf/stoc/BansalS06}.
We are particularly interested in the dual of the configuration LP.
A common idea for proving $\tau$ is lower than the optimum is to show that the dual is
unbounded for $\tau$ (instead of directly showing that the primal is infeasible for $\tau$).
\\[1em]
\fbox{\begin{minipage}{\textwidth}
Dual of the configuration LP.
\begin{align*}
  \min \sum_{v\in V} y_v &- \sum_{e\in E} z_e \\
  \text{s.t.} \sum_{e\in C} z_e &\le y_v &\forall v\in V, C\in \mathcal C(v, \tau) \\
  y, z &\ge 0
\end{align*}
\end{minipage}}
\\
\begin{lemma}\label{lemma-duality}
  If there exists $y_v\ge 0, v\in V$ and $z_e\ge 0, e\in E$ such that
  $\sum_{e\in C} z_e \le y_v$ for all $v\in V, C\in \mathcal C(v, \tau)$
  and $\sum_{v\in V} y_v < \sum_{e\in E} z_e$, then $\tau < \OPT^*$.
\end{lemma}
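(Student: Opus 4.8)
The plan is to establish the contrapositive by a direct weak-duality computation, without appealing to general LP duality as a black box. Suppose that $\tau\ge\OPT^*$; then, since feasibility of the primal is monotone in $\tau$ (enlarging $\tau$ only enlarges each configuration set $\mathcal C(v,\tau)$, hence preserves feasibility), the primal configuration LP is feasible for this value of $\tau$, and we may fix a feasible fractional solution $x$. Given the vectors $y\ge 0$ and $z\ge 0$ from the hypothesis, the goal is to chain the three available families of inequalities --- the packing constraints $\sum_{C\in\mathcal C(v,\tau)} x_{v,C}\le 1$, the covering constraints $\sum_{v\in r(e)}\sum_{C\in\mathcal C(v,\tau):\, e\in C} x_{v,C}\ge 1$, and the assumed dual feasibility $\sum_{e\in C} z_e\le y_v$ --- into the single inequality $\sum_{v\in V} y_v\ge\sum_{e\in E} z_e$, which contradicts the hypothesis $\sum_{v\in V} y_v<\sum_{e\in E} z_e$.

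Concretely, I would start from $\sum_{v\in V} y_v$ and, using $y_v\ge 0$ and the packing constraint, bound it from below by $\sum_{v\in V}\bigl(\sum_{C\in\mathcal C(v,\tau)} x_{v,C}\bigr) y_v$. Since every $x_{v,C}\ge 0$, the dual feasibility inequality then lets me replace each $y_v$ inside by $\sum_{e\in C} z_e$, producing the lower bound $\sum_{v\in V}\sum_{C\in\mathcal C(v,\tau)} x_{v,C}\sum_{e\in C} z_e$. Swapping the order of summation and grouping by the edge $e$ --- using that $e\in C\subseteq\delta(v)$ forces $v\in r(e)$ --- rewrites this as $\sum_{e\in E} z_e\bigl(\sum_{v\in r(e)}\sum_{C\in\mathcal C(v,\tau):\, e\in C} x_{v,C}\bigr)$. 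Finally, $z_e\ge 0$ together with the covering constraint bounds the parenthesised inner sum from below by $1$, so the whole expression is at least $\sum_{e\in E} z_e$, giving $\sum_{v\in V} y_v\ge\sum_{e\in E} z_e$ as required. This contradiction shows the primal is infeasible for $\tau$, and by the monotonicity observation above this is exactly the assertion $\tau<\OPT^*$.

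There is no genuine obstacle here: this is the standard ``dual-unboundedness certificate'' argument, and everything reduces to a two-line calculation. The only points that need a word of care are keeping the nonnegativity signs consistent when multiplying an inequality through by $x_{v,C}$, $y_v$, or $z_e$, and the index bookkeeping in the summation swap, namely that $e\in C$ with $C\in\mathcal C(v,\tau)$ already entails $v\in r(e)$, so that the double sum over $v$ and $C$ reorganises cleanly into a sum over $e$.
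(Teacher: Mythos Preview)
Your argument is correct. It differs from the paper's, though both are standard. The paper observes that $(y,z)$ is a feasible dual solution with negative objective value $\sum_v y_v-\sum_e z_e<0$; since the dual constraints are homogeneous, scaling by any $\alpha>0$ keeps feasibility while driving the objective to $-\infty$, so the dual is unbounded and by LP duality the primal is infeasible at $\tau$. You instead prove the contrapositive by a direct weak-duality chain: assuming a primal feasible $x$ at $\tau$, you sandwich $\sum_v y_v\ge\sum_{v,C}x_{v,C}\,y_v\ge\sum_{v,C}x_{v,C}\sum_{e\in C}z_e=\sum_e z_e\sum_{v\in r(e)}\sum_{C\ni e}x_{v,C}\ge\sum_e z_e$, contradicting the hypothesis. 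Your route is more self-contained (no black-box appeal to LP duality) and makes the role of each sign and constraint explicit; the paper's is a one-line invocation of duality. Both yield the same conclusion with no extra assumptions.
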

This is easy to see, since $y, z$ is a feasible solution for the dual and so are the same values
scaled by any $\alpha > 0$. This way an arbitrarily low objective value can be obtained.
\section{Graph Balancing in a special case}
To introduce our techniques, we first consider a simplified case where
$w(e) \in (0, 0.5] \cup \{1\}$ for each $e\in E$ and the configuration LP is feasible for $1$.
We will show that there exists an orientation with maximum weighted in-degree $1 + R$ where
$R = 0.74$.
Note that our bound in the general case is slightly worse.
\subsection{Algorithm}
We will now describe the local search algorithm that is used to prove that there exists
a solution of value $1 + R$.
\begin{definition}[Tiny, small, big edges]{\rm
  We call an edge $e$ tiny, if $w(e) \le 1 - R$; small, if $1 - R < w(e) \le 1/2$;
  and big, if $w(e) = 1$.
  We will write for the tiny, small, and big edges $T\subseteq E, S\subseteq E$, and $B\subseteq E$,
  respectively.}
\end{definition}
In the special case we consider first, there is no edge $e$ with $w(e) \in (1/2, 1)$.
\begin{definition}[Good and bad vertices]{\rm
For a given orientation, we call a vertex $v$
good, if $w(\delta^-(v)) \le 1 + R$.
A vertex is bad, if it is not good.}
\end{definition}
The local search algorithm starts with an arbitrary orientation and flips edges until
all vertices are good.
During this process, a vertex that is already good
will never be made bad.

The central data structure of the algorithm is an ordered list of pending flips $P = (e^P_1,e^P_2,\dotsc,e^P_\ell)$.
Here, every component $e^P_k$, stands for an edge the algorithm wants to flip.
If $P$ is clear from the context, we simply write \emph{flip} when we speak of a pending flip $e^P_k$.
A \emph{tiny flip} is a flip where $e^P_k$ is tiny.
In the same way we define \emph{small} and \emph{big flips}. The target of a flip $e^P_k$ is the
vertex $s(e^P_k)$ to which we want to orient it towards.
The algorithm will not perform the flip, if this would create a bad vertex, i.e., $w(\delta^-(s(e^P_k))) + w(e^P_k) > 1 + R$.
If it does not create a bad vertex, we say that the flip $e^P_k$ is a \emph{valid flip}.
For every $0\le k \le \ell$ define $P_{\le k} := (e^P_1,\dotsc,e^P_k)$, i.e.,
the first $k$ elements of $P$ (with $P_{\le 0}$ being the empty list).

At each point during the execution of the algorithm, the vertices repel certain
edges. This can be thought of as a binary relation between vertices and their incident edges,
i.e., a subset of $\{(v, e) : v\in r(e)\}$, and this relation changes dynamically as
the current orientation or $P$ change.
The definition of which vertices repel which edges is given in Section~\ref{repelled}.
The algorithm will only add a new pending flip $e$ to $P$, if 
$e$ is repelled by the vertex it is oriented towards and
not repelled by the other.
\\[1em]
\begin{algorithm}[H]
\SetKwInOut{KwInput}{Input}
\KwInput{Weighted multigraph $G = (V, E, r, w)$ with $\OPT^* = 1$ and $w(e)\in (0, 0.5]\cup\{1\}$
for all $e\in E$}
\KwResult{Orientation $s, t : E\rightarrow V$ with maximum weighted in-degree $1 + R$}
let $s, t: E \rightarrow V$ map arbitrary source and target vertices to each edge \;
\tcp{i.e., $\{s(e), t(e)\} = r(e)$ for all $e\in E$}
$\ell \gets 0$ ; \tcp{number of pending edges $P$ to flip}
\While{there is a bad vertex}{
  \eIf{there exists a valid flip $e \in P$}{
    let $0\le k\le\ell$ be minimal such that
         $e$ is repelled by $t(e)$ w.r.t. $P_{\le k}$ \;
    exchange $s(e)$ and $t(e)$ \;
    $P\leftarrow P_{\le k}$; $\ell \leftarrow k$; \tcp{Forget pending flips $e^P_{k+1},\dotsc,e^P_\ell$}
  }{
    choose an edge $e \in E \setminus P$ with $w(e)$ minimal and \\
    \qquad $e$ is repelled by $t(e)$ and not repelled by $s(e)$ w.r.t. $P$ \;
    $P_{\ell+1} \gets e$; $\ell = \ell + 1$; \tcp{Append $e$ to $P$}
  }
}
\caption{Local search algorithm for simplified {\sc Graph Balancing}}
\end{algorithm}
 
\subsection{Repelled edges}\label{repelled}
Consider the current list of $\ell$ pending flips $P_{\le\ell}$.
The repelled edges are defined inductively. For some $k\le\ell$ we will now define
the repelled edges w.r.t. $P_{\le k}$.
\begin{description}
  \item[(initialization)] If $k=0$, let every bad vertex $v$ repel every edge in $\delta(v)$.
    Furthermore, let every vertex $v$ repel every loop $e$ where $\{v\} = r(e)$.
  \item[(monotonicity)] If $k>0$ and $v$ repels $e$ w.r.t. $P_{\le k - 1}$,
    then let $v$ repel $e$ w.r.t. $P_{\le k}$.
\end{description}
The rule on loops is only for a technical reasons. Loops will never appear in the list of pending flips.
The remaining rules regard $k > 0$ and
the last pending flip in $P_{\le k}$, $e^P_k$.
The algorithm should reduce the load on $s(e^P_k)$ to make it valid.

Which edges exactly does $s(e^P_k)$ repel?
First, we define $\tilde E(P_{\le k-1}) \subseteq E$ 
where $e\in\tilde E(P_{\le k-1})$ if and only if $e$ is repelled by $s(e)$ w.r.t. $P_{\le k-1}$.
We will omit $P_{\le k-1}$ when it is clear from the context.
$\tilde E$ are edges that we do not expect to be able to flip: Recall that
when an edge $e$ is repelled by $s(e)$, it cannot be added to $P$. 
Moreover, for every $W$ define $E_{\ge W} =\{e\in E : w(e) \ge W\}$.
We are interested in values $W$ such that
\begin{equation}
  w(\delta^-_{\tilde E\cup E_{\ge W}}(s(e^P_k))) + w(e^P_k) > 1 + R . \label{maximal-W}
\end{equation}
Let $W_0 \in (0, w(e^P_k)]$ be maximal such that (\ref{maximal-W}) holds. To be well-defined, we set $W_0 = 0$, if no such $W$ exists.
In that case, however, it holds that $w(\delta^-(s(e^P_k))) + w(e^P_k) \le 1 + R$. This means that $e^P_k$ is valid and the algorithm will remove it from the list immediately. Hence, the
case is not particularly interesting.

We define the following edges to be repelled by $s(e^P_k)$:
\begin{description}
  \item[(uncritical)] If $W_0 > 1 - R$, let $s(e^P_k)$ repel every edge in $\delta_{\tilde E\cup E_{\ge W_0}}(s(e^P_k))$.
  \item[(critical)] If $W_0 \le 1 - R$, let $s(e^P_k)$ repel every edge in $\delta(s(e^P_k))$.
\end{description}
Note that in the cases above there is not a restriction to incoming edges like in (\ref{maximal-W}).
\begin{fact}\label{fact-flip-repel}
  $s(e^P_k)$ repels $e^P_k$ w.r.t. $P_{\le k}$.
\end{fact}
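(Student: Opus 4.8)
The plan is to simply unfold the definition of ``repelled'' given in Section~\ref{repelled} and locate $e^P_k$ among the edges that $s(e^P_k)$ starts to repel on account of the last pending flip. Since $s(e^P_k)\in r(e^P_k)$ by the definition of the target of a flip, we have $e^P_k\in\delta(s(e^P_k))$; so it only remains to verify, in whichever of the two rules (uncritical)/(critical) applies, that $e^P_k$ also lies in the relevant subset of $\delta(s(e^P_k))$. Note that $k\ge 1$ here, so the rules attached to the last pending flip $e^P_k$ are indeed the ones in force for $P_{\le k}$, and no appeal to (monotonicity) is needed.

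First I would dispose of the critical case $W_0\le 1-R$, which also absorbs the degenerate setting $W_0=0$ (using $R<1$). Here the (critical) rule makes $s(e^P_k)$ repel \emph{all} of $\delta(s(e^P_k))$, and in particular $e^P_k$, so we are done immediately.

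In the uncritical case $W_0>1-R$, the (uncritical) rule makes $s(e^P_k)$ repel every edge of $\delta_{\tilde E\cup E_{\ge W_0}}(s(e^P_k))$. The one observation needed is that $W_0$ was chosen from the interval $(0,w(e^P_k)]$, so $w(e^P_k)\ge W_0$ and hence $e^P_k\in E_{\ge W_0}$. Combined with $e^P_k\in\delta(s(e^P_k))$, this yields $e^P_k\in\delta_{\tilde E\cup E_{\ge W_0}}(s(e^P_k))$, so $s(e^P_k)$ repels $e^P_k$ w.r.t. $P_{\le k}$, as claimed.

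I do not expect a genuine obstacle: the statement follows directly from the definitions, the only substantive point being that the constraint $W_0\le w(e^P_k)$ is baked into the very definition of $W_0$, which forces $e^P_k$ into $E_{\ge W_0}$. Accordingly I would keep the argument to these few lines rather than expand it.
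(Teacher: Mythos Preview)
Your proposal is correct and follows exactly the paper's reasoning: the paper's entire justification is the single remark that $W_0\le w(e^P_k)$ in the rules for $P_{\le k}$, and you have simply unpacked this into the two cases (critical)/(uncritical) together with the observation $e^P_k\in\delta(s(e^P_k))$.
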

This is because of $W_0 \le w(e^P_k)$ in the rules for $P_{\le k}$.
An important observation is that repelled edges are stable under the following operation.
\begin{fact}\label{fact-W0-increase}
  Let $e\notin P_{\le k}$ be an edge that is not repelled by any vertex
  w.r.t. $P_{\le k-1}$, and possibly by $t(e)$ (but not $s(e)$) w.r.t. $P_{\le k}$.
  If the orientation of $e$ changes and this does not affect the sets of good and bad vertices,
  the edges repelled  by some vertex w.r.t. $P_{\le k}$ will still be repelled after the change.
\end{fact}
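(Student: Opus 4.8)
I would prove, by induction on the level, a statement slightly stronger than the claim: for every $j$ with $0\le j\le k-1$ the relation ``$v$ repels $e'$ w.r.t.\ $P_{\le j}$'' is \emph{exactly the same} before and after the flip of $e$, while for $j=k$ it can only gain pairs. The ``unchanged below level $k$'' strengthening is essential: the level-$k$ rules refer to $\tilde E(P_{\le k-1})$ and, through~(\ref{maximal-W}), to $w(\delta^-_{\tilde E\cup E_{\ge W}}(s(e^P_k)))$; if $\tilde E$ could merely grow, $W_0$ might \emph{increase} and the repelled set $\delta_{\tilde E\cup E_{\ge W_0}}(s(e^P_k))$ \emph{shrink}. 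So the plan is to first establish ``unchanged'' for $j=0,\dots,k-1$ and then use it to establish ``can only grow'' at $j=k$.

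For $j=0$ the initialization depends only on the good/bad partition (unchanged by hypothesis) and on which edges are loops; and $e$ is not a loop, since a loop is repelled by its incident vertex w.r.t.\ every $P_{\le j}$, contradicting that $e$ is repelled by no vertex w.r.t.\ $P_{\le k-1}$ (recall $k>0$). Hence the flip is well defined and level $0$ is untouched. For the inductive step I would first record a recurring observation: $\tilde E(P_{\le j-1})$ is defined through the map $s(\cdot)$, which the flip alters only on $e$; since $e$ is repelled by neither of its endpoints w.r.t.\ $P_{\le j-1}$ both before the flip (it is repelled by nobody w.r.t.\ $P_{\le k-1}\supseteq P_{\le j-1}$) and after (the $P_{\le j-1}$-relation is unchanged by induction), we get $e\notin\tilde E(P_{\le j-1})$ in both states, so $\tilde E(P_{\le j-1})$ is unchanged. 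Now fix $j$ and split on whether $s(e^P_j)$ is incident to $e$. If not, $\delta^-(s(e^P_j))$ is untouched, so $W_0$ and the edges added at level $j$ are unchanged. If $s(e^P_j)$ is incident to $e$ and $j<k$, then $e$ is not repelled by $s(e^P_j)$; this excludes the critical case (which would make $s(e^P_j)$ repel all of $\delta(s(e^P_j))\ni e$) and forces $w(e)<W_0$ in the uncritical case. Flipping $e$ only inserts or deletes $e$ in $\delta^-(s(e^P_j))$, which alters the left-hand side of~(\ref{maximal-W}) only for thresholds $W\le w(e)$; since $w(e)<W_0$ and $W_0$ is the largest threshold attaining ``$>1+R$'', $W_0$ does not move, hence neither does the added set.

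It remains to treat level $k$ with $s(e^P_k)$ incident to $e$, the only place where the relation genuinely grows. If $e$ leaves $s(e^P_k)$ before the flip, then $e$ is not repelled by $s(e^P_k)=s(e)$, so as above we are in the uncritical case with $w(e)<W_0$ and the same threshold argument keeps $W_0$ and the added set fixed. If $e$ enters $s(e^P_k)=t(e)$ before the flip --- the situation the hypothesis explicitly permits --- then the flip deletes $e$ from $\delta^-(s(e^P_k))$, so the left-hand side of~(\ref{maximal-W}) can only decrease, hence $W_0$ can only decrease; with $\tilde E(P_{\le k-1})$ unchanged, $\delta_{\tilde E\cup E_{\ge W_0}}(s(e^P_k))$ can only grow, and if $W_0$ drops below $1-R$ we land in the critical case and $s(e^P_k)$ repels all of $\delta(s(e^P_k))$, still a superset. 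Chaining the levels, every pair repelled w.r.t.\ $P_{\le k}$ before the flip is still repelled afterwards.

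The main obstacle I expect is not any single computation but pinning down and respecting the right invariant: one has to notice that $\tilde E$ is defined via the very map $s(\cdot)$ that the flip perturbs, so ``$e$ does not slip into $\tilde E$ after relabelling'' needs its own (easy) argument, and one has to see that ``unchanged below level $k$'', rather than the weaker ``only grows'', is exactly what keeps $W_0$ from moving the wrong way at level $k$.
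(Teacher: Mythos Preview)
Your proposal is correct and follows essentially the same approach as the paper: prove by induction that the repelled relation is \emph{unchanged} at levels $0,\dots,k-1$, then show at level $k$ that $W_0$ can only decrease (hence the repelled set can only grow). Your write-up is somewhat more explicit than the paper's --- you spell out that $e$ cannot be a loop, that $\tilde E(P_{\le j-1})$ is stable under relabelling $s(e)$, and you split the level-$k$ analysis by the orientation of $e$ relative to $s(e^P_k)$ --- but these are exactly the details the paper glosses over with ``by the same argument as before''.
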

\begin{proof}
We first argue that the repelled edges w.r.t. $P_{\le k'}$, $k' = 0, \dotsc, k-1$ have not changed.
This argument is by induction.
Since the good and bad vertices do not change, the
edges repelled w.r.t. $P_{\le 0}$ do not change.
Let $k'\in\{1,\dotsc, k-1\}$ and
assume that the edges repelled w.r.t. $k'-1$ have
not changed.
Moreover, let $W_0$ be as in the definition of repelled edges w.r.t. $P_{\le k'}$ before the change.
We have to understand that $e$ is not and was not in $\delta^-_{\tilde E\cup E_{\ge W_0}} (s(e^P_{k'}))$
(with $\tilde E = \tilde E(P_{\le k'-1}$)).
This means that flipping it does not affect the choice of $W_0$ and, in particular, not the repelled edges.
Let $v$ and $v'$ denote the vertex $e$ is oriented towards before the flip
and after the flip, respectively. Since $e$ was not repelled by $v$ and $v'$ w.r.t. $P_{\le k'}$,
it holds that $v,v' \neq s(e^P_k)$ or $w(e) < W_0$:
$s(e^P_k)$ repelled all edges greater or equal $W_0$ in both the case (uncritial) and (critical),
but $e$ was not repelled by $v$ or $v'$.

Therefore $e\notin \delta^-_{E_{\ge W_0}}(s(e^P_{k'}))$
before and after the change. Moreover, since
$e$ was not repelled by any vertex w.r.t. $P_{\le k'-1}$
(and by induction hypothesis, it still is not), it follows that
$e\notin \tilde E(P_{\le k'-1})$.
By this induction we have that edges repelled w.r.t. $P_{\le k-1}$ have not changed
and by the same argument as before,
$e$ is not in $\delta^-_{\tilde E\cup E_{\ge W_0}} (s(e^P_k))$ after the change. This
means $W_0$ has not increased and edges repelled w.r.t. $P_{\le k}$ are still repelled.
It could be that $W_0$ decreases, if $e$ was in $\delta^-_{\tilde E\cup E_{\ge W_0}} (s(e^P_k))$
before the flip. This would mean that the number of edges repelled
by $s(e^P_k)$ increases.
\end{proof}
We note that $W_0$ (in the definition of $P_{\le k}$) is either equal to $w(e^P_k)$
or it is the maximal value for which (\ref{maximal-W}) holds.
Furthermore,
\begin{fact}\label{fact-edge-W0}
  Let $W_0$ be as in the definition of repelled edges w.r.t. $P_{\le k}$.
  If $W_0 < w(e^P_k)$, then there is an edge of weight exactly $W_0$ in $\delta^-(s(e^P_k))$. Furthermore, it is not a loop and it is not repelled by its other vertex, i.e., not $s(e^P_k)$, w.r.t. $P_{\le k}$.
\end{fact}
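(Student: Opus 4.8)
The plan is to fix $v := s(e^P_k)$ and study the left-hand side of (\ref{maximal-W}) as a function of the threshold, $g(W) := w(\delta^-_{\tilde E\cup E_{\ge W}}(v)) + w(e^P_k)$, where $\tilde E = \tilde E(P_{\le k-1})$. We may assume $W_0 > 0$: if $W_0 = 0$ then $w(\delta^-(v)) + w(e^P_k)\le 1+R$, the degenerate case discussed right after (\ref{maximal-W}), and the statement is not needed there. Since $\tilde E$ is fixed while $E_{\ge W}$ shrinks as $W$ grows, $g$ is non-increasing, and since $E$ is finite it is a step function whose points of decrease lie among the weights of the edges in $\delta^-(v)\setminus\tilde E$.

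First I would prove the existence of an edge of weight exactly $W_0$ in $\delta^-(v)$. By definition $W_0$ is the largest $W\in(0,w(e^P_k)]$ with $g(W) > 1+R$; as $W_0 < w(e^P_k)$, every $W\in(W_0,w(e^P_k)]$ satisfies $g(W)\le 1+R < g(W_0)$. Hence $g$ strictly decreases on $(W_0, W_0+\epsilon]$ for all small $\epsilon > 0$, which forces some $e\in\delta^-(v)$ with $e\notin\tilde E$ and $W_0\le w(e) < W_0+\epsilon$ to drop out of $\tilde E\cup E_{\ge W}$; choosing $\epsilon$ below the gap between $W_0$ and the next larger weight among edges of $\delta^-(v)\setminus\tilde E$ pins down $w(e) = W_0$. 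This edge $e$ is not $e^P_k$ (their weights differ) and, importantly, is not repelled by $s(e)$ w.r.t. $P_{\le k-1}$, i.e., $e\notin\tilde E(P_{\le k-1})$.

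It then remains to verify the two extra properties for this $e$. If $e$ were a loop it would be a loop at $v$, hence repelled by $v$ from the initialization rule and, by monotonicity, w.r.t. $P_{\le k-1}$; but for a loop $s(e) = v$, so this would mean $e\in\tilde E(P_{\le k-1})$, a contradiction — so $e$ is not a loop and its endpoint other than $v = t(e)$ is some $u = s(e)\ne v$. Finally, going from $P_{\le k-1}$ to $P_{\le k}$ the only new repulsions are edges repelled by $s(e^P_k) = v$ under the (uncritical)/(critical) rules, so the set of edges repelled by $u\ne v$ does not change; combined with $e\notin\tilde E(P_{\le k-1})$ this gives that $u$ does not repel $e$ w.r.t. $P_{\le k}$ either, as claimed.

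The only step that needs real care is the step-function argument of the second paragraph: one must use finiteness of $E$ both to rule out that $W_0$ is a value at which $g$ merely exceeds the threshold without being realized as an actual edge weight, and to legitimately choose $\epsilon$ small enough that the edge leaving $\tilde E\cup E_{\ge W}$ has weight precisely $W_0$. The rest is direct bookkeeping with the monotonicity of the repelling relation and the observation that the transition from $P_{\le k-1}$ to $P_{\le k}$ only adds repulsions at the single vertex $s(e^P_k)$.
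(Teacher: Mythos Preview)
Your proposal is correct and follows essentially the same approach as the paper. Both arguments exploit the maximality of $W_0$ (you via the step-function $g$, the paper via a direct contradiction with $W_0+\epsilon$) to produce an edge of weight $W_0$ in $\delta^-(s(e^P_k))\setminus\tilde E(P_{\le k-1})$, then use that loops are automatically in $\tilde E$ and that passing from $P_{\le k-1}$ to $P_{\le k}$ adds repulsions only at $s(e^P_k)$ to conclude the remaining two properties.
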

\begin{proof}
We prove this for $P_{\le k-1}$. Since the change from $P_{\le k-1}$ to $P_{\le k}$ only
affects $s(e^P_k)$, this suffices.
All edges that are repelled by their other vertex w.r.t. $P_{\le k-1}$ (in particular, loops) are in $\tilde E$.
Recall that $w(\delta^-_{\tilde E\cup E_{\ge W_0}}(s(e^P_k)) + w(e^P_k) > 1 + R$.
Assume toward contradiction there is no edge of weight
$W_0$ in $\delta^-(s(e^P_k))$, which is not in $\tilde E$.
This means there is some $\epsilon > 0$ such that
$\delta^-_{\tilde E\cup E_{\ge W_0 + \epsilon}}(s(e^P_k)) = \delta^-_{\tilde E\cup E_{\ge W_0}}(s(e^P_k))$.
Hence, $W_0$ is not maximal.
\end{proof}
\subsection{Analysis}
The following analysis holds for the values $R = 0.74$ and $\beta = 1.1$. $\beta$ is a central
parameter in the proof.
These can be slightly improved, but we refrain from this for the sake of simplicity.
The proof consists of two parts. We need to show that the algorithm terminates and
that there is always either a valid flip in $P$ or some edge can be added to $P$.
\begin{lemma}
  The algorithm terminates after finitely many iterations of the main loop.
\end{lemma}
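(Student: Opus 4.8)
The plan is to prove termination by exhibiting a potential function on configurations of the algorithm that strictly decreases (in a suitable well-ordering) with every iteration of the main loop. The natural quantity to track is a vector-valued potential built from the list of pending flips $P$ together with the current orientation, compared lexicographically. Concretely, I would associate to each pending flip $e^P_k$ a ``priority'' derived from its weight $w(e^P_k)$ — recall the algorithm always appends an edge of \emph{minimal} weight among those it is allowed to add — and also from how much load must be removed at its target, i.e.\ the value $W_0$ and the quantity $w(\delta^-_{\tilde E \cup E_{\ge W_0}}(s(e^P_k))) + w(e^P_k) - (1+R)$ appearing in~(\ref{maximal-W}). The claim to establish is then: in the first (flip-performing) branch, when we flip a valid $e\in P$ and truncate $P$ to $P_{\le k}$, the resulting configuration is strictly smaller; and in the second (append) branch, appending a new minimal-weight edge that is repelled by its target is also a strict decrease at the position $\ell+1$, because that slot went from ``empty'' to ``occupied by a repelled flip,'' while positions $1,\dots,\ell$ are unchanged.

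The key steps, in order, would be: (1) Fix the well-founded order. I would order single-flip priorities so that ``heavier repelled edge to be flipped'' and ``larger excess load at the target'' count as \emph{smaller} in the order, reflecting that the algorithm makes progress by first resolving the coarse, high-load obstructions; then order lists of flips lexicographically from $k=1$ upward, with a shorter-but-compatible prefix beating any proper extension. Since weights take finitely many values among the input edges and each $W_0$ is either a weight or $w(e^P_k)$ (Facts~\ref{fact-flip-repel}, \ref{fact-edge-W0}), each coordinate ranges over a finite set, so lexicographic order on fixed-length-bounded tuples is well-founded — but $\ell$ is a priori unbounded, so I must instead use the shortlex-style order where a truncation is always smaller, which is still well-founded. (2) Analyze the append branch: before appending, there was no valid flip in $P$, and the chosen $e\notin P$ is repelled by $t(e)$ but not by $s(e)$; Fact~\ref{fact-W0-increase} (with the change being the \emph{addition} of a flip rather than a reorientation — I would need the analogous monotonicity statement, or argue directly from the inductive definition) guarantees every previously-repelled edge stays repelled, so $P_{\le \ell}$ is untouched and only slot $\ell+1$ changes, a strict decrease. (3) Analyze the flip branch: we pick a valid $e\in P$, find the minimal $k$ with $e$ repelled by $t(e)$ w.r.t.\ $P_{\le k}$, reorient $e$, and set $P\leftarrow P_{\le k}$. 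Here I must show two things: first, that $e\neq e^P_j$ for $j\le k$ essentially by minimality of $k$ and Fact~\ref{fact-flip-repel} (the flip edge $e^P_j$ is repelled by $s(e^P_j)=t(e)$ only once it is \emph{in} the list, so reorienting it is handled at its own index); second, that after truncation and reorientation the edges repelled w.r.t.\ $P_{\le k}$ are preserved — this is exactly the content of Fact~\ref{fact-W0-increase}, since $e$ is repelled by $t(e)$ but not $s(e)$ w.r.t.\ $P_{\le k}$, is not in $P_{\le k}$, and the flip does not change good/bad vertices (the flip is valid, so $s(e)$ stays good; $t(e)$ only loses load, so it stays good; all other vertices are unaffected). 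Hence slots $1,\dots,k$ survive verbatim, slot $k$'s flip edge $e^P_k$ is now... actually the new list is exactly $P_{\le k}$, which is a truncation of the old list $P_{\le \ell}$ with $k\le \ell$, and it is a \emph{proper} truncation iff $k<\ell$; if $k=\ell$ then we flipped $e=e^P_\ell$ itself and emptied... no, we keep $P_{\le \ell}$ but $e^P_\ell$ is now on the other side. I would resolve this subtlety by noting that in the order, the flipped-and-kept list $P_{\le k}$ with its last flip now having a strictly reduced excess (its target $s(e^P_k)$ has just shed the load of $e$, which was counted in $\delta^-_{\tilde E\cup E_{\ge W_0}}(s(e^P_k))$ whenever $k$ was chosen minimal there) is strictly smaller at coordinate $k$; and for $k<\ell$ the truncation already wins.

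The main obstacle I anticipate is precisely this last point: making the flip branch a genuine strict decrease when $k=\ell$, i.e.\ when the valid flip performed is the newest pending flip and no truncation occurs. One must argue that performing $e^P_\ell$ strictly reduces some coordinate — the cleanest route is to show the load at $s(e^P_\ell)$ above $1+R$ strictly drops, which it does because $e^P_\ell$ is valid and gets added there, while by Fact~\ref{fact-edge-W0} the target had genuine incoming edges of weight $\ge W_0$ whose total exceeded $1+R-w(e^P_\ell)$; after the flip this target is good, so its contribution to the potential collapses. A second, more delicate issue is ruling out an infinite sequence of append-then-immediately-flip oscillations on the same edge: here I would invoke that when an edge is flipped it becomes repelled by its new target and not by its new source only under controlled circumstances, and that the ``minimal weight'' tie-breaking plus the monotonicity of repelled sets (Fact~\ref{fact-W0-increase}) prevent the configuration from cycling — formally, the lexicographic potential already precludes cycles since it strictly decreases every iteration, so the real work is entirely in steps (2) and (3) above, and termination then follows because a well-founded order admits no infinite descending chain.
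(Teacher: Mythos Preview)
Your ordering is internally inconsistent. You declare a shortlex-style order in which ``a truncation is always smaller,'' so appending a flip makes the list longer and hence \emph{increases} the potential; yet you call the append branch ``a strict decrease.'' In the flip branch with $k<\ell$ you then say ``the truncation already wins,'' i.e., the potential \emph{decreases}. Both branches cannot move the same way under this order, and reversing the direction does not help: if appending is a decrease then truncation is an increase, and you would need a strict decrease at some coordinate $\le k$ to compensate---which is exactly the ``main obstacle'' you flag, but now required in \emph{every} flip, not just when $k=\ell$. Your proposed per-slot quantity (weight together with excess load from~(\ref{maximal-W})) does not obviously supply this: the threshold $W_0$ is itself defined as a maximum over a load condition and can shift when the orientation changes, so the excess at slot $k$ is not cleanly monotone.

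The paper's argument uses a different per-slot quantity that avoids all of this: $|\tilde E(P_{\le k})|$, the number of edges repelled by their own source w.r.t.\ $P_{\le k}$, padded with a terminal $-1$, and it shows the vector \emph{increases} lexicographically. The append branch increases because the terminal $-1$ is replaced by a nonnegative count. For the flip branch the key observation you are missing is that when a valid $e\in P$ is flipped and $k$ is minimal with $t(e)$ repelling $e$ w.r.t.\ $P_{\le k}$, the edge $e$ itself joins $\tilde E(P_{\le k})$: before the flip its source $s(e)$ did not repel it (else it could not have been added to $P$), and after the flip its new source is the old $t(e)$, which does. Combined with Fact~\ref{fact-W0-increase} (no edge leaves $\tilde E(P_{\le j})$ for $j\le k$), this yields a strict increase at coordinate $k$ regardless of whether $k<\ell$ or $k=\ell$, so your ``main obstacle'' dissolves.
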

\begin{proof}
  We consider the potential function
  \begin{equation*}
    s(P) = (g, |\tilde E(P_{\le 0})|,\dotsc, |\tilde E(P_{\le \ell})|, -1) ,
  \end{equation*}
  where $g$ is the number of good vertices.
  We will argue that this vector increases lexicographically after every iteration of the main loop.
  Intuitively $|\tilde E(P_{\le k})|$, $k\le\ell$, is an measure for progress. When an edge $e$
  is repelled by a vertex $v$, then we want that $s(e) = v$. $|\tilde E(P_{\le k})|$ counts exactly
  these situations.
  Since $\ell$ is bounded by $|E|$, there can be at most $|V| \cdot |E|^{O(|E|)}$ possible values for the vector.
  Thus, the algorithm must terminate after at most this many iterations.
  In an iteration either a new flip is added to $P$ or
  a flip is executed.

  If a flip $e$ is added as the $(\ell+1)$-th element of $P$,
  then clearly $\tilde E(P_{\le i})$ does not change for $i\le\ell$.
  Furthermore, the last component of the vector is replaced by some non-negative value.

  Now consider a flip $e\in P$ that is executed.
  If this flip turns a bad vertex good, we are done.
  Hence, assume otherwise and let $v$ and
  $v'$ be the vertex it was previously and the one it is now oriented towards.
  Furthermore, let $\ell'$ be the length of $P$ after the flip.
  Recall that $\ell'$ was chosen such that before the flip is executed
  $v$ repels $e$ w.r.t. $P_{\le\ell'}$,
  but not w.r.t. $P_{\le k}$ for any $k \le \ell' - 1$.
  Also, $e$ is not repelled by $v'$ w.r.t. $P_{\le k}$ for any $k \le \ell'$ or
  else the flip would not have been added to $P$ in the first place.
  By Fact~\ref{fact-W0-increase} this means that repelled edges w.r.t. $P_{\le k}$,
  $k\le \ell'$, are still repelled after the flip.
  Because of this and because the only edge that changed direction, $e$,
  is not in $\tilde E(P_{\le k})$ for any $k\le \ell'$,
  $|\tilde E(P_{\le k})|$ has not decreased.
  Finally, 
  $e$ has not been in $\tilde E(P_{\le \ell'})$ before the flip,
  but now is. Thus, the first $\ell'-1$ components of the vector have not
  decreased and the $\ell'$-th one has increased.
\end{proof}
\begin{lemma}
  If there at least one bad vertex remaining, then there is either a valid flip in $P$ or a flip that can be added to $P$.
\end{lemma}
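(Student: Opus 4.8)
I would argue by contradiction using Lemma~\ref{lemma-duality}: assume that at some point there is a bad vertex but that no flip in $P$ is valid and no edge can be appended to $P$, and exhibit $y\ge 0$, $z\ge 0$ with $\sum_{e\in C}z_e\le y_v$ for every $v\in V$ and every $C\in\mathcal C(v,1)$ and with $\sum_v y_v<\sum_e z_e$, contradicting $\OPT^*=1$. The first step is to read off structure from ``stuck''. Call a vertex \emph{active} if it repels a non-loop edge; by the rules defining repelled edges the active vertices are exactly the bad vertices together with the flip targets $s(e^P_1),\dots,s(e^P_\ell)$. Since nothing can be appended, every $e\notin P$ repelled by its current target is also repelled by its current source and hence lies in $\tilde E$; and every $e=e^P_k\in P$ is repelled by $s(e^P_k)$ (Fact~\ref{fact-flip-repel}) and, as it was appended because its target repelled it, still by $t(e^P_k)$ (monotonicity, together with Fact~\ref{fact-W0-increase} so that the intervening valid flips never shrank a repelled set). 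Consequently every edge oriented towards an active vertex that repels it has \emph{both} endpoints active, and every pending flip currently sits on an active vertex whose repelled incoming set contains it. Finally I record the quantitative content: a bad vertex $v$ has $w(\delta^-(v))>1+R$, and a flip target $v=s(e^P_k)$ has, by maximality of $W_0$ in~(\ref{maximal-W}), $w(\delta^-_{\tilde E\cup E_{\ge W_0}}(v))+w(e^P_k)>1+R$, the first term being exactly the weight of the incoming edges $v$ repels; Fact~\ref{fact-edge-W0} controls the boundary weight $W_0$.

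For the dual I would put $z_e=w(e)$ --- scaled by the parameter $\beta$ on tiny edges --- on exactly the edges oriented towards an active vertex that repels them (by the above these have both endpoints active) and $z_e=0$ otherwise, and set $y_v$ on the active vertices to a common value calibrated so that every configuration is affordable, $y_v=0$ elsewhere. The reason the configuration constraints on small edges suffice for the feasibility inequality is that a configuration $C$ with $w(C)\le 1$ can hold \emph{any} amount of small edges (total $z$-weight $\le w(C)\le 1$), arbitrarily many tiny edges (total $z$-weight $\le\beta\,w(C)\le\beta$), or a single big edge (total $z$-weight $\le 1$), so $\sum_{e\in C}z_e$ is bounded by about $\beta$ no matter how $C$ mixes its edges; for inactive $v$ no incident edge has $z_e>0$, so the constraint is $0\le 0$. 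It then remains to beat $\sum_v y_v$ with $\sum_e z_e$. Grouping the $z$-mass by its target, $\sum_e z_e$ is, over the active vertices, a sum of $\beta$-weighted repelled incoming weights, which by the quantitative facts exceeds $1+R$ at a bad vertex and $1+R-w(e^P_k)$ at a flip target receiving the pending flip $e^P_k$. The deficit $w(e^P_k)$ there is amortised against $t(e^P_k)$, the active vertex on which $e^P_k$ currently sits: it already counts $w(e^P_k)$ in its own $z$-mass and is itself over $1+R$ (or a flip target with its own deficit), and a heavier flip forces a correspondingly larger surplus on this parent. Propagating along the forest that the pending-flip relation induces on the active vertices, the accumulated slack --- of order $(1+R)-\beta$ per active vertex --- overtakes $\sum_v y_v$.

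The hard part will be making the last two paragraphs consistent: $y_v$ has to be large enough that every configuration on $v$ --- in particular one carrying a big edge together with further relevant weight, or a lone outgoing relevant edge of a flip target --- is affordable, yet small enough, and the $z$-mass spread widely enough, that the amortisation survives the worst mixture of tiny/small/big repelled edges and of tiny/small/big pending flips. The binding cases are a flip target receiving a big flip and an active vertex carrying several small relevant edges; there the threshold $W_0=1-R$ separating the \textbf{(critical)} and \textbf{(uncritical)} rules, together with Fact~\ref{fact-edge-W0}, must be used to pin down which edges are repelled, and it is exactly closing these inequalities that forces the concrete constants $R=0.74$ and $\beta=1.1$. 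I expect most of the remaining work to be a disciplined case analysis on the types of the pending flips and of the heaviest edges incident to the active vertices, with no further conceptual obstacle beyond getting the dual set-up right.
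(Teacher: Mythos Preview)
Your high-level plan---assume stuck, produce a dual certificate, invoke Lemma~\ref{lemma-duality}---matches the paper exactly, and your structural observations about $\tilde E$ and active vertices are correct. But the concrete dual you propose does not work, and the failure is not a matter of tuning constants.

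The gap is the choice of a \emph{common} value for $y_v$ on active vertices. With $\beta>1$ this forces $y_v\ge\beta$, yet consider a chain of big pending flips: bad vertex $u$, then $e^P_1,\dots,e^P_n$ all big with $s(e^P_k)=v_k$, $t(e^P_k)=v_{k-1}$ (so $v_0=u$), and each intermediate $v_k$ having $e^P_{k+1}$ as its unique incoming edge. Then the incoming $z$-mass at each $v_k$, $1\le k\le n-1$, is exactly $1$, giving a per-vertex deficit of $\beta-1$. Your amortisation sends the deficit from $s(e^P_k)$ to $t(e^P_k)$, but $t(e^P_k)=v_{k-1}$ already had $e^P_k$ counted in its own $z$-mass, so nothing new is gained; the $(\beta-1)$ deficits simply accumulate along the chain and overwhelm the $O(1)$ surplus at $u$ once $n$ is large enough. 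The claimed slack ``of order $(1+R)-\beta$ per active vertex'' is only correct at bad vertices; at intermediate big-flip targets the incoming mass is only $1$, not $1+R$.

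The paper resolves this by taking $y_v=z(\delta^-(v))+a_v+b_v$ \emph{vertex by vertex}, with two explicit, \emph{zero-sum} amortisation terms: $b_v$ moves exactly $(1-R)$ along each big pending flip (this is what neutralises the chain), and $a_v$ moves $(\beta-1)$ between critical vertices and tiny-flip targets. With these, $\sum_v(a_v+b_v)\le 0$ gives unboundedness in one line; all of the work---and the case analysis you anticipate---goes into verifying \emph{feasibility} $z(C)\le y_v$ against this non-constant $y_v$. So you have the easy and hard halves of the argument inverted: with a constant $y_v$ feasibility is cheap and unboundedness is the obstacle, and that obstacle is genuine.
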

\begin{proof}
We assume toward contradiction that there exists a bad vertex, no valid pending flip and
no edge that can be added to $P$. We will show that this implies $\OPT^* > 1$.

Like above we denote by $\tilde E = \tilde E(P)$ those edges $e$ that are repelled by $s(e)$.
In particular, if an edge $e$ is in $P$ or repelled by $t(e)$ it must also be in $\tilde E$:
If such an edge is in $P$, this follows from Fact~\ref{fact-flip-repel}. Otherwise,
such an edge must be repelled also by $s(e)$ or else it could be added to $P$.
For every $e\in E\setminus \tilde E$, set $z_e = 0$. For every $e\in\tilde E$, set
\begin{equation*}
  z_e = \begin{cases}
    1 &\text{ if } w(e) = 1, \\
    w(e) &\text{ if } 1 - R < w(e) \le 1/2 , \text{ and} \\
    \beta w(e) &\text{ if } w(e) \le 1 - R .
  \end{cases}
\end{equation*}
In general, we would like to set each $y_v$ to $z(\delta^-(v))$ (or equivalently, $z(\delta_{\tilde E}^-(v))$).
However, there are two kinds of amortization between vertices that we include in the values of $y$.

As can be seen in the definition of repelled edges, a vertex $v$ repels either edges with weight
at least a certain threshold $W > 1 - R$ and edges in $\delta^-(v)$ that are repelled by their other
vertex; or they repel all edges. 
We will call vertices of the latter kind \emph{critical}.
In other words, a vertex $v$ is critical, if in the inductive definition of repelled edges
at some point $v = s(e^P_k)$ and the rule (critical) applies.
There might be a coincidence where only rule (uncritical) applies for $v$, but this already
covers all edges in $\delta^-(v)$. This is not a critical vertex.
We note that every vertex that is target of a tiny flip $e^P_k$ must be critical:
In the inductive definition when considering $e^P_k$ we have that $W_0 \le w(e^P_k) \le 1 - R$
by definition.
\begin{description}
\item[Critical vertex amortization.]
If $v$ is a good vertex and critical, but is not a target of a tiny flip, set
$a_v := \beta - 1$.
If $v$ is a good vertex and target of tiny flip (in particular, $v$ is critical), set
$a_v := - (\beta - 1)$. Otherwise, set $a_v = 0$.
\item[Big edge amortization.] 
  Let $F\subseteq P$ denote the set of big flips. Then in particular $F\subseteq \tilde E$ (Fact~\ref{fact-flip-repel}).
  We define for all $v\in V$,
  $b_v := (|\delta_{F}^+(v)| - |\delta_{F}^-(v)|) \cdot (1 - R)$.
\end{description}
We conclude the definition of $y$ by setting 
$y_v = z(\delta^-(v)) + a_v + b_v$ for all good vertices $v$ and
$y_v = z(\delta^-(v)) + a_v + b_v - \mu$ for all bad vertices $v$, where $\mu = 0.01$ is some sufficiently small constant.
\begin{claim}
  \label{simple-unbounded}
  It holds that $\sum_{v\in V} y_v < \sum_{e\in E} z_e$.
\end{claim}
\begin{claim}
  \label{simple-feasible}
  $y_v, z_e \ge 0$ f.a. $v\in V, e\in E$ and
  f.a. $v\in V$, $C\in\mathcal C(v, 1)$, $\sum_{e\in C} z_e \le y_v$.
\end{claim}
By Lemma~\ref{lemma-duality} this implies that $\OPT^* > 1$.
\end{proof}
\begin{proof}[Proof of Claim~\ref{simple-unbounded}]
First we note that
\begin{equation*}
  \sum_{v\in V} b_v
  = (1 - R) \underbrace{(\sum_{v\in V} |\delta_{F}^+(v)| - \sum_{v\in V} |\delta_{F}^-(v)|)}_{ = 0} = 0
\end{equation*}
Moreover, we have that $\sum_{v\in V} a_v \le 0$:
This is because at least half of all good vertices that are critical are
targets of tiny flips. When a vertex $v$ is critical but not target of a tiny flip,
there must be a non-tiny flip $e^P_k$ with $s(e^P_k) = v$
such that in the definition of repelled edges for $P_{\le k}$
we have $W_0 \le 1 - R$.
By Fact~\ref{fact-edge-W0} there exists an edge of weight $W_0$ (hence, tiny) in $\delta^-(v)$ 
which is not repelled
by its other vertex $v' \neq v$ w.r.t. $P_{\le k}$.
We argue later that this edge has already been in $\delta^-(v)$ when the list of
pending flips has consisted only of the $k$ flips in $P_{\le k}$.
Since a tiny flip will always be added before the next non-tiny flip
(edges of minimal weight are chosen), we know that $e^P_{k+1}$ must be a tiny flip and
$s(e^P_{k+1})$ is the target of a tiny flip. Also note that no vertex can be target
of two tiny flips, since after adding one such flip the vertex repels all edges.

The reason why the edge has already been in $\delta^-(s(e^P_k))$ is the following.
When an edge $e$ is flipped towards $v = s(e^P_k)$ and it is repelled by $v$ w.r.t. $P_{\le k}$,
then $e$ must have been a flip in $P$ earlier in the list than $e^P_k$.
This means that at least $e^P_k,\dotsc,e^P_\ell$ are removed from $P$. Because $e^P_k$ is
still in $P$, we can therefore assume that this has not happened.
If $W_0$ in the definition of the repelled edges w.r.t. $P_{\le k}$ has not decreased since the last
time the list consisted only of $P_{\le k}$, we are done, since this would mean the edge of weight $W_0$
has been repelled by $v$ all this time and could not have been added to $\delta^-(v)$.
This is indeed the case. $W_0$ can only decrease when an edge from $\delta^-_{\tilde E\cup E_{\ge W_0}}(v)$ is flipped.
This, on the other hand, causes at least $e^P_{k+1},\dotsc,e^P_{\ell}$ to be removed from $P$.
This finished the proof of $\sum_{v\in V} a_v \le 0$. We conclude,
\begin{equation*}
  \sum_{e\in E} z_e \ge \sum_{v\in V} \sum_{e\in\delta^-(v)} z_e + \underbrace{\sum_{v\in V}[b_v + a_v]}_{\le 0} > \sum_{v\in V} y_v ,
\end{equation*}
where the strict inequality holds because of the definition of $y_v$ and
because there exists at least one bad vertex.
\end{proof}

\begin{proof}[Proof of Claim~\ref{simple-feasible}]
Let $v\in V$ and $C\in\mathcal C(v, 1)$. We need to show that $z(C) \le y_v$.
Obviously the $z$ values are non-negative. By showing the inequality above, we also get that the $y$ values
are non-negative.
First, we will state some auxiliary facts.
\begin{fact}\label{fact-flip-load}
For every edge flip $e\in P$, we have
$w(\delta_{\tilde E}^-(s(e))) + w(e) > 1 + R$.
\end{fact}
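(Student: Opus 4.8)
The plan is to unwind the definition of the edges repelled by the flip-target $s(e)$ at the moment $e$ was appended to $P$, and to show that all of them persist into $\tilde E(P)$. First I would fix a flip $e = e^P_k\in P$ and consider the parameter $W_0$ used in the rules (uncritical)/(critical) for $P_{\le k}$. Since the contradiction hypothesis says no pending flip is valid, $e$ itself is not valid, i.e.\ $w(\delta^-(s(e))) + w(e) > 1+R$; choosing $W\in(0,w(e)]$ smaller than every weight occurring in $\delta(s(e))$ makes $\delta^-_{\tilde E(P_{\le k-1})\cup E_{\ge W}}(s(e)) = \delta^-(s(e))$, so \eqref{maximal-W} reduces to exactly that inequality and hence holds for some positive $W$, forcing $W_0>0$. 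By the remark that $W_0$ equals $w(e)$ or the maximal $W$ satisfying \eqref{maximal-W}, the inequality \eqref{maximal-W} holds at $W=W_0$, giving
\[
  w\bigl(\delta^-_{\tilde E(P_{\le k-1})\cup E_{\ge W_0}}(s(e))\bigr) + w(e) > 1+R .
\]
Thus it suffices to prove the inclusion $\delta^-_{\tilde E(P_{\le k-1})\cup E_{\ge W_0}}(s(e)) \subseteq \delta^-_{\tilde E(P)}(s(e))$, since $\tilde E$ in the Fact is $\tilde E(P)$.

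To establish the inclusion I would take any $f\in\delta^-(s(e))$ with $f\in\tilde E(P_{\le k-1})$ or $w(f)\ge W_0$, and split into two cases. If $f$ is a loop it lies in $\tilde E(P)$ by the initialization rule, so assume otherwise, in which case $t(f)=s(e)$ and the other endpoint $s(f)\neq s(e)$. In the first case, monotonicity of the repelling relation gives $\tilde E(P_{\le k-1})\subseteq\tilde E(P)$, so $f\in\tilde E(P)$. In the second case $f\in\delta(s(e))$ with $w(f)\ge W_0$, so whichever of (uncritical) or (critical) is applied for $P_{\le k}$, the vertex $s(e)$ repels $f$ with respect to $P_{\le k}$, hence (monotonicity again) with respect to $P$; therefore $f$ is repelled by its target $t(f)=s(e)$ with respect to $P$, and by the observation preceding the definition of $z$ — which invokes that no edge can be added to $P$ — every edge repelled by its target belongs to $\tilde E(P)$, so $f\in\tilde E(P)$. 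In either case $f\in\tilde E(P)$ and $f\in\delta^-(s(e))$, i.e.\ $f\in\delta^-_{\tilde E(P)}(s(e))$, which is the required inclusion.

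The delicate point is the bookkeeping between membership in $\tilde E$, which is about an edge being repelled by its \emph{source}, and the rules (uncritical)/(critical), which make $s(e^P_k)$ repel edges \emph{incoming} to it, i.e.\ edges for which $s(e^P_k)$ is the \emph{target}; the bridge is exactly the earlier observation that, under the contradiction hypothesis, an edge repelled by its target must also be repelled by its source. The remaining care is to rule out the degenerate value $W_0=0$ (which is precisely what "$e$ is not valid" buys us, so that \eqref{maximal-W} is available at $W_0$) and to dispatch loops separately via the initialization rule.
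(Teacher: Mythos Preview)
Your proof is correct and is essentially a detailed unpacking of the paper's one-sentence justification (``$e$ is not a valid flip'' combined with ``definition of repelled edges''): you make explicit the inclusion $\delta^-_{\tilde E(P_{\le k-1})\cup E_{\ge W_0}}(s(e)) \subseteq \delta^-_{\tilde E(P)}(s(e))$ that the paper leaves implicit, and you correctly invoke the earlier observation that under the contradiction hypothesis every edge repelled by its target lies in $\tilde E(P)$. One small wording quibble: your opening phrase ``at the moment $e$ was appended to $P$'' suggests you are using the past orientation, but your argument (correctly) uses the current orientation's $W_0$ for $P_{\le k}$; it would be clearer to say ``for the prefix $P_{\le k}$'' and drop the temporal language.
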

This is due to the fact that $e$ is not a valid flip and by definition of repelled edges.
\begin{fact}\label{fact-suff-cond}
  If $v$ is a good vertex and not target of a tiny pending flip, then $y_v \ge z(C) + w(\delta_{\tilde E}^-(v)) - 1 + b_v$.
  In other words, it is sufficient to show that $w(\delta_{\tilde E}^-(v)) + b_v \ge 1$.
\end{fact}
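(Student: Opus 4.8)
The plan is to turn the claimed inequality into an upper bound on $z(C)$ and then verify that bound by a short case distinction. Since $v$ is good we may substitute $y_v = z(\delta^-(v)) + a_v + b_v$, and because $z_e = 0$ for $e\notin\tilde E$ we have $z(\delta^-(v)) = z(\delta^-_{\tilde E}(v))$. Cancelling $b_v$, the statement $y_v \ge z(C) + w(\delta^-_{\tilde E}(v)) - 1 + b_v$ becomes equivalent to
\[
  z(C) \;\le\; \bigl(z(\delta^-_{\tilde E}(v)) - w(\delta^-_{\tilde E}(v))\bigr) + a_v + 1 .
\]
Using $z_e = w(e)$ for big and small edges and $z_e = \beta w(e)$ for tiny edges, the bracketed term equals $(\beta - 1)\,w(\delta^-_{\tilde E\cap T}(v)) \ge 0$, so it suffices to bound $z(C)$ by $(\beta-1)\,w(\delta^-_{\tilde E\cap T}(v)) + a_v + 1$.

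Next I would split on the value of $a_v$, which under the hypothesis (good, not the target of a tiny flip) is $\beta - 1$ if $v$ is critical and $0$ otherwise. In the critical case the right-hand side is at least $\beta$; since $z_e \le \beta\,w(e)$ for every edge (including big ones and edges outside $\tilde E$) and $w(C)\le 1$, we get $z(C) \le \beta\,w(C) \le \beta$, and we are done without looking at $\tilde E$ at all. In the non-critical case we must show $z(C) \le (\beta-1)\,w(\delta^-_{\tilde E\cap T}(v)) + 1$. If $C$ contains a big edge, then since $w(C)\le 1$ this big edge is the only element of $C$ and $z(C)\le 1$; otherwise $C$ consists of small and tiny edges only, and separating out the surplus $(\beta-1)w(e)$ carried by each tiny edge of $C$ that lies in $\tilde E$ one obtains $z(C) \le w(C) + (\beta-1)\,w(T\cap C\cap\tilde E) \le 1 + (\beta-1)\,w(T\cap C\cap\tilde E)$.

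So everything reduces to the containment $T\cap C\cap\tilde E \subseteq \delta^-_{\tilde E\cap T}(v)$: every tiny edge incident to $v$ that lies in $\tilde E$ must be oriented towards $v$, where loops at $v$ count as incoming. This is the step I expect to be the real obstacle. Incoming edges and loops at $v$ are unproblematic; the danger is a non-loop edge $e$ with $s(e)=v$ that is repelled by $v$. I would rule this out by induction over the prefixes $P_{\le k}$: w.r.t. $P_{\le 0}$ a good vertex repels only loops, and a newly repelled edge at $v$ w.r.t. $P_{\le k}$ for $k>0$ must come from the (uncritical) rule at the flip $e^P_k$ with $s(e^P_k)=v$ and threshold $W_0 > 1-R$ — which cannot reach a tiny edge through $E_{\ge W_0}$, and hence forces $e\in\tilde E(P_{\le k-1})$, i.e. $e$ already repelled by $v$ one level down, contradicting the induction hypothesis. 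Here it is essential that $v$ is not critical, so the (critical) rule — which would repel all of $\delta(v)$ — never fires at $v$. With the containment established the inequality of the Fact follows, and the ``in other words'' statement is then immediate, since adding $w(\delta^-_{\tilde E}(v)) + b_v \ge 1$ to $y_v \ge z(C) + w(\delta^-_{\tilde E}(v)) - 1 + b_v$ gives $y_v \ge z(C)$.
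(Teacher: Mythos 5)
Your proposal is correct and follows essentially the same route as the paper: after substituting $y_v=z(\delta^-(v))+a_v+b_v$, you split on critical versus uncritical, handle the critical case via $a_v=\beta-1$ and $z(C)\le\beta w(C)\le\beta$, and in the uncritical case reduce to the fact that tiny edges of $C$ lying in $\tilde E$ must be in $\delta^-(v)$ — exactly the paper's step ``all edges $e\in C$ with $z_e>w(e)$ must be in $\delta^-(v)$''. Your explicit induction over the prefixes $P_{\le k}$ just spells out the justification the paper leaves terse (cf.\ Fact~\ref{repel-cases}), so no substantive difference remains.
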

If $v$ is critical, then $y_v = z(\delta^-(v)) + \beta - 1 + b_v \ge w(\delta_{\tilde E}^-(v)) + z(C) - 1 + b_v$.
If $v$ is uncritical, all edges $e\in C$ with $z_e > w(e)$ must be in $\delta^-(v)$.
Otherwise, the flip $e$ could be added to $P$.
Therefore,
\begin{multline*}
  y_v = z(\delta^-(v)) + a_v + b_v \ge z(\delta_{\tilde E}^-(v)) - w(\delta_{\tilde E}^-(v)) + w(\delta_{\tilde E}^-(v)) + b_v \\
  \ge z(\tilde C) - w(\tilde C) + w(\delta_{\tilde E}^-(v)) + b_v
  \ge z(C) + w(\delta_{\tilde E}^-(v)) - 1 + b_v .
\end{multline*}
\begin{fact}
  \label{repel-cases}
  For every vertex $v$ it holds that (1) $v$ repels no edges, (2) $v$ repels all edges (critical), or (3)
  there exists a threshold $W > 1 - R$ such that $v$ repels edges $e\in\delta^-(v)$ if $w(e) \ge W$ or
  they are also repelled by $s(e)$.

  Furthermore, in (3) we have that $W = \min_{e \in \delta^+_P(v)} w(e)$ or
  $W < \min_{e \in \delta^+_P(v)} w(e)$ and $W\in\{w(e) : e\in\delta^-_{\tilde E}(v)\}$
  (see Fact~\ref{fact-edge-W0}).
\end{fact}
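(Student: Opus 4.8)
The plan is to prove Fact~\ref{repel-cases} directly from the inductive definition of repelled edges in Section~\ref{repelled}. The statement has two parts: a trichotomy on what kind of set a vertex repels, and a structural description of the threshold $W$ in case (3). I would organize the argument around a single pending list $P$ of length $\ell$ and a fixed vertex $v$, tracking how the set of edges repelled by $v$ evolves as $k$ runs from $0$ to $\ell$.

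First I would set up the induction. At $k=0$ (initialization), $v$ repels all of $\delta(v)$ if $v$ is bad (this falls under case (2)), and otherwise $v$ repels nothing except loops at $v$; since loops $e$ with $r(e)=\{v\}$ satisfy $s(e)=v$, they are automatically ``also repelled by $s(e)$'', so a good $v$ that is not yet $s(e^P_k)$ for any $k$ sits in case (1) or a degenerate instance of case (3). The monotonicity rule only adds edges, so the only way the repelled set of $v$ grows beyond loops is when $v = s(e^P_k)$ for some $k\ge 1$ and one of the rules (uncritical) or (critical) fires. If (critical) ever fires for $v$, then from that point on $v$ repels all of $\delta(v)$ and we are permanently in case (2). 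If only (uncritical) ever fires, then each such firing adds $\delta_{\tilde E\cup E_{\ge W_0}}(s(e^P_k))$ for the corresponding $W_0 > 1-R$; taking $W$ to be the \emph{minimum} of these $W_0$ values over all $k$ with $v = s(e^P_k)$ and (uncritical) active (and $W = \infty$, i.e.\ case (1), if there are none), the accumulated repelled set of $v$ is exactly $\{e \in \delta(v) : w(e)\ge W\} \cup \{e\in\delta^-_{\tilde E}(v)\} \cup \{\text{loops at }v\}$. The loops and the $\delta^-_{\tilde E}(v)$ part are subsumed by the ``also repelled by $s(e)$'' clause in case (3) (loops because $s(e)=v$; $\tilde E$-edges by definition of $\tilde E$), and $W > 1-R$ holds since each $W_0 > 1-R$. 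That establishes the trichotomy, with the caveat (noted in the paper's text preceding the fact) that if $W_0$ happens to already cover $\delta^-(v)$ the vertex is classified as uncritical, i.e.\ case (3).

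For the refinement of $W$ in case (3), I would invoke the analysis of $W_0$ already recorded in the excerpt. For the particular $k$ achieving the minimum $W = W_0$, Fact~\ref{fact-edge-W0} says that either $W_0 = w(e^P_k)$ or there is an incoming edge of weight exactly $W_0$ at $v=s(e^P_k)$ that is not repelled by its far endpoint, hence lies in $\delta^-_{\tilde E}(v)$; the latter gives $W \in \{w(e) : e\in\delta^-_{\tilde E}(v)\}$. In the former case $W_0 = w(e^P_k)$ with $e^P_k \in \delta^+_P(v)$ (its target is $s(e^P_k)=v$), so $W \ge \min_{e\in\delta^+_P(v)} w(e)$; combined with the fact that for \emph{every} flip $e^P_j$ with $s(e^P_j)=v$ we have $W_0^{(j)} \le w(e^P_j)$, and that $W$ is the minimum over such $W_0^{(j)}$, one gets $W = \min_{e\in\delta^+_P(v)} w(e)$ or $W < \min_{e\in\delta^+_P(v)} w(e)$; in the strict case only the second alternative of Fact~\ref{fact-edge-W0} can be responsible, forcing $W\in\{w(e):e\in\delta^-_{\tilde E}(v)\}$. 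Assembling these cases yields precisely the stated dichotomy on $W$.

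The main obstacle I anticipate is bookkeeping rather than any deep idea: one must be careful that the ``minimum over all relevant $W_0$'' is taken over exactly the right index set, that a vertex which is the target of a tiny flip is correctly pushed into case (2) (since then $W_0 \le w(e^P_k) \le 1-R$ triggers (critical)), and that the degenerate overlaps — a good vertex that never appears as some $s(e^P_k)$, or an uncritical firing that incidentally covers all incoming edges — are slotted into the trichotomy consistently with the paper's conventions. I would also double-check that monotonicity does not let an \emph{earlier} firing with a larger $W_0$ undo the threshold established by a later firing with smaller $W_0$; since repelled sets only grow, the effective threshold is the minimum, as claimed, and no edge is ever un-repelled.
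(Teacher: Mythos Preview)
The paper does not give a separate proof of Fact~\ref{repel-cases}; it is stated as a direct consequence of the inductive definition of repelled edges, with the parenthetical ``(see Fact~\ref{fact-edge-W0})'' signalling that the refinement of $W$ in case~(3) is read off from Fact~\ref{fact-edge-W0}. Your plan---walking $k$ from $0$ to $\ell$, noting that (critical) ever firing puts $v$ permanently in case~(2), and otherwise taking $W$ to be the minimum of the $W_0$ values arising from (uncritical)---is exactly the intended unpacking, and your derivation of the dichotomy on $W$ from $W_0 \le w(e^P_k)$ together with Fact~\ref{fact-edge-W0} is correct in outline.

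There is one genuine slip in the last step. You write that Fact~\ref{fact-edge-W0} yields an incoming edge of weight $W_0$ ``that is not repelled by its far endpoint, hence lies in $\delta^-_{\tilde E}(v)$.'' This inference is backward: for $e\in\delta^-(v)$ the far endpoint is $s(e)$, and membership in $\tilde E$ means being repelled \emph{by} $s(e)$; Fact~\ref{fact-edge-W0} asserts exactly the opposite, namely that $s(e)$ does \emph{not} repel $e$ w.r.t.\ $P_{\le k}$, so $e\notin\tilde E(P_{\le k})$. The correct route to $e\in\tilde E=\tilde E(P)$ uses the ambient hypothesis of Claim~\ref{simple-feasible} that the algorithm is stuck: since $w(e)=W_0$, rule (uncritical) makes $v=t(e)$ repel $e$ w.r.t.\ $P_{\le k}$ and hence w.r.t.\ $P$; and in the stuck state any edge repelled by its target must also be repelled by its source (otherwise it could be appended to $P$), so $e\in\tilde E$. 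Without invoking the stuck assumption the conclusion $W\in\{w(e):e\in\delta^-_{\tilde E}(v)\}$ does not follow.
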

\paragraph*{Case 1: $v$ is a bad vertex.}
If $|\delta_F^-(v)| \ge 2$,
\begin{equation*}
  y_v \ge z(\delta^-(v)) - |\delta_F^-(v)| \cdot (1 - R) - \mu \ge
  |\delta_F^-(v)| \cdot (1 - (1 - R)) - \mu \ge 2 R - \mu \ge \beta \ge z(C) .
\end{equation*}
Otherwise, $|\delta_F^-(v)| \le 1$ and therefore
\begin{equation*}
  y_v \ge z(\delta^-(v)) - 1 + R - \mu \ge w(\delta^-(v)) - 1 + R - \mu
  > 1 + R - 1 + R - \mu = 2 R - \mu \ge \beta \ge z(C) .
\end{equation*}
Here we use that $w(\delta^-(v)) > 1 + R$ by the definition of a bad vertex.
Assume for the remainder that $v$ is a good vertex, in particular that $|\delta_F^-(v)| \le 1$.
\paragraph*{Case 2: $v$ is good and target of a tiny flip.}
Using Fact~\ref{fact-flip-load} and $|\delta_F^-(v)| \le 1$ we obtain
\begin{multline*}
  y_v \ge z(\delta^-(v)) - (\beta - 1) - (1 - R)
  \ge w(\delta_{\tilde E}^-(v)) - (\beta - 1) - (1 - R) \\
  \ge 1 + R - (1 - R) + R - \beta
  = \beta + \underbrace{3R - 2\beta}_{\ge 0} \ge z(C)
\end{multline*}

\paragraph*{Case 3: $v$ is good, not target of a tiny flip, but target of a small flip.}
If $|\delta_{F}^-(v)| = 0$, again with Fact~\ref{fact-flip-load} we get
\begin{equation*}
  w(\delta_{\tilde E}^-(v)) + b_v
  \ge 1 + R - 0.5 + 0 > 1 .
\end{equation*}
This suffices because of Fact~\ref{fact-suff-cond}.
Moreover, if $|\delta_{F}^-(v)| = 1$ and $\delta_{\tilde E}^-(v)$ contains a small edge,
it holds that
\begin{equation*}
  w(\delta_{\tilde E}^-(v)) + b_v \ge w(\delta_F^-(v)) + (1 - R) + b_v \ge 1 + (1 - R) - (1 - R) = 1 .
\end{equation*}
Here we use that the small edge must have a size of at least $1 - R$.
The case that remains is where $\delta_{\tilde E}^-(v)$ contains one edge from $F$ and tiny edges.
We let $s$ denote the size of the smallest edge with a pending flip towards $v$.
If $v$ is critical,
\begin{multline*}
  y_v \ge z(\delta^-(v)) - (1 - R) + (\beta - 1) \ge 1 + \beta (R - 0.5) - (1 - R) + (\beta - 1) \\
  = \beta + \beta (R - 0.5) + R - 1 \ge z(C) .
\end{multline*}
Notice that $\beta (R - 0.5) \ge 1 - R$ by choice of $R$ and $\beta$.
Assume for the remainder of this case that $v$ is uncritical.
If $C$ contains a big edge, then this must be the only element in $C$. Therefore, 
\begin{equation*}
  y_v \ge z(\delta^-(v)) - (1 - R) \ge 1 + \beta (R - 0.5) - (1 - R)
  \ge 1 = z(C) .
\end{equation*}
Consider the case where $C\cap \tilde E$ contains at most one small edge and no big edge.
Since $v$ is uncritical, all tiny edges in $C$ with positive $z$ value must also be in $\delta^-(v)$.
Thus,
\begin{equation*}
  y_v \ge z(\delta^-(v)) - (1 - R) \ge 1 + z(\delta_T^-(v)) - (1 - R)
  \ge 0.5 + z(\delta_T^-(v)) \ge z(C) .
\end{equation*}
In the equation above, $T$ is used for the tiny edges.
Finally, if $C\cap \tilde E$ contains $k\in \{2, 3\}$ small edges, then these edges must be of size at least $s$.
Therefore,
\begin{multline*}
  y_v \ge z(\delta^-(v)) - (1 - R) \ge z(\delta_F^-(v)) + \beta (w(\delta_{\tilde E}^-(v)) - w(\delta_F^-(v))) - (1 - R) \\
  = R + \beta (w(\delta_{\tilde E}^-(v)) - 1) \ge R + \beta (R - s) .
\end{multline*}
Note that $s \le 1 / k$ and, by choice of $\beta$, $k - \beta (k - 1) \ge 0$. It follows that
\begin{multline*}
  z(C) \le k \cdot s + \beta (1 - k \cdot s) \le y_v + k\cdot s - R + \beta (1 - R - (k - 1)s) \\
  \le y_v + k\cdot \frac 1 k - R + \beta \left(1 - R - \frac {k - 1} k \right) \\
  = y_v + 1 - R + \beta \left(\frac 1 k - R \right) \le y_v + 1 - R + \beta (0.5 - R) \le y_v .
\end{multline*}

\paragraph*{Case 4: $v$ is good and not target of a small/tiny flip.}
If $|\delta_{F}^+(v)| = 0$, then $v$ does not repel any edges. In particular, $\delta_F^-(v) = \emptyset$ 
and every $e\in\delta(v)$ with $z_e > 0$ must be in $\delta^-(v)$. Therefore,
$z(C) \le z(\delta^-(v)) \le y_v$.
We assume in the remainder that $|\delta_{F}^+(v)| = 1$.

If $|\delta_{F}^-(v)| = 1$, we get
$w(\delta_{\tilde E}^-(v)) + b_v \ge 1 + 0$.
Otherwise, it must hold that $|\delta_F^-(v)| = 0$ and
$w(\delta_{\tilde E}^-(v)) + b_v \ge R + (1 - R) = 1$.
\end{proof}

\section{Graph Balancing in the general case}
We can still assume w.l.o.g. that $\OPT^* = 1$ and therefore $w(e) \in (0, 1]$ for every $e\in E$.
If this does not hold, we can simply scale every edge weight by $1/\OPT^*$.
We extend the definition of tiny, small, and big edges, but this time
we set the threshold for tiny edges slightly higher than in the simple variant:
\begin{definition}[Tiny, small, big edges]{\rm
  We call an edge $e$ tiny, if $w(e) \le 1/3$; small, if $1/3 < w(e) \le 1/2$;
  and big, if $w(e) > 1/2$.
  We will write for the tiny, small, and big edges $T\subseteq E, S\subseteq E$, and $B\subseteq E$,
  respectively.}
\end{definition}
At first glance one might think that the algorithm and analysis
for the simple case easily extends to the general case.
However, there are serious issues to resolve.
We will start by giving an informal overview of the issues
arising from different big edge sizes.
Suppose we leave the algorithm as it is and try to get a contradiction via the dual of the configuration LP.
A straight-forward choice of the edge variables would be
\begin{equation*}
  z_e = \begin{cases}
    w(e) &\text{ if $e$ is big or small and} \\
    \beta w(e) &\text{ if $e$ is tiny.}
  \end{cases}
\end{equation*}
This immediately fails: Consider some path of edges that have different big sizes, see e.g. Figure~\ref{path}. With $y_{v_3} = z(\delta^-(v_3)) = 0.85$ the configuration consisting only of the $0.9$-edge
would be too large for $v_3$.
\begin{figure}
  \centering
  \begin{tikzpicture}
    \node[draw, circle](v1) at (3, 5) {$v_1$};
    \node[draw, circle](v2) at (5, 5) {$v_2$};
    \node[draw, circle](v3) at (7, 5) {$v_3$};
    \node[draw, circle](v4) at (9, 5) {$v_4$};
    \draw[thick, ->] (v1) -- (v2) node[pos=0.5, above]{1};
    \draw[thick, ->] (v2) -- (v3) node[pos=0.5, above]{0.85};
    \draw[thick, ->] (v3) -- (v4) node[pos=0.5, above]{0.9};
    \draw[thick] (v1) -- (2, 5.5);
    \draw[thick] (v1) -- (2, 5);
    \draw[thick, <-] (v1) -- (2, 4.5);
    \draw[thick, <-] (v4) -- (10, 5.5);
    \draw[thick] (v4) -- (10, 5);
    \draw[thick, <-] (v4) -- (10, 4.5);
  \end{tikzpicture}
  \caption{Path of different big edge weights}
  \label{path}
\end{figure}
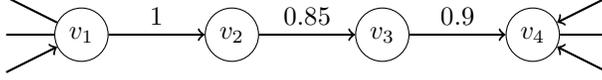
Within such a path it seems that all big edges must have the same $z$-value.
Also, it is easy to see that the analysis in the simple case breaks when choosing
a value significantly smaller than $1$.
Hence, another sensible choice of edge variables would be
\begin{equation*}
  z_e = \begin{cases}
    1 &\text{ if $e$ is big,} \\
    w(e) &\text{ if $e$ is small, and} \\
    \beta w(e) &\text{ if $e$ is tiny} .
  \end{cases}
\end{equation*}
Unfortunately, this also fails.
Consider a vertex $v$ that has two incoming big edges $e_B$ and $e_{B'}$, $e_B$ with weight $w(e_B) = 0.5 + \epsilon$ and $e_{B'}$ with weight $w(e_{B'}) = 1$.
Suppose this vertex is target of a small edge flip $e_S$ of size $w(e_S) = 0.5 - \epsilon$.
Then there is a configuration of value $1.5 - \epsilon$ consisting of $e_B$ and $e_S$.
This is too big for $v$, since $y_v = z(\delta^-(v)) - 2 (1 - R) = 2R$.
It seems like we need some trade-off between the cases. When a situation occurs where
a big edge is compatible with a small edge, we would like to fall back to the first choice
of the variables (at least for this particular edge).
In the upcoming proof we introduce an edge set $Q$ that corresponds to these problematic
big edges. The precise choice of the variables is highly non-trivial and in order to make
it work, we also construct a more sophisticated algorithm.

As before the algorithm maintains a list $P$ of pending flips $P_1,\dotsc,P_\ell$.
Unlike before, however, This list contains two different types of
flips. A pending flip $P_k$ may be regular or raw. We write
$P_k = (e^P_k, \text{\sc reg})$ and
$P_k = (e^P_k, \text{\sc raw})$, respectively.
In the simple case, each edge could only appear once in $P$. 
This is slightly relaxed here. An edge $e$
may appear once as a raw pending flip $(e, \text{\sc raw})$ and later again as a regular
pending flip $(e, \text{\sc reg})$.
Regular pending flips behave similarly to the pending flips in the simple variant.
Raw pending flips are used in some cases when we want to repel only big edges on the target machine,
although by the behavior of regular flips non-big edges could also be repelled.
As indicated in the example above, it is sometimes very helpful, if a big edge cannot be
combined with non-big edges.
Indeed, when we can avoid that a vertex repels small edges, then this means we do not have to worry
about such combinations.
\subsection{Repelled edges}
The repelled edges are again defined inductively for $P_{\le 1}, P_{\le 2},\dotsc$
Consider the list of $\ell$ pending flips $P = P_{\le\ell}$ and
some $k\le \ell$.
\begin{description}
  \item[(initialization)] If $k=0$, let every bad vertex $v$ repel every edge $e\in\delta(v)$ w.r.t. $P_{\le k} = P_{\le 0}$.
    Furthermore, let every vertex $v$ repel every loop from $\delta(v)$, i.e., every $e\in E$ with $\{v\} = r(e)$.
  \item[(monotonicity)] If $k>0$ and $v$ repels $e$ w.r.t. $P_{\le k - 1}$,
    then let $v$ also repel $e$ w.r.t. $P_{\le k}$.
\end{description}
The remaining rules regard $k > 0$ and $e^P_k$,
the last pending flip in $P_{\le k}$, and we distinguish between raw and regular pending flips.
\begin{description}
  \item[(raw)] If $P_k = (e^P_k, \text{\sc raw})$, then let $s(e^P_k)$ repel all big edges in $\delta(v)$
  and edges in $\delta(v)$ of weight at least $w(e^P_k)$.
\end{description}
From here on we will consider the case where $P_k = (e^P_k, \text{\sc reg})$.
Let $\tilde E = \tilde E(P_{\le k-1})$ denote all edges $e$ that are repelled by $s(e)$
w.r.t. $P_{\le k - 1}$.
Furthermore, we define $E_{\ge W} = \{e\in E : w(e) \ge W\}$.
Let $W_0$ denote the maximal $W \in (0, w(e^P_k)]$ with
\begin{equation}
  w(\delta^-_{\tilde E\cup E_{\ge W}}(s(e^P_k))) + w(e^P_k) > 1 + R . \label{gen-maximal-W}
\end{equation}
For technical reasons we define $W_0 = 0$ if no such $W$ exists (this is an uninteresting
corner case where $w(\delta^-(s(e^P_k))) + w(e^P_k) \le 1 + R$).
\begin{description}
\item[(uncritical)] If $P_k = (e^P_k, \text{\sc reg})$ and $W_0 > 1/3$,
  then let $s(e^P_k)$ repel $\delta_{\tilde E\cup E_{\ge W_0}}(s(e^P_k))$.
\item[(critical)] If $P_k = (e^P_k, \text{\sc reg})$ and $W_0 \le 1/3$,
  then let $s(e^P_k)$ repel all edges in $\delta(s(e^P_k))$.
\end{description}
Next we will derive similar facts as we did for the simple case before.
\begin{fact}\label{gen-repel-self}
  Let $e^P_k$ be a pending flip. Then $s(e^P_k)$
  repels $e^P_k$ w.r.t. $P_{\le k}$.
\end{fact}
If $P_k = (e^P_k, \text{\sc raw})$, this is by rule (raw).
Otherwise, it follows from $W_0\le w(e^P_k)$.
\begin{fact}\label{not-repelled-B}
  Let $e\notin P_{\le k}$ be an edge that is not repelled by any vertex
  w.r.t. $P_{\le k-1}$, and possibly by $t(e)$ (but not $s(e)$) w.r.t. $P_{\le k}$.
  If the orientation of $e$ changes and this does not affect the sets of good and bad vertices,
  the edges repelled  by some vertex w.r.t. $P_{\le k}$ will still be repelled after the change.
\end{fact}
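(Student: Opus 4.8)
The plan is to adapt the proof of Fact~\ref{fact-W0-increase} almost verbatim; the only genuinely new ingredient is the raw pending flips, and these turn out to cause no trouble. First I would extract what the hypotheses give. Since $e$ is not repelled by any vertex w.r.t.\ $P_{\le k-1}$, the edge $e$ is not a loop (every vertex repels its own loops by rule (initialization)) and $e$ is not one of the flips $e^P_1,\dots,e^P_k$ (every flip is repelled by its own source, Fact~\ref{gen-repel-self}); write $v,v'$ for the two distinct endpoints of $e$, where $v=t(e)$ and $v'=s(e)$ for the orientation \emph{before} the change (so afterwards $e$ points towards $v'$). By monotonicity the hypotheses also give: $e$ is repelled by neither $v$ nor $v'$ w.r.t.\ $P_{\le k'}$ for every $k'\le k-1$, and $e$ is not repelled by $v'$ w.r.t.\ $P_{\le k}$.

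I would then prove by induction on $k'=0,1,\dots,k-1$ that the ``repels'' relation w.r.t.\ $P_{\le k'}$ is \emph{unchanged} by flipping $e$. The base case is immediate, since the good/bad vertices are unchanged and loops are orientation independent. In the inductive step, the induction hypothesis makes $\tilde E(P_{\le k'-1})$ unchanged ($e$ itself is never in $\tilde E$), and $E_{\ge W}$ is a purely weight-based set; so if $P_{k'}$ is raw there is nothing to check, because rule (raw) repels a set of incident edges of $s(e^P_{k'})$ that does not depend on the orientation. If $P_{k'}$ is regular, the only quantity that might move is the threshold $W_0$, and it does not: either $e$ is not incident to $s(e^P_{k'})$, or $s(e^P_{k'})\in\{v,v'\}$ and then, because $s(e^P_{k'})$ repels \emph{all} its incident edges of weight $\ge W_0$ in both the (uncritical) and (critical) cases (these rules use $\delta$, not $\delta^-$) while $e$ is repelled by neither $v$ nor $v'$ w.r.t.\ $P_{\le k'}$, we must have $w(e)<W_0$; hence $e\notin\tilde E\cup E_{\ge W}$ for every $W\ge W_0$, so the left side of (\ref{gen-maximal-W}) is unaffected for all such $W$ and $W_0$ stays put.

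The last step repeats this argument at level $k$, where we only need that the relation does not \emph{shrink}, i.e.\ that $W_0$ does not increase. Raw $P_k$ is again trivial, and for regular $P_k$ the sets $\tilde E=\tilde E(P_{\le k-1})$ and $E_{\ge W}$ are unchanged, so there are exactly three cases: $e$ not incident to $s(e^P_k)$ (then $W_0$ is unchanged); $e$ pointing towards $s(e^P_k)$ before the change (afterwards it leaves $\delta^-(s(e^P_k))$, so each side of (\ref{gen-maximal-W}) can only drop, $W_0$ can only decrease, and the repelled set of $s(e^P_k)$ may actually grow); and $e$ pointing towards $s(e^P_k)$ only after the change (then $s(e^P_k)=v'$, and since $e$ is not repelled by $v'$ w.r.t.\ $P_{\le k}$ while $s(e^P_k)$ repels all its incident edges of weight $\ge W_0$, again $w(e)<W_0$, so $e$ does not enter $\tilde E\cup E_{\ge W}$ for $W\ge W_0$ and $W_0$ is unchanged). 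In all cases $W_0$ does not increase, so --- switching to rule (critical) if $W_0$ has dropped below $1/3$ --- every edge repelled w.r.t.\ $P_{\le k}$ before the change is still repelled afterwards. The main obstacle is purely organizational: juggling these three positions of $e$ relative to $s(e^P_k)$ together with the raw/regular split at each level; once one notices that rule (raw) and the sets $\tilde E$ and $E_{\ge W}$ are all orientation independent, nothing conceptually new is needed beyond the proof of Fact~\ref{fact-W0-increase}.
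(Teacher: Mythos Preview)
Your proposal is correct and follows essentially the same approach as the paper's own proof: induction on $k'\le k-1$ showing the repelled relation is unchanged (with the raw case trivially orientation-independent and the regular case handled via $w(e)<W_0$), followed by the observation that at level $k$ the threshold $W_0$ can only decrease. Your presentation is somewhat more explicit in separating the three positions of $e$ relative to $s(e^P_k)$ at the final step, but the argument is otherwise identical.
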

The proof is very similar to the one in the simple case.
\begin{proof}
We first argue that the repelled edges w.r.t. $P_{\le k'}$, $k' = 0, \dotsc, k-1$ have not changed.
This argument is by induction.
Since the good and bad vertices do not change, the
edges repelled w.r.t. $P_{\le 0}$ do not change.
Let $k'\in\{1,\dotsc, k-1\}$ and
assume that the edges repelled w.r.t. $k'-1$ have
not changed. If $P_{k'} = (e^P_{k'}, \text{\sc raw})$, it is trivial that
repelled edges w.r.t. $P_{\le k'}$ have not changed.
Hence, assume that $P_{k'} = (e^P_{k'}, \text{\sc reg})$.
Let $W_0$ be as in the definition of repelled edges w.r.t. $P_{\le k'}$ before the change.
We have to understand that $e$ is not and was not in $\delta^-_{\tilde E\cup E_{\ge W_0}} (s(e^P_{k'}))$
(with $\tilde E = \tilde E(P_{\le k'-1}$)).
This means that flipping it does not affect the choice of $W_0$ and, in particular, not the repelled edges.
Let $v$ and $v'$ denote the vertex $e$ is oriented towards before the flip
and after the flip, respectively. Since $e$ was not repelled by $v$ and $v'$ w.r.t. $P_{\le k'}$,
it holds that $v,v' \neq s(e^P_k)$ or $w(e) < W_0$:
$s(e^P_k)$ repelled all edges greater or equal $W_0$ in both the case (uncritial) and (critical),
but $e$ was not repelled by $v$ or $v'$.

Therefore $e\notin \delta^-_{E_{\ge W_0}}(s(e^P_{k'}))$
before and after the change. Moreover, since
$e$ was not repelled by any vertex w.r.t. $P_{\le k'-1}$
(and by induction hypothesis, it still is not), it follows that
$e\notin \tilde E(P_{\le k'-1})$.
By this induction we have that edges repelled w.r.t. $P_{\le k-1}$ have not changed
and by the same argument as before,
$e$ is not in $\delta^-_{\tilde E\cup E_{\ge W_0}} (s(e^P_k))$ after the change. This
means $W_0$ has not increased and edges repelled w.r.t. $P_{\le k}$ are still repelled.
It could be that $W_0$ decreases, if $e$ was in $\delta^-_{\tilde E\cup E_{\ge W_0}} (s(e^P_k))$
before the flip. This would mean that the number of edges repelled
by $s(e^P_k)$ increases.
\end{proof}
We note that $W_0$ (in the definition of $P_{\le k}$) is either equal to $w(e^P_k)$
or it is the maximal value for which (\ref{gen-maximal-W}) holds.
Furthermore,
\begin{fact}\label{fact-gen-edge-W0}
  Let $P_k = (e^P_k, \text{\sc reg})$ be a regular pending flip and
  let $W_0$ be as in the definition of repelled edges w.r.t. $P_{\le k}$.
  If $W_0 < w(e^P_k)$, then there is an edge of weight exactly $W_0$ in $\delta^-(s(e^P_k))$. Furthermore, it is not a loop and it is not repelled by its other vertex, i.e., not $s(e^P_k)$, w.r.t. $P_{\le k}$.
\end{fact}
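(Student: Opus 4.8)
The plan is to mirror the argument for Fact~\ref{fact-edge-W0} in the simple case, since for a regular pending flip $P_k = (e^P_k, \textsc{reg})$ the quantity $W_0$ is defined exactly as before, and the only change (the threshold $1/3$ in place of $1-R$) plays no role here. First I would reduce the statement to one about $\delta^-(s(e^P_k))$ and $\tilde E = \tilde E(P_{\le k-1})$: passing from $P_{\le k-1}$ to $P_{\le k}$ only adds repulsions by the single vertex $s(e^P_k)$, so the set $\tilde E$ and the incoming edges at $s(e^P_k)$ are precisely the objects over which $W_0$ is maximized in~(\ref{gen-maximal-W}), and it suffices to prove the claim with ``w.r.t. $P_{\le k-1}$'' and then transfer it.

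The core is a contradiction argument on the maximality of $W_0$. Assume $W_0 < w(e^P_k)$ but that no edge of weight exactly $W_0$ lies in $\delta^-(s(e^P_k)) \setminus \tilde E$. Since $E$ is finite, I can choose $\epsilon > 0$ with $W_0 + \epsilon \le w(e^P_k)$ such that no incoming edge outside $\tilde E$ has weight in the interval $(W_0, W_0+\epsilon)$; combined with the assumption, this forces $\delta^-_{\tilde E\cup E_{\ge W_0+\epsilon}}(s(e^P_k)) = \delta^-_{\tilde E\cup E_{\ge W_0}}(s(e^P_k))$. Then~(\ref{gen-maximal-W}) also holds for $W_0 + \epsilon$, contradicting the choice of $W_0$. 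Hence there is a witnessing edge $e$ of weight exactly $W_0$ in $\delta^-(s(e^P_k))$, and it is not in $\tilde E$.

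It then remains to read off the two extra properties from $e \notin \tilde E$. By the initialization rule every vertex repels every incident loop, so any loop at $s(e^P_k)$ lies in $\tilde E$; therefore $e$ is not a loop, its endpoints are distinct, $t(e) = s(e^P_k)$, and the other endpoint is $s(e) \neq s(e^P_k)$. If $e$ were repelled by $s(e)$ w.r.t. $P_{\le k-1}$, it would be in $\tilde E(P_{\le k-1}) = \tilde E$, so it is not; and since the step $P_{\le k-1} \to P_{\le k}$ introduces only repulsions emanating from $s(e^P_k)$, which differs from $s(e)$, the edge $e$ is still not repelled by $s(e)$ w.r.t. $P_{\le k}$, as claimed.

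I do not expect a genuine obstacle: the only point needing a moment's care is the very last one — that the repulsions added at step $k$ cannot make $e$ repelled by its far endpoint — and this follows purely because all those repulsions are by $s(e^P_k)$, while $e$'s far endpoint is $s(e) \neq s(e^P_k)$ (as $e$ is not a loop). Consequently I would present the proof compactly, essentially transcribing the simple-case argument with $1/3$ in place of $1-R$.
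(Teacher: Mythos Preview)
Your proposal is correct and follows essentially the same approach as the paper's proof: reduce to $P_{\le k-1}$ (since step~$k$ adds only repulsions by $s(e^P_k)$), derive a contradiction to the maximality of $W_0$ from the absence of a weight-$W_0$ incoming edge outside $\tilde E$, and then read off the loop and repulsion properties from $e\notin\tilde E$. If anything, you are slightly more explicit than the paper in verifying that the added repulsions at step~$k$ cannot affect the far endpoint $s(e)\neq s(e^P_k)$.
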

\begin{proof}
We prove this for $P_{\le k-1}$. Since the change from $P_{\le k-1}$ to $P_{\le k}$ only
affects $s(e^P_k)$, this suffices.
All edges that are repelled by their other vertex w.r.t. $P_{\le k-1}$ (in particular, loops) are in $\tilde E$.
Recall that $w(\delta^-_{\tilde E\cup E_{\ge W_0}}(s(e^P_k)) + w(e^P_k) > 1 + R$.
Assume toward contradiction there is no edge of weight
$W_0$ in $\delta^-(s(e^P_k))$, which is not in $\tilde E$.
This means there is some $\epsilon > 0$ such that
$\delta^-_{\tilde E\cup E_{\ge W_0 + \epsilon}}(s(e^P_k)) = \delta^-_{\tilde E\cup E_{\ge W_0}}(s(e^P_k))$.
Hence, $W_0$ is not maximal.
\end{proof}
\subsection{Algorithm}
\begin{algorithm}
\SetKwInOut{KwInput}{Input}
\SetKwFor{Loop}{loop}{}{end}
\SetKw{Break}{break}
\SetKwFunction{IsRegularAddable}{IsRegularAddable}
\SetKwFunction{IsRawAddable}{IsRawAddable}
\KwInput{Weighted multigraph $G = (V, E, r, w)$ with $\OPT^* = 1$}
\KwResult{Orientation $s, t : E\rightarrow V$ with maximum weighted in-degree $1 + R$}
let $s, t: E \rightarrow V$ map source and target vertices to each edge,
i.e., $\{s(e), t(e)\} = r(e)$ for all $e\in E$,
such that there are at most two big edges oriented towards each vertex \;
$\ell \gets 0$ ; \tcp{number of pending edges $P$ to flip}
\While{there exists a bad vertex}{
  \eIf{there exists a valid regular flip $(e,\text{\sc reg}) \in P$}{
    let $k$ be minimal such that $e$ is repelled by $t(e)$ w.r.t. $P_{\le k}$ \;
    exchange $s(e)$ and $t(e)$ \;
    $P\gets P_{\le k}$ ; $\ell \gets k$ ; \tcp{Forget pending flips $P_{k+1},\dotsc,P_{\ell}$}
    $Q_\ell \gets \emptyset$ \;
  }{
  \For{$e\in E$ in non-decreasing order of $w(e)$}{
  \uIf{\IsRawAddable{$e, P_{\le\ell}, Q_{\le\ell}$}} {
      $P_{\ell+1} \gets (e, \text{\sc raw})$ ;
      $Q_{\ell+1} \gets \emptyset$ ;
      $\ell \gets \ell + 1$ ; \tcp{Append $e$ to $P$}
      \Break \;
  }
  \uElseIf{\IsRegularAddable{$e, P_{\le\ell}, Q_{\le\ell}$}} {
      $P_{\ell+1} \gets (e, \text{\sc reg})$ ; 
      $Q_{\ell+1} \gets \emptyset$ ;
      $\ell \gets \ell + 1$ ; \tcp{Append $e$ to $P$}
      \Break \;
  }
  }
  }
  \Loop{}{
    \uIf{there exists an $(e,\text{\sc raw})\in P$ with $e\notin Q_{\le\ell}$ and $1/2 < w(e)\le 0.6$
            such that $t(e)$ repels an outgoing, non-loop edge $e'$
            with $w(e) + w(e') \le 1$}{
          \tcp{$e$ is not too big and interferes with other edges}
          $Q_\ell \gets Q_\ell\cup\{e\}$ \;
    }\uElse{
          \Break\;
    }
  }
}
\SetKwProg{KwFunction}{Function}{}{}
\KwFunction{\IsRawAddable{$e, P, Q$}}{
  \Return $\begin{cases} \mathrm{True} &\text{if $e$ is repelled by $t(e)$ and not repelled by $s(e)$ w.r.t. $P$,} \\ \mathrm{False} &\text{otherwise \; } \end{cases}$
}
\KwFunction{\IsRegularAddable{$e, P, Q$}}{
  \lIf{$(e,\text{\sc reg})\in P$ or $(e,\text{\sc raw})\notin P$}{\Return False}
  \uIf{$e$ is tiny}{
    \Return True \;
  }
  \uElseIf{$e$ is small}{
    \Return $\begin{cases}
         \mathrm{True} &\text{if $|\delta^-_B(s(e))| \le 1$,} \\
         \mathrm{True} &\text{if $w(e) + w(e_B)\le 1$ for some $e_B\in\delta^-_B(s(e))$,} \\
         \mathrm{False} &\text{otherwise ;}
      \end{cases}$
  }
  \uElseIf{$e$ is big}{
    Let $F$ denote all big edges $e\in P$ such that there is no smaller flip $e'\in P$
    towards $s(e)$, i.e., $w(e') < w(e)$ and $s(e') = s(e)$ \;
    \tcp{Maintain invariant of $|\delta^-_B(s(e))| \le 2$}
    \lIf{$|\delta^-_B(s(e))| = 2$}{\Return False}
    \tcp{From here on $|\delta^-_B(s(e))| \le 1$}
    \Return 
        $\begin{cases} 
           \mathrm{True} &\text{if $e\in Q$,} \\
           \mathrm{True} &\text{if $w(\delta^-_B(s(e)) \le R$, $\delta^-_B(s(e))\subseteq \tilde E$, and $\delta^-_F(s(e)) \subseteq Q$,} \\
           \mathrm{False} &\text{otherwise ;}
        \end{cases}$
  }
}
\caption{Local search algorithm for general {\sc Graph Balancing}}
\label{alg-general}
\end{algorithm}
The general structure of the algorithm (see Alg.~\ref{alg-general})
is the same as in the simplified case.
The difference are the conditions on when an regular pending flips can be added to $P$ and
the role of the set $Q$. 
Raw pending flips are added to $P$ by the same rule as before:
They must be repelled by the vertex they are currently oriented towards and
must not be repelled by the other.
The interesting novelty is when to add regular pending flips.
Tiny edges will be added as soon as possible and there is no further condition.
Small edges are added when either there is only one (or no) big edge oriented towards
the target vertex or when the small edge fits into a configuration with
one of the big edges on the vertex.
Finally, big edges are added as regular flips only when there are less than
$2$ big edges on the target vertex. Moreover,
either the edge must be in $Q$ or none of the following holds:
\begin{enumerate}
  \item The big edge on the target vertex is bigger than $R$;
  \item The big edge on the target vertex is not in $\tilde E$ (which means it can be added
as a pending flip later);
  \item The big edge on the target vertex is not in $Q$ and is the smallest pending flip
towards its other vertex.
\end{enumerate}
For intuition, these $3$ cases should identify paths of big edges as described in the beginning
of the section, i.e., these edges will each get a $z$-value of $1$.
Inside such paths we would like the big edges to be only raw pending flips.
Case $2$ is only a temporary state. It is not clear, yet, how the big edge on the
target vertex behaves. Case $1$ says that when the next edge is bigger than $R$,
then this should be a path of big edges and we do not want to add the regular pending flip.
Case $3$ deals with big edges smaller than $R$.
This has to do with the edge set $Q$.
Let us discuss the idea behind $Q$. $Q$ is a subset of big raw pending flips.
These edges are identified as big edges that should not be part of such a path of big edges.
In particular, edges in $Q$ may be added as a regular flip
and for the condition of adding other big edges as regular pending flips,
we ignore edges in $Q$.
Edges are added to $Q$ when a situation occurs where a non-big edge is compatible
with this edge.

Note that when raw pending flips are added again as regular pending flips we no longer require
that the edge is not repelled by the target vertex. It is sufficient that this was true
when it was previously added as a raw pending flip. By Fact~\ref{gen-repel-self}
this would not hold anyway.
\subsection{Framework for analysis}
The analysis shares the same basic structure with the simple case.
\begin{lemma}\label{term-main}
  The algorithm terminates after finitely many iterations of the main loop.
\end{lemma}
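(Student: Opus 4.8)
The plan is to mirror the termination argument from the simple case, but with a refined potential vector that also accounts for the set $Q$ and for the interplay between raw and regular pending flips. As before, I would define a lexicographic potential whose first coordinate is the number $g$ of good vertices (which never decreases, since a good vertex is never made bad), followed by a block of coordinates measuring, for each prefix $P_{\le k}$, the "progress" captured by $|\tilde E(P_{\le k})|$. The new ingredient is that each prefix now carries an associated set $Q_{\le k}$, and edges move into $Q$ over time; so I would interleave, for each $k$, the pair $(|\tilde E(P_{\le k})|, |Q_k|)$ into the vector, ordered so that adding an edge to $Q_k$ strictly increases the potential while not affecting earlier coordinates. Since $\ell \le |E|$ (each edge appears at most twice in $P$, once raw and once regular, so in fact $\ell \le 2|E|$), $|\tilde E(P_{\le k})| \le |E|$, and $|Q_k| \le |E|$, the number of attainable vectors is bounded by $|V|\cdot |E|^{O(|E|)}$, which gives the finiteness bound once monotonicity is established.

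The bulk of the work is checking that every iteration of the main loop strictly increases this vector. I would split into the cases mirroring the code: (i) a valid regular flip is executed; (ii) a new raw or regular pending flip is appended; (iii) the inner \textbf{loop} adds an edge to $Q_\ell$. Case (iii) is easiest: appending to $Q_\ell$ changes neither $g$ nor any $\tilde E(P_{\le k})$ nor any $Q_{k'}$ with $k' < \ell$, and strictly increases $|Q_\ell|$. Case (ii) is essentially as in the simple case: appending $P_{\ell+1}$ leaves $g$, all $\tilde E(P_{\le i})$ and all $Q_i$ for $i \le \ell$ untouched, and introduces a new, non-negative trailing block. Case (i) is where Fact~\ref{not-repelled-B} does the heavy lifting: if the flip does not turn a bad vertex good, then letting $\ell'$ be the truncation length, the flipped edge $e$ is repelled by its old target w.r.t. $P_{\le\ell'}$ but not w.r.t. $P_{\le k}$ for $k<\ell'$, and is not repelled by its new target w.r.t. any $P_{\le k}$, $k\le\ell'$; hence Fact~\ref{not-repelled-B} applies and all edges repelled w.r.t. $P_{\le k}$, $k\le\ell'$, stay repelled, so no $|\tilde E(P_{\le k})|$ decreases, while $e$ itself newly enters $\tilde E(P_{\le\ell'})$, strictly increasing coordinate $\ell'$. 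The truncation $P\gets P_{\le\ell'}$, $\ell\gets\ell'$ discards the later coordinates, which is fine lexicographically since coordinate $\ell'$ strictly increased.

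The main obstacle I anticipate is the bookkeeping around $Q$ and around raw-vs-regular duplication: I must make sure that executing a valid regular flip which truncates to $P_{\le\ell'}$ resets $Q_{\ell'}\gets\emptyset$ without this counting as a \emph{decrease} of the potential — this is handled by placing $|Q_{\ell'}|$ after $|\tilde E(P_{\le\ell'})|$ in the vector and noting that $|\tilde E(P_{\le\ell'})|$ strictly increased, so the value of $|Q_{\ell'}|$ is irrelevant. A second subtlety is verifying that an edge $e$ re-added as a regular flip $(e,\textsc{reg})$ after having been present as $(e,\textsc{raw})$ genuinely lengthens $P$ rather than creating a cycle; this follows because \texttt{IsRegularAddable} explicitly returns \textsc{False} when $(e,\textsc{reg})\in P$, so at most one regular copy ever coexists with the raw copy, keeping $\ell\le 2|E|$. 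Finally one must check that the inner \textbf{loop} over $Q$ cannot run forever within a single outer iteration: each pass adds a distinct edge to $Q_\ell$ (an edge already in $Q_\ell$ fails the guard $e\notin Q_{\le\ell}$), and $Q_\ell\subseteq E$ is finite, so the \textbf{loop} terminates, after which the outer-loop potential strictly increases as above.
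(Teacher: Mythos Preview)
Your plan is essentially the paper's argument: the same lexicographic potential driven by the counts $|\tilde E(P_{\le k})|$, the same appeal to Fact~\ref{not-repelled-B} when a flip is executed. In fact the paper uses \emph{exactly} the simple-case vector $(g,|\tilde E(P_{\le 0})|,\dots,|\tilde E(P_{\le\ell})|,-1)$ with no $Q$-coordinates at all. This works because the inner $Q$-loop sits inside a single main-loop iteration and touches neither the orientation nor $P$, hence none of the $|\tilde E(P_{\le k})|$; the strict increase already produced by (i) or (ii) suffices. Your interleaved $(|\tilde E(P_{\le k})|,|Q_k|)$ blocks are harmless but unnecessary, and your case (iii) is not a separate main-loop iteration.

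There is, however, one genuine gap in case (i). You assert that the flipped edge $e$ ``is not repelled by its new target w.r.t.\ any $P_{\le k}$, $k\le\ell'$''. In the simple case this held because $e$ could only have been appended to $P$ if $s(e)$ did not repel it. In the general algorithm that justification fails: a \emph{regular} flip $(e,\textsc{reg})$ may enter $P$ with no such check, and indeed $s(e)$ \emph{does} repel $e$ at that moment via the earlier raw copy (Fact~\ref{gen-repel-self}). The paper closes this gap by going back to the raw copy $P_k=(e,\textsc{raw})$: one first shows $k\ge\ell'+1$ (since $e$ was repelled by $t(e)$ w.r.t.\ $P_{\le k-1}$ when the raw copy was added, and this persists, forcing $\ell'\le k-1$), and then uses that at the time $P_k$ was added $e$ was \emph{not} repelled by $v'=s(e)$ w.r.t.\ $P_{\le k-1}$, hence not w.r.t.\ $P_{\le\ell'}$; this non-repelledness also persists, because any flip changing repelled edges at level $\le\ell'$ would have truncated $P$ past position $k$, yet $P_k$ is still present. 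This is precisely the one place where the raw/regular distinction matters for termination, and your plan currently skips it.
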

\begin{lemma}\label{stuck-main}
  If there at least one bad vertex remaining, then there is either a valid regular flip in $P$ or a flip that can be added to $P$.
\end{lemma}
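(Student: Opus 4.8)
The plan is to mirror the proof of the corresponding lemma in the simple case. Assume toward contradiction that there is a bad vertex, no valid regular flip in $P$, and no flip (raw or regular) that can be added to $P$; we derive $\OPT^* > 1$ via Lemma~\ref{lemma-duality}, contradicting $\OPT^* = 1$. Write $\tilde E = \tilde E(P_{\le\ell})$ for the edges repelled by their own source. As before one first checks that every edge lying in $P$, and every edge repelled by its target, belongs to $\tilde E$: for edges in $P$ this is Fact~\ref{gen-repel-self}, and an edge repelled by its target that were not in $\tilde E$ could be added as a raw flip. The dual variables are then set to $z_e = 0$ for $e\notin\tilde E$, and for $e\in\tilde E$ to
\begin{equation*}
  z_e = \begin{cases}
    1 & \text{if $e$ is big and $e\notin Q$,}\\
    w(e) & \text{if $e$ is big and $e\in Q$,}\\
    w(e) & \text{if $e$ is small,}\\
    \beta\, w(e) & \text{if $e$ is tiny.}
  \end{cases}
\end{equation*}
The value $y_v$ is, as in the simple case, essentially $z(\delta^-(v))$ corrected by amortization terms, and bad vertices receive an additional $-\mu$ for a small constant $\mu$. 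Besides the \emph{critical vertex amortization} (the same $\pm(\beta-1)$ bookkeeping between good critical vertices and good targets of tiny flips), the \emph{big-edge amortization} must now additionally account for the slack $1-w(e)$ carried by each big edge outside $Q$ and must route this slack along the ``paths of big edges'' identified by the three cases in \texttt{IsRegularAddable}; I expect this to be a telescoping term over the set $F$ of big flips that are smallest at their target, so that its total over $V$ is $0$, with edges that get moved into $Q$ leaving these paths in a controlled way.

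The proof then splits into the analogues of Claim~\ref{simple-unbounded} and Claim~\ref{simple-feasible}. For $\sum_{v} y_v < \sum_e z_e$ the key points are again that the critical-vertex amortization sums to at most $0$ — because every critical good vertex that is not a target of a tiny flip forces, via Fact~\ref{fact-gen-edge-W0} and the fact that the next flip added is one of minimal weight, a subsequent tiny flip, and no vertex is the target of two tiny flips — and that the big-edge amortization sums to exactly $0$ by the telescoping argument along big-edge paths; with a bad vertex present and all amortizations summing to at most $0$, strictness follows as before. Here one also has to repeat, in the raw/regular setting, the argument of the simple case showing that the tiny edge of weight $W_0$ witnessing criticality has been present on $s(e^P_k)$ since the list last equalled $P_{\le k}$, using that a decrease of $W_0$ or a flip onto $s(e^P_k)$ would have removed $e^P_{k+1},\dots$ from $P$.

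The heart of the argument, and the step I expect to be the main obstacle, is the feasibility claim: for every $v\in V$ and every $C\in\mathcal C(v,1)$ we must show $z(C)\le y_v$. The routine sub-cases — $v$ bad (using $w(\delta^-(v)) > 1+R$ and the maintained invariant $|\delta^-_B(v)|\le 2$), $v$ a good target of a tiny flip, and $v$ a good vertex repelling no edges — go through essentially as in the simple case, and the cases distinguished by Fact~\ref{repel-cases} (the structural dichotomy ``$v$ repels nothing / everything / all big edges and all $\geq W$ edges'') organize the argument. The genuinely new work is in the sub-cases that motivated the construction: a configuration $C$ containing a big edge together with a small edge or several tiny edges. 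There one argues that, since no regular flip can be added, the relevant condition in \texttt{IsRegularAddable} for small and for big edges must fail; this forces the offending big edge to be either the unique big edge on $v$ with no small edge fitting alongside it (so $C$ is essentially just that big edge and $z(C)\le 1\le y_v$), or an edge of $Q$ — in which case its $z$-value has already been deflated from $1$ to $w(e)$, so that the inequality $w(e)+w(\text{partner})\le 1$ that triggered its inclusion in $Q$ leaves enough budget. Stability of the repelled-edge structure under the single flip of $C$'s members (Fact~\ref{not-repelled-B}) is used throughout to argue that these conditions really were checked. Carrying this out for every combination — one versus two incoming big edges, their sizes relative to $R$, membership in $\tilde E$ and in $Q$, small versus tiny partners — is the main case analysis, and it is exactly here that the slightly larger tiny-edge threshold $1/3$, the parameter $\beta$, and the constant $R$ chosen for the general case must be verified to make every inequality close. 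Once both claims hold, Lemma~\ref{lemma-duality} yields $\OPT^* > 1$, the desired contradiction. (Together with Lemma~\ref{term-main} this shows the algorithm terminates with a valid orientation of value $1+R$, which is the route to Theorem~1.)
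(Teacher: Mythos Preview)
Your overall strategy---assume the algorithm is stuck, exhibit a dual solution $(y,z)$, and invoke Lemma~\ref{lemma-duality}---is exactly the paper's, and your treatment of the unboundedness claim and of the critical-vertex amortization is correct. The genuine gap is in the dual values you propose for big edges and, consequently, in the big-edge amortization.

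You set $z_e=1$ for every big $e\in\tilde E\setminus Q$ and $z_e=w(e)$ for big $e\in Q$. The paper discusses essentially this choice at the start of Section~4 and rejects it: a vertex with two incoming big edges, one of them of weight just above $1/2$, together with a small flip toward it, admits a configuration whose $z$-value exceeds $y_v$. Your $Q$-correction helps only when the offending big edge is itself a raw pending flip of weight at most $0.6$; it does nothing for big edges that are in $\tilde E\setminus P$, nor for big edges of weight in $(0.6,R]$, and these cases do occur. The paper's actual assignment is more delicate: only big edges in $F\setminus Q$ receive $z_e=1$ (where $F$ is the set of big pending flips that are minimal toward their target), while every other big edge in $\tilde E$ receives $z_e=\min\{w(e),R\}$. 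The cap at $R$, not at $w(e)$, is essential and introduces a further set $H$ of edges heavier than $R$ outside $F$ that threads through the entire case analysis.

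Correspondingly, the big-edge amortization is not the ``slack $1-w(e)$'' you suggest but the flat $(1-R)$ per $F$-edge already used in the simple case, augmented by a second telescoping correction of $R-w(e)$ over $F\cap Q$:
\[
  b_v \;=\; \bigl(|\delta_F^+(v)| - |\delta_F^-(v)|\bigr)(1-R)
        \;-\; \sum_{e\in\delta_{F\cap Q}^+(v)}[R-w(e)]
        \;+\; \sum_{e\in\delta_{F\cap Q}^-(v)}[R-w(e)].
\]
With these definitions the feasibility proof is organized not by the analogue of Fact~\ref{repel-cases} alone but by first splitting on $|\delta_B^-(v)|\le 1$ versus $|\delta_B^-(v)|=2$ (Claims~\ref{gen-feas1} and~\ref{gen-feas2}), and within each part by the type of the smallest pending flip toward $v$ together with the membership pattern of $\delta_B^-(v)$ in $F\setminus Q$, $F\cap Q$, $H$, or none of these. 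Without the $F$-dependence in $z$, the cap at $R$, and this precise $b_v$, several of those sub-cases---notably the ones with $|\delta_B^-(v)|=2$ and a small regular flip toward $v$---do not close.
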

We also have to show that there exists an initial orientation of edges such
that every vertex gets at most two big edges.
In fact, we could also easily do it with one edge, but this invariant appears to be hard to maintain during
a local search. For at most two edges, take the solution $x$ of the configuration LP and orient
every edge $e$ towards the vertex in $r(e) = \{u, v\}$ that the solution prefers, i.e.,
towards $u$ if $\sum_{C\in\mathcal C(u,1) : e\in C} x_{u, C} \ge \sum_{C\in\mathcal C(v,1) : e\in C} x_{v, C}$ and towards $v$ otherwise.
This is already a $2$-approximation, because $\sum_{C\in\mathcal C(u,1) : e\in C} x_{u, C} \ge 0.5$, if $u$ gets edge $e$. Every vertex $v$ has $\sum_{C\in\mathcal C(v,1)} x_{v,C} \le 1$ and a configuration
can only contain one big edge. Therefore no more than $2$ edges can be oriented towards each vertex.
%
\begin{fact}\label{cor-at-most-2}
  At every time during the execution of the algorithm $|\delta^-_B(v)|\le 2$ for all $v\in V$.
\end{fact}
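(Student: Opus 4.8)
The plan is an induction over the iterations of the main loop, with the invariant $|\delta^-_B(v)|\le 2$ for all $v\in V$. The base case is exactly the property guaranteed by the construction of the initial orientation described just above. For the inductive step, observe that the orientation is altered only when the algorithm executes a valid regular flip; appending pending flips to $P$ and the subsequent updates of $Q$ leave $s$ and $t$ untouched. Executing a valid regular flip $(e,\textsc{reg})$ swaps $e$ from $v':=t(e)$ to $v:=s(e)$, so the only quantity $|\delta^-_B(\cdot)|$ that can increase is $|\delta^-_B(v)|$, and only when $e$ is big (and $|\delta^-_B(v')|$ only decreases). Hence it suffices to show: whenever a big valid regular flip $(e,\textsc{reg})$ is executed, $|\delta^-_B(s(e))|\le 1$ immediately before.

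First I would use that \textsc{IsRegularAddable} accepts $(e,\textsc{reg})$ only when $(e,\textsc{raw})\in P$, and that $(e,\textsc{raw})$ necessarily occupies an earlier slot of $P$ than $(e,\textsc{reg})$; consequently any truncation $P\gets P_{\le k}$ that deletes $(e,\textsc{reg})$ also deletes $(e,\textsc{raw})$. Thus, letting the \emph{span} be the interval from the most recent appending of $(e,\textsc{reg})$ until the moment it is executed, $(e,\textsc{raw})$ lies in $P$ throughout the span, while $(e,\textsc{reg})$ is in $P$ but not executed during the open span. When $(e,\textsc{reg})$ was appended, the big-edge branch of \textsc{IsRegularAddable} had already verified $|\delta^-_B(v)|\neq 2$, so by the inductive hypothesis $|\delta^-_B(v)|\le 1$ at the start of the span. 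It then remains to show $|\delta^-_B(v)|$ does not grow during the span. A big edge enters $v$ only by executing a valid regular flip $(e'',\textsc{reg})$ with $e''$ big and $s(e'')=v$; moreover $e''\neq e$, since raw flips are never executed and $(e,\textsc{reg})$ is untouched inside the open span. Executing $(e'',\textsc{reg})$ forces $(e'',\textsc{raw})\in P$, so at that moment $P$ contains two distinct raw flips $(e,\textsc{raw})$ and $(e'',\textsc{raw})$, both of big edges, both with source $v$. Whichever was appended later sits at some position $b$ and was accepted by \textsc{IsRawAddable}, hence is not repelled by $v$ with respect to $P_{\le b-1}$; but the earlier raw flip occupies a position $a\le b-1$, and rule (raw) applied at $P_{\le a}$, together with monotonicity, makes $v$ repel every big edge of $\delta(v)$ — in particular the later one — with respect to $P_{\le b-1}$, a contradiction. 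Therefore no big edge enters $v$ during the span, big edges can only leave $v$, and so $|\delta^-_B(v)|\le 1$ right before the execution of $(e,\textsc{reg})$; afterwards $|\delta^-_B(v)|\le 2$, closing the induction.

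The main obstacle I anticipate is the bookkeeping around $P$: verifying precisely that the raw copy of an edge always precedes and outlives its regular copy under the truncation rule, that the excluded case $e''=e$ is genuinely excluded, and that the contradiction with \textsc{IsRawAddable} is obtained at the correct index $b-1$, so that monotonicity of the repelled-edge relation from $P_{\le a}$ actually applies. None of these points is deep, but they must be stated carefully, since the whole argument hinges on the interplay between the ordering of $P$, the truncation operation, and rule (raw).
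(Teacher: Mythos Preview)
The paper does not give an explicit proof of this fact; it is simply asserted after the paragraph establishing the initial orientation, with the only further hint being the comment ``Maintain invariant of $|\delta^-_B(s(e))|\le 2$'' inside \textsc{IsRegularAddable}. Your proposal therefore supplies the argument the paper leaves implicit, and the overall strategy --- induction on iterations, reduce to big regular flips, and show two big raw flips with the same source cannot coexist in $P$ --- is correct.

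Two points need tightening. First, the implication you state about truncations is reversed: from ``$(e,\textsc{raw})$ occupies an earlier slot than $(e,\textsc{reg})$'' you get that any truncation deleting $(e,\textsc{raw})$ also deletes $(e,\textsc{reg})$, not the other way around. This is actually the direction you need (raw cannot disappear while reg survives), so the conclusion that $(e,\textsc{raw})\in P$ throughout the span stands, but the justification should be flipped.

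Second, your contradiction is obtained at the moment the later raw flip (say at position~$b$) was appended, and you implicitly use that both $s(e)$ and $s(e'')$ equal $v$ \emph{at that moment}. This is true, but it is not free: you need that neither $e$ nor $e''$ has been flipped since its raw copy entered $P$. The reason is that executing a regular flip of an edge $f$ truncates $P$ to $P_{\le k}$ where $k$ is minimal with $f$ repelled by $t(f)$; since $f$ was already repelled by $t(f)$ w.r.t.\ $P_{\le a-1}$ when $(f,\textsc{raw})$ was appended at position $a$ (that is the \textsc{IsRawAddable} test), and repelled edges are preserved under the flips that occur (Fact~\ref{not-repelled-B}, exactly as used in the termination proof), one gets $k\le a-1$ and hence $(f,\textsc{raw})$ would be removed --- contradicting that it is still in $P$. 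You flag this kind of bookkeeping as the main obstacle, and indeed this is the one step that genuinely needs spelling out.
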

\begin{proof}[Proof of Lemma~\ref{term-main}]
Let $\ell$ be the length of $P$. We define a potential function $s(P)$ as
\begin{equation*}
  s(P) = (g, |\tilde E(P_{\le 0})|, |\tilde E(P_{\le 1})|, \dotsc, |\tilde E(P_{\le\ell})|, -1),
\end{equation*}
where $g$ denotes the number of good vertices.
Like in the simple case, we will argue that this vector increases lexicographically after 
every iteration of the main loop. Since the number of possible vectors is finite, so is the running time.
If a pending flip is added, the increase is obvious. Now consider the case where a pending flip $e\in P$ is
executed. The number of good vertices does not decrease, since the algorithm only executes
valid flips. If the number of good vertices increases, we are done. Hence, assume otherwise.
Let $v$ be the vertex $e$ was oriented to before the flip and $v'$ the vertex afterwards.
We denote by $\ell$ the length of $P$ before the flip was executed and by $\ell'$ the length afterwards.
Recall that $\ell'$ was chosen such that $e$ is repelled by $v$ w.r.t. $P_{\le\ell'}$,
but not w.r.t. $P_{\le\ell' - 1}$.
Let $P_k$ be the raw pending flip for $e$. Then $k \ge \ell' + 1$, since otherwise $\ell'$ would not
be minimal.
Furthermore, $e$ was not repelled by $v'$ w.r.t. $P_{\le k-1}$ (in particular, not w.r.t. $P_{\le\ell'}$)
when $P_{k}$ was added.
Indeed, $e$ is still not repelled by $v'$ w.r.t. $P_{\le\ell'}$, since any change to the repelled edges
would trigger the removal of all pending flips $P_{\ell'+1},P_{\ell'+2},\dotsc$ However,
$P_k$ is still in the list.

We conclude that the premise of Fact~\ref{not-repelled-B} is fulfilled
and therefore edges that were repelled w.r.t. $P_{\le k}$ (before the flip of $e$)
for any $k\le\ell'$ are still and, in particular, $|\tilde E(P_{\le k})|$ has not
decreased.
Finally, $|\tilde E(P_{\le \ell'})|$ has increased: $e$ itself was not in that
set, since it is not repelled by $v'$. After the flip we have
$s(e) = v$ and since $e$ is repelled by $v$, it is now in this set.
\end{proof}
\begin{proof}[Proof of Lemma~\ref{stuck-main}]
We assume the contrary and show that this implies $\OPT^* > 1$.
This proof is by demonstrating the dual of the configuration LP is unbounded. For this
purpose, in the following we will define a solution $z, y$ for the dual.
Throughout the proof we will use the parameter $\beta = 1.03$.
As before let $\tilde E \subseteq E$ denote all edges $e$ that are repelled by $s(e)$.
This includes all edges that are repelled by $t(e)$, since otherwise
$e$ could be added to $P$ or it is in $P$ and then $e\in\tilde E$ follows from Fact~\ref{gen-repel-self}. We let $Q$ be the edges in $Q_{\le\ell}$ at
the time the algorithm gets stuck.
Furthermore, define $F$ as the set of big edges $e\in P$
such that there is no $e'\in\delta^+_P(s(e))$ with $w(e') < w(e)$.
Let $H$ denote all edges with weight greater than $R$ which are not in $F$.
For clarity a glossary of important symbols is provided in Table~\ref{edge-gloss}.
The relations between the edge sets are illustrated in Figure~\ref{edge-incl}.
\begin{table}
\centering
\fbox{
\begin{minipage}{0.8\linewidth}
\begin{description}
  \item[$V$] All vertices.
  \item[$E$] All edges.
  \item[$\tilde E$] Edges $e$ that are repelled by $s(e)$ (in particular, those that are repelled by $t(e)$).
  \item[$P$] Pending flips.
  \item[$T$] Tiny edges, i.e., all edges $e\in E$ with $w(e)\le 1/3$.
  \item[$S$] Small edges, i.e., all edges $e\in E$ with $1/3 < w(e)\le 1/2$.
  \item[$B$] Big edges, i.e., all edges $e\in E$ with $1/2 < w(e)$.
  \item[$F$] Big edges in $P$ such that there is no smaller pending flip towards the same vertex.
  \item[$Q$] Big edges $e\in P$ with $0.5 < w(e) \le 0.6$ such that $t(e)$ repels a non-loop edge $e'$ with $w(e) + w(e') \le 1$.
  \item[$H$] Very big edges $e\in E\setminus F$ (with $w(e) > R$).
\end{description}
\end{minipage}}
\caption{Glossary of important symbols}
\label{edge-gloss}
\end{table}
\begin{figure}
\centering
\begin{tikzpicture}
  \draw (0, 0) ellipse (4.75 cm and 4.5cm);
  \node at (0, 4) {$E$};
  \draw[dashed] (0, 0) ellipse (4.5 cm and 3.5cm);
  \node at (0, 3) {$\tilde E$};
  \draw (0, 0) ellipse (4 cm and 2.5cm);
  \node at (0, 1.75) {$P$};
  \draw[dashed] (0.75, -0.5) circle (1.25cm);
  \node at (1, -0.5) {$Q$};
  \draw[dashed] (-0.75, -0.5) circle (1.25cm);
  \node at (-1, -0.5) {$F$};
  \draw[dashed] (0, -3) circle (1cm);
  \node at (0, -3) {$H$};
  \draw (0, -1.5) ellipse (4 cm and 2.75cm);
  \node at (-1.5, -3.6) {$B$};
\end{tikzpicture}
\caption{Inclusion relations of edge sets (omitting $S$ and $T$). Dashed lines are only for readability}
\label{edge-incl}
\end{figure}
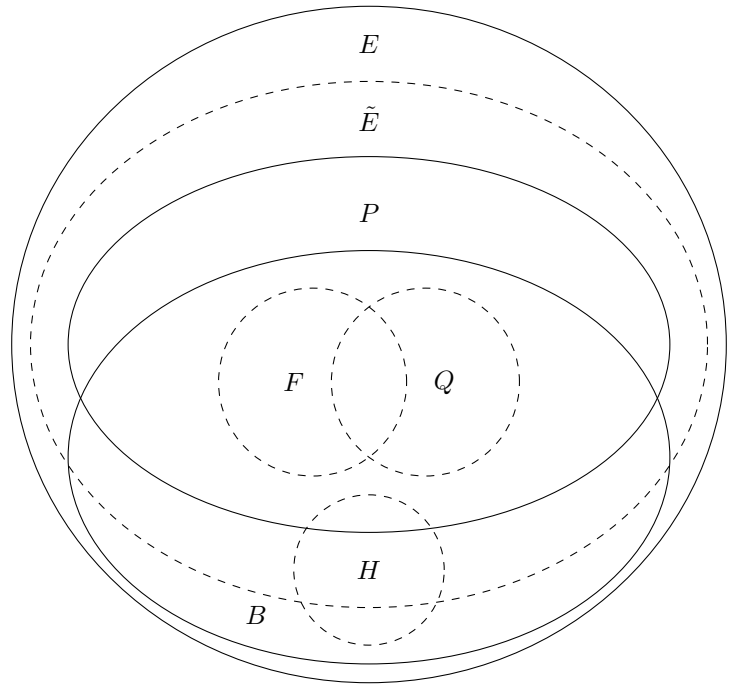

We define critical vertex as in the simple case:
Observe that by definition of repelled edges, a vertex $v$ repels either
\begin{itemize}
  \item only big edges; or
  \item edges with weight
at least a certain threshold $W > 1/3$ and edges from $\delta_{\tilde E}^-(v)$; or 
  \item all edges.
\end{itemize}
We call vertices of the last kind \emph{critical}.
For every $e\in E\setminus \tilde E$, set $z_e = 0$. For every $e\in\tilde E$, set
\begin{equation*}
  z_e = \begin{cases}
    1 &\text{ if } 1/2 < w(e) \text{ and } e\in F\setminus Q, \\
    \min\{w(e), R\} &\text{ if } 1/2 < w(e) \text{ and } e\notin F\setminus Q, \\
    w(e) &\text{ if } 1/3 < w(e) \le 1/2 , \text{ and} \\
    \beta w(e) &\text{ if } w(e) \le 1/3 .
  \end{cases}
\end{equation*}
On average, we want to set each $y_v$ to $z(\delta^-(v)) = z(\delta_{\tilde E}^-(v))$.
However, there are two kinds of amortization between vertices.

\begin{description}
\item[Critical amortization.]
If $v$ is good and critical, but is not a target of a tiny flip, set
$a_v := \beta - 1$.
If $v$ is is good target of tiny flip (in particular, critical), set
$a_v := - (\beta - 1)$. Otherwise, set $a_v = 0$.
\item[Big amortization.] Set
\begin{equation*}
  b_v := (|\delta_{F}^+(v)| - |\delta_{F}^-(v)|) \cdot (1 - R)
  - \sum_{e\in\delta_{F\cap Q}^+(v)} [R - w(e)]
  + \sum_{e\in\delta_{F\cap Q}^-(v)} [R - w(e)] .
\end{equation*}
\end{description}
Finally, define
$y_v = z(\delta^-(v)) + a_v + b_v$ if $v$ is a good vertex and
$y_v = z(\delta^-(v)) + a_v + b_v - \mu$ otherwise. Here $\mu = 0.01$ is
some sufficiently small constant.
By the almost identical argument as in the simple case (which we will repeat for completeness) we get the following.
\begin{claim}\label{gen-unbounded}
  $\sum_{e\in E} z_e > \sum_{v\in V} y_v$ .
\end{claim}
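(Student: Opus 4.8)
The plan is to follow the template set by Claim~\ref{simple-unbounded}: decompose $\sum_{v\in V} y_v$ into $\sum_{v\in V} z(\delta^-(v)) = \sum_{v\in V}\sum_{e\in\delta^-(v)} z_e = \sum_{e\in E} z_e$ (using that $z_e=0$ for $e\notin\tilde E$ and that every $e\in\tilde E$ has a well-defined target), plus the correction terms $\sum_{v\in V} a_v$, plus $\sum_{v\in V} b_v$, minus $\mu$ for every bad vertex. Since there is at least one bad vertex by assumption, it then suffices to show $\sum_{v\in V} a_v \le 0$ and $\sum_{v\in V} b_v \le 0$; the $-\mu$ term from the (nonempty set of) bad vertices gives the strict inequality.

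First I would handle $\sum_{v\in V} b_v$. Expanding the definition, $\sum_v b_v = (1-R)\bigl(\sum_v |\delta_F^+(v)| - \sum_v |\delta_F^-(v)|\bigr) - \sum_v\sum_{e\in\delta_{F\cap Q}^+(v)}[R-w(e)] + \sum_v\sum_{e\in\delta_{F\cap Q}^-(v)}[R-w(e)]$. Each edge $e\in F$ contributes $+1$ to exactly one $|\delta_F^+(v)|$ (namely $v=s(e)$) and $+1$ to exactly one $|\delta_F^-(v)|$ (namely $v=t(e)$), so the first bracket vanishes; likewise each $e\in F\cap Q$ contributes $[R-w(e)]$ once with a minus sign (at $s(e)$) and once with a plus sign (at $t(e)$), so these cancel as well. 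Hence $\sum_v b_v = 0$, exactly as in the simple case. This part is essentially bookkeeping and poses no real difficulty.

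The main obstacle is $\sum_{v\in V} a_v \le 0$, which requires the combinatorial argument that every good critical vertex $v$ that is \emph{not} a target of a tiny flip can be injectively charged to a distinct good vertex that \emph{is} a target of a tiny flip (each such vertex contributing $+(\beta-1)$ resp. $-(\beta-1)$, and vertices that are targets of two tiny flips being impossible since a vertex repels all edges after one tiny flip is added, so the charging is one-to-one). Following the simple-case proof: if $v$ is critical but not the target of a tiny flip, then at the step where $v=s(e^P_k)$ triggered rule (critical) for a regular non-tiny flip $e^P_k$, we had $W_0\le 1/3$, so by Fact~\ref{fact-gen-edge-W0} there is a tiny edge of weight exactly $W_0$ in $\delta^-(v)$ not repelled by its other endpoint w.r.t.\ $P_{\le k}$; one argues this edge was already in $\delta^-(v)$ when $P$ consisted only of $P_{\le k}$ (because any flip that moved it, or any decrease of $W_0$, would have forced removal of $P_{k+1},\dots$, contradicting that $e^P_k$ is still present), so the minimality of edge weights in the selection rule forces $e^P_{k+1}$ to be a tiny flip whose target $s(e^P_{k+1})$ is then a good target of a tiny flip, and this map $v\mapsto s(e^P_{k+1})$ is injective. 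One subtlety specific to the general case is that pending flips can now be raw or regular and edges may reappear, so I would verify that the tiny flip selected immediately after a (critical) regular non-tiny flip is indeed a \emph{regular} tiny flip (tiny edges have no extra addability condition, and a tiny edge is never made a raw flip in a way that blocks this), and that targets of tiny flips are automatically critical (since for a tiny flip $e^P_k$ we have $W_0\le w(e^P_k)\le 1/3$). Assembling these, $\sum_v a_v\le 0$, and combining with $\sum_v b_v=0$ and the strict $-\mu$ contribution of the bad vertex yields $\sum_{e\in E} z_e \ge \sum_v z(\delta^-(v)) + \sum_v[a_v+b_v] > \sum_v y_v$.
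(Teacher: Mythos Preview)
Your proposal is correct and follows essentially the same approach as the paper: show $\sum_v b_v = 0$ by telescoping over endpoints, show $\sum_v a_v \le 0$ by the injection from critical non-tiny-targets to tiny-flip-targets via Fact~\ref{fact-gen-edge-W0} and the weight-minimality rule, and conclude using the $-\mu$ from the bad vertex. One small point to tighten: in the general case a \emph{raw} tiny flip does not make its target repel all edges, so the flip $P_{k+1}$ may first be raw; the paper handles this by observing that a \emph{regular} tiny flip for that edge is then added before any non-tiny flip (tiny edges have no extra regular-addability condition), and the injectivity uses that no vertex is the target of two \emph{regular} tiny flips.
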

It remains to show that for each $v\in V$ and $C\in\mathcal C(v, 1)$ it holds
that $y_v\ge z(C)$.
This proof is much more complicated than in the simple case and for that purpose it is
divided into two cases.
\begin{claim}\label{gen-feas1}
Let $v\in V$ with $|\delta^-_B(v)| \le 1$. Let $C\in\mathcal C(v, 1)$. Then
$y_v \ge z(C)$.
\end{claim}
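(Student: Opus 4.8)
The plan is to fix $v\in V$ with $|\delta^-_B(v)|\le 1$ and a configuration $C\in\mathcal C(v,1)$, and to verify $y_v\ge z(C)$ by a case distinction that mirrors Cases~1--4 of the simple analysis, but now keeping track of the new $z$-values on big edges and of the extra terms in $b_v$ coming from $F\cap Q$. First I would record the analogues of Facts~\ref{fact-flip-load}, \ref{fact-suff-cond}, \ref{repel-cases} in this setting: for every (regular) flip $e\in P$ we have $w(\delta_{\tilde E}^-(s(e)))+w(e)>1+R$; if $v$ is good and not the target of a tiny flip it suffices to show $w(\delta_{\tilde E}^-(v))+b_v\ge 1$ (using that an uncritical $v$ must contain in $\delta^-(v)$ every $e\in C$ with $z_e>w(e)$, since otherwise $e$ could be added to $P$ as a regular flip — here I must also check the more restrictive \IsRegularAddable\ conditions for small and big edges do not break this); and that $v$ either repels nothing, repels only big edges, repels everything (critical), or repels a threshold set $E_{\ge W}\cup\delta^-_{\tilde E}(v)$ with $W>1/3$. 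Because $|\delta^-_B(v)|\le 1$, the big-amortization term simplifies: $-|\delta^-_F(v)|(1-R)$ and $+\sum_{e\in\delta^-_{F\cap Q}(v)}[R-w(e)]$ together contribute at least $-(1-R)$ in total (when the unique incoming big edge is in $F$), and the outgoing $F$ and $F\cap Q$ terms are nonnegative, so I can bound $b_v\ge |\delta^+_F(v)|(1-R)-\sum_{e\in\delta^+_{F\cap Q}(v)}[R-w(e)] - [\text{incoming big edge in }F]\,(1-R)$.

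Next I would run the cases. \textbf{Bad $v$:} $w(\delta^-(v))>1+R$, and with at most one incoming big edge $z(\delta^-(v))\ge w(\delta^-(v))-(1-R)$ if that big edge is in $F\setminus Q$ (contributing $z_e=1\ge w(e)$ actually raises $z$), or $z_e=\min\{w(e),R\}\ge w(e)-(1-R)$ otherwise; in all cases $y_v\ge 2R-\mu\ge\beta\ge z(C)$ as before, using $\beta=1.03$ and $R$ close to its final value. \textbf{Good $v$, target of a tiny flip (hence critical):} $y_v=z(\delta^-(v))-(\beta-1)+b_v\ge w(\delta_{\tilde E}^-(v))-(\beta-1)-(1-R)\ge 1+R-(1-R)+R-\beta$, which is $\ge\beta$ when $3R-2\beta\ge 0$, so $\ge z(C)$. \textbf{Good $v$, critical but not tiny-target:} $y_v=z(\delta^-(v))+(\beta-1)+b_v$ and since $z(\delta^-(v))\ge w(\delta_{\tilde E}^-(v))-(1-R)$ and $w(\delta_{\tilde E}^-(v))\ge 1+R-w(e^P_k)$ for the flip $e^P_k$ towards $v$, I get a bound of the form $\beta+(\text{something}\ge 0)\ge z(C)\le\beta$; the tightest subcase is a small flip, where I need $\beta(\text{slack})\ge 1-R$-type inequalities exactly as in Case~3 of the simple proof. \textbf{Good $v$, uncritical:} here I split on whether $v$ repels only big edges or a threshold set $E_{\ge W}$, $W>1/3$. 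If $v$ repels only big edges, then any non-big $e\in C$ has $z_e\le w(e)$ and, if $z_e>0$, lies in $\delta^-(v)$; a big edge in $C$ with positive $z$ also lies in $\delta^-(v)$ (it is repelled). So $z(C)\le z(\delta^-(v))\le y_v$ once I check $a_v,b_v\ge$ what is needed — and $a_v=0$, $b_v$ may be slightly negative only through the incoming big edge, which is exactly cancelled because that edge's own $z$-value jumped from $w(e)$ to $1$ (when it is in $F\setminus Q$) or the relevant $F\cap Q$ term compensates. If instead $v$ repels $E_{\ge W}\cup\delta^-_{\tilde E}(v)$, I replay the simple Case~3/Case~4 argument: all edges of $C$ with $z_e>w(e)$ (the tiny ones, via $\beta$, and possibly the unique big one) are in $\delta^-(v)$ because $v$ is uncritical, and I bound $z(C)$ against $z(\delta^-(v))+b_v$ using $k\cdot s\le 1$ for $k$ small edges of minimum size $s$ and the inequality $k-\beta(k-1)\ge 0$ (valid for $k\le 3$ and $\beta=1.03$) together with $\beta(1/k-R)\le\beta(1/2-R)\le 0$.

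The main obstacle I expect is the uncritical case in the presence of the unique incoming big edge $e_B$, because the \IsRegularAddable\ rule for small and big edges is weaker than "repelled implies addable": a small edge $e_S\in C$ with $e_S\notin\delta^-(v)$ might fail to be addable (e.g.\ $|\delta^-_B(v)|=1$ but $w(e_S)+w(e_B)>1$), so I cannot automatically conclude $e_S\in\delta^-(v)$, yet such an $e_S$ contributes $z_{e_S}=w(e_S)$ which must still be paid for by $y_v$. Resolving this requires exploiting that $C$ is a configuration: if $e_B\in C$ then $w(e_S)\le 1-w(e_B)$, which caps $z(C)$; and if $e_B\notin C$ then $v$'s load contains $e_B$ "for free" on top of whatever pays for $C$. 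I anticipate the need to further split on $w(e_B)\le R$ versus $w(e_B)>R$ (the latter forces $e_B\in H$ or $e_B\in F$, constraining $z_{e_B}\in\{1,R\}$) and on whether $e_B\in Q$, so that the $\sum_{e\in\delta^-_{F\cap Q}(v)}[R-w(e)]$ term in $b_v$ is available exactly when $z_{e_B}=w(e_B)<R$ rather than $1$. Once these subcases are organized, each reduces to a numeric inequality in $R$, $\beta$, $\mu$ that holds with room to spare at $R\approx 0.749$, $\beta=1.03$, $\mu=0.01$.
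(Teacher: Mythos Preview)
Your proposal has a genuine gap. In the bad-vertex and tiny-target cases you conclude with ``$\ge \beta \ge z(C)$'', but the bound $z(C)\le\beta$ is \emph{false} in the general setting. An edge $e\in F\setminus Q$ has $z_e=1$ even when $w(e)$ is only slightly above $1/2$, so a configuration $C=\{e\}\cup C'$ with $w(C')$ close to $1/2$ can have $z(C)=1+\beta\,w(C')$ up to roughly $1+\beta/2\approx 1.5$, well above $\beta=1.03$. The paper handles this by first noting that in Cases~1--3 one has $\delta^+_F(v)=\emptyset$ (no pending $F$-edge leaves $v$), so any $F$-edge appearing in $C$ must already lie in $\delta^-_F(v)$; the correct upper bound is then $z(C)\le z(\delta^-_F(v))-w(\delta^-_F(v))+\beta$, and the matching excess $z(\delta^-_F(v))-w(\delta^-_F(v))$ has to be produced on the $y_v$ side in every subcase. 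Your chain $y_v\ge 2R-\mu\ge\beta$ does not do this.

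A related problem is your ``sufficient condition'' analogue of Fact~\ref{fact-suff-cond}. In the simple case it rested on $z_e\ge w(e)$ for all $e\in\tilde E$, but here an edge $e\in H$ (weight $>R$, not in $F$) has $z_e=R<w(e)$, so $z(\delta^-(v))\ge w(\delta^-_{\tilde E}(v))$ can fail and you cannot reduce to showing $w(\delta^-_{\tilde E}(v))+b_v\ge 1$. The paper abandons this shortcut and bounds $y_v$ and $z(C)$ directly in each subcase, always carrying the $H$-deficit and the $F$-excess explicitly.

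Finally, your outline for Case~3 and the concluding paragraph significantly underestimate the work. The paper's small-flip case splits further on whether $v$ is critical, whether $\delta^-_{\tilde E}(v)$ contains a small edge, and whether $C$ meets $F$; several branches use the fact that any edge in $\delta^-_{F\setminus Q}(v)$ must have weight $>0.6$ (otherwise Fact~\ref{gen-fact-Q} would force it into $Q$), a threshold that never appears in your plan. Case~4 requires a split on $|\delta^+_Q(v)|$, and in the branch $|\delta^+_Q(v)|=0$ with $\delta^-_{(F\setminus Q)\cup H}(v)=\emptyset$ the paper needs a temporal argument---reasoning about the order in which pending flips and $Q$-memberships were recorded---to rule out $\delta^-_{F\cap Q}(v)\neq\emptyset$. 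These are not mere numeric checks and are not captured by the organization you propose.
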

\begin{claim}\label{gen-feas2}
Let $v\in V$ with $|\delta^-_B(v)| = 2$. Let $C\in\mathcal C(v, 1)$. Then
$y_v \ge z(C)$.
\end{claim}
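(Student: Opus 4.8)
The plan is to fix $v\in V$ with $|\delta^-_B(v)| = 2$, say the two incoming big edges are $e_B$ and $e_{B'}$, and a configuration $C\in\mathcal C(v,1)$, and bound $z(C)$ against $y_v = z(\delta^-(v)) + a_v + b_v\ (-\mu$ if $v$ is bad$)$. Since a configuration of makespan $1$ can contain at most one big edge, $C$ contains at most one big edge plus small/tiny edges summing to at most $1/2$ (or $1$ if no big edge). The first step is to record the ``invariant'' facts: by Fact~\ref{cor-at-most-2} having two incoming big edges is consistent, and by the \IsRegularAddable\ rules and the repelled-edge rules, having two big incoming edges constrains what $v$ repels. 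In particular I would first dispatch the case $v$ bad exactly as in Case~1 of the simple proof: with $|\delta^-_B(v)| = 2$ we get $z(\delta^-(v)) \ge z(e_B)+z(e_{B'})$, and even with the big amortization $b_v \ge -2(1-R)$ (worst case both in $F$, neither in $Q$, so $z$-value $1$ each) we have $y_v \ge 2 - 2(1-R) - \mu = 2R-\mu \ge \beta \ge z(C)$; if at most one of them has $z$-value $1$ the bound is even easier. So assume $v$ is good for the remainder.

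For $v$ good with two incoming big edges, the key structural input is the big-edge amortization term $b_v$ together with the membership of $e_B,e_{B'}$ in $F$, $Q$, $\tilde E$. I would split on how many of $\{e_B,e_{B'}\}$ lie in $F\setminus Q$ (hence get $z$-value $1$) versus get $z$-value $\min\{w(e),R\}$. If both get value $1$, then $z(\delta^-(v))\ge 2$ while $b_v\ge -2(1-R)$, giving $y_v \ge 2R \ge 1 \ge z(C)$ whenever $C$ has at most one big edge — and if $C$'s big edge $e\in\tilde E$ forces $z(C)\le 1 + (\text{small/tiny})$, I use that the small/tiny part of $C$ with positive $z$-value must, unless $v$ is critical, lie in $\delta^-(v)$ (else it could be added to $P$), exactly as in Fact~\ref{fact-suff-cond}; the critical subcase is absorbed by $a_v = \beta - 1$. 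The genuinely delicate subcase — and the one the whole general-case machinery was built for — is when one incoming big edge, say $e_B$, is small-ish ($1/2 < w(e_B) \le 0.6$) and is compatible with a small edge $e_S$ in $C$, i.e. $w(e_B) + w(e_S)\le 1$. Here I would argue that the algorithm's $Q$-rule has already fired: because $v$ repels the small edge $e_S$ (it must, since otherwise $e_S$ could be added as a flip), the raw flip of $e_B$ satisfies the \textbf{loop} condition in Alg.~\ref{alg-general} and so $e_B\in Q$. Then $z(e_B) = \min\{w(e_B),R\} = w(e_B)$, and the $-\sum_{e\in\delta^+_{F\cap Q}}[R-w(e)] + \sum_{e\in\delta^-_{F\cap Q}}[R-w(e)]$ correction in $b_v$ is designed precisely so that $z(e_B) + (\text{contribution to }b_v) $ behaves like the ``$z_e = w(e)$'' regime locally, restoring the simple-case estimate $z(C) \le z(e_B) + w(e_S) + \dots \le y_v$.

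The remaining subcases are: $C$ contains a big edge $e\notin \tilde E$ (then $z(e) = 0$, so $z(C)\le z(\text{small/tiny part})\le$ what is already in $\delta^-(v)$, easy); $C$ contains no big edge at all (then $z(C)\le \beta\cdot(\text{mass}\le 1)$ of small/tiny edges, and two incoming big edges with $z$-values summing to at least $2R$ dominate this after amortization); and the case where $C$'s big edge is in $F\setminus Q$ with $w > 0.6$ or $> R$, where $w(e_B)+w(e_S) > 1$ is forced so no small edge can accompany it in $C$, hence $z(C)\le 1$ and $y_v\ge 2R\ge 1$. I expect the main obstacle to be bookkeeping the interaction between the three correction terms in $b_v$ and the $z$-values of $e_B,e_{B'}$ across the $Q$-membership split, and in particular verifying that whenever a small edge is compatible with an incoming big edge the algorithm has indeed placed that big edge in $Q$ — this is where I would need to invoke the precise \emph{loop} block of Alg.~\ref{alg-general} and the fact that, at a stuck configuration, $v$ repels every edge it could otherwise accept as a flip. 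The numeric inequalities themselves ($R = 0.74$, $\beta = 1.03$, $\mu = 0.01$) should then close each subcase with room to spare, as in the simple analysis.
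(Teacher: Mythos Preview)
Your plan has the right high-level intuition but contains several concrete gaps that prevent it from going through as written.

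First, the recurring claim ``$z(C)\le\beta$'' (used in both the bad case and the ``both in $F\setminus Q$'' case) is false in general. If $C$ contains an edge $e^*\in F\setminus Q$ then $z_{e^*}=1$ and the remaining weight budget in $C$ can contribute up to $\beta(1-w(e^*))$, so $z(C)$ can be close to $1+\tfrac12\beta\approx 1.515$. Your bound $y_v\ge 2R-\mu\approx 1.49$ does not cover this. The paper avoids this by never bounding $z(C)$ absolutely; instead it uses that $C\cap F\subseteq\delta^-_F(v)$ (since $\delta^+_F(v)=\emptyset$) together with $z_e-w(e)\ge 0$ for all $e\in F$, obtaining $z(C)\le z(\delta^-_F(v))-w(\delta^-_F(v))+\beta$, and then exploits the full load $w(\delta^-_{\tilde E}(v))>1+R$ on a bad vertex rather than only the two big edges.

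Second, your argument that ``the raw flip of $e_B$ satisfies the loop condition, so $e_B\in Q$'' is incomplete. By Fact~\ref{gen-fact-Q} the witness must be a non-loop edge in $\delta^+_{\tilde E}(t(e_B))=\delta^+_{\tilde E}(v)$, i.e.\ an \emph{outgoing} edge that $v$ repels. A small edge $e_S\in C$ with $w(e_B)+w(e_S)\le 1$ need not be outgoing from $v$; it may well lie in $\delta^-(v)$, in which case it does not trigger $Q$. The paper therefore splits on whether $\delta^-_{\tilde E}(v)$ contains small edges (Cases~3.2--3.5), and in the subcase where the compatible small edge is genuinely outgoing it deduces $w(e_B)>0.6$ from $e_B\notin Q$.

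Third, and structurally most important, your decomposition by $|\{e_B,e_{B'}\}\cap(F\setminus Q)|$ does not control what $v$ repels. The paper's decomposition is by the type of the smallest pending flip towards $v$ (tiny / small / only big or none). This is essential because: (i) in Case~4, with $|\delta^-_B(v)|=2$, the function \texttt{IsRegularAddable} forbids any regular big flip, so $v$ repels \emph{only} big edges --- hence every small/tiny edge of $C$ with positive $z$ lies in $\delta^-(v)$, and the case is almost trivial; (ii) in Case~3, whether $e_S$ is a raw or regular small flip (governed by Fact~\ref{gen-fact-reg-small}) determines whether one of the big edges is compatible with $e_S$; (iii) in Case~2, the tiny flip forces every $e\in\delta^-_{F\setminus Q}(v)$ to have weight $>0.6$, tightening the numerics. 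Your claim that ``the small/tiny part of $C$ with positive $z$-value must, unless $v$ is critical, lie in $\delta^-(v)$'' is false for small edges when $v$ is uncritical with threshold $W_0\in(1/3,1/2]$; it holds only for tiny edges, and only in certain cases is that enough.

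Finally, you use $R=0.74$, but the general case requires $R=0.749$; with $R=0.74$ the inequality $4R+\tfrac23-1.6-\beta\ge\beta$ needed in Case~2.2 fails.
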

Together, these claims and Corollary~\ref{cor-at-most-2} fulfill the premise of Lemma~\ref{lemma-duality} and therefore imply that the configuration LP is not feasible for $\OPT^* = 1$, a contradiction.
\end{proof}

\begin{proof}[Proof of Claim~\ref{gen-unbounded}]
First we note that
\begin{multline*}
  \sum_{v\in V} b_v
  = (1 - R) \underbrace{(\sum_{v\in V} |\delta_{F}^+(v)| - \sum_{v\in V} |\delta_{F}^-(v)|)}_{ = 0} \\
  + R\cdot \underbrace{(\sum_{v\in V} |\delta_{F\cap Q}^-(v)| - \sum_{v\in V} |\delta_{F \cap Q}^+(v)|)}_{ = 0} 
  + \underbrace{\sum_{v\in V} w(\delta_{F\cap Q}^+(v)) - \sum_{v\in V} w(\delta_{F\cap Q}^-(v)|)}_{ = 0} 
  = 0
\end{multline*}
Moreover, we have that $\sum_{v\in V} a_v \le 0$:
This is because at least half of all good vertices that are critical are
targets of tiny flips. When a vertex $v$ is critical but not target of a tiny flip,
there must be a regular non-tiny flip $(e^P_k,\text{\sc reg})$ with $s(e^P_k) = v$
such that in the definition of repelled edges for $P_{\le k}$
we have $W_0 \le 1/3$.
By Fact~\ref{fact-gen-edge-W0} there exists an edge of weight $W_0$ (hence, tiny) in $\delta^-(v)$ 
which is not repelled
by its other vertex $v' \neq v$ w.r.t. $P_{\le k}$.
We argue later that this edge has already been in $\delta^-(v)$ when the list of
pending flips has consisted only of the $k$ flips in $P_{\le k}$.
Since a tiny flip will always be added before the next non-tiny regular flip
(edges of minimal weight are chosen), we know that the next pending flip
$e^P_{k+1}$ must be a tiny flip and $s(e^P_{k+1})$ is the target of a tiny flip.
Also, this will become target of a regular tiny flip before the next non-tiny pending flip
can be added. Note that no vertex can be target of two regular tiny flips and therefore
the ratio holds.

The reason why the edge has already been in $\delta^-(s(e^P_k))$ is the following.
When an edge $e$ is flipped towards $v = s(e^P_k)$ and it is repelled by $v$ w.r.t. $P_{\le k}$,
then $e$ must have been a flip in $P$ earlier in the list than $e^P_k$.
This means that at least $e^P_k,\dotsc,e^P_\ell$ are removed from $P$. Because $e^P_k$ is
still in $P$, we can therefore assume that this has not happened.
If $W_0$ in the definition of the repelled edges w.r.t. $P_{\le k}$ has not decreased since the last
time the list consisted only of $P_{\le k}$, we are done, since this would mean the edge of weight $W_0$
has been repelled by $v$ all this time and could not have been added to $\delta^-(v)$.
This is indeed the case. $W_0$ can only decrease when an edge from $\delta^-_{\tilde E\cup E_{\ge W_0}}(v)$ is flipped.
This, on the other hand, causes at least $e^P_{k+1},\dotsc,e^P_{\ell}$ to be removed from $P$.
This finished the proof of $\sum_{v\in V} a_v \le 0$. We conclude,
\begin{equation*}
  \sum_{e\in E} z_e \ge \sum_{v\in V} \sum_{e\in\delta^-(v)} z_e + \underbrace{\sum_{v\in V}[b_v + a_v]}_{\le 0} > \sum_{v\in V} y_v ,
\end{equation*}
where the strict inequality holds because of the definition of $y_v$ and
because there exists at least one bad vertex.
\end{proof}
Before we prove the other two claims, let us recall a fact from the simple version that still holds in a slightly modified variant.
\begin{fact}\label{gen-fact-loadB}
Let $(e^P_k, \text{\sc reg})$ be a regular pending flip. Then
\begin{equation*}
  w(\delta_{\tilde E}^-(s(e^P_k))) + w(e^P_k) > 1 + R .
\end{equation*}
\end{fact}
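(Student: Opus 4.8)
The plan is to mirror the one-line proof of Fact~\ref{fact-flip-load} from the simple case, taking care of the gap between $\delta^-$ and $\delta^-_{\tilde E}$. First I would observe that, since the algorithm is stuck and $(e^P_k,\text{\sc reg})$ is a regular pending flip that is still in $P$, it cannot be a valid flip; otherwise the main loop would execute it in its first branch. Hence $w(\delta^-(s(e^P_k))) + w(e^P_k) > 1 + R$. In particular this rules out the degenerate case $W_0 = 0$ in the rule defining the repelled edges w.r.t.\ $P_{\le k}$, so $W_0 > 0$ and \eqref{gen-maximal-W} holds for $W = W_0$, i.e.
\begin{equation*}
  w\bigl(\delta^-_{\tilde E(P_{\le k-1})\cup E_{\ge W_0}}(s(e^P_k))\bigr) + w(e^P_k) > 1 + R .
\end{equation*}

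The main step is then to show the inclusion $\delta^-_{\tilde E(P_{\le k-1})\cup E_{\ge W_0}}(s(e^P_k)) \subseteq \delta^-_{\tilde E}(s(e^P_k))$, where $\tilde E = \tilde E(P_{\le\ell})$ is taken at the current (stuck) state; since all weights are positive, the displayed inequality then upgrades to $w(\delta^-_{\tilde E}(s(e^P_k))) + w(e^P_k) > 1 + R$, which is the claim. To establish the inclusion, take an edge $e'$ incoming to $s(e^P_k)$ that lies in the left-hand set. If $e' \in \tilde E(P_{\le k-1})$, then $e'$ is repelled by $s(e')$ w.r.t.\ $P_{\le k-1}$, and the monotonicity rule gives $e' \in \tilde E(P_{\le\ell}) = \tilde E$. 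Otherwise $w(e') \ge W_0$; both the (uncritical) rule (active when $W_0 > 1/3$) and the (critical) rule (active when $W_0 \le 1/3$) make $s(e^P_k)$ repel every incident edge of weight at least $W_0$, so $e'$ is repelled by $s(e^P_k) = t(e')$ w.r.t.\ $P_{\le k}$, hence by monotonicity w.r.t.\ $P_{\le\ell}$, and therefore $e' \in \tilde E$ because $\tilde E$ contains every edge repelled by either of its endpoints.

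The point that requires the most care --- and where the general case genuinely differs from the simple case --- is that the quantity $W_0$ in the definition of the repelled edges w.r.t.\ $P_{\le k}$ is evaluated against the \emph{current} orientation and list, not frozen at the moment $e^P_k$ was appended to $P$. Recognising this is what lets me avoid re-running the truncation bookkeeping used in Fact~\ref{not-repelled-B} and in the proof of Lemma~\ref{term-main}: I never need the edges witnessing \eqref{gen-maximal-W} to be the same edges that witnessed it when $e^P_k$ entered $P$, only that, right now, they are incoming at $s(e^P_k)$ and repelled by one of their endpoints. Beyond that, the only thing to double-check in a full write-up is the corner case $W_0 = 0$, which is precisely the situation where $e^P_k$ would be valid and is therefore excluded by the algorithm being stuck, so it is not a real obstacle.
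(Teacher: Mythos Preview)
Your proof is correct and follows essentially the same approach as the paper's one-line justification (``This fact follows directly from the definition of repelled edges and because there is no valid flip''). You have simply made explicit what the paper leaves implicit: the edges witnessing \eqref{gen-maximal-W} at the current state are all repelled by $s(e^P_k)$ via the (uncritical)/(critical) rules or by their own source via membership in $\tilde E(P_{\le k-1})$, and hence lie in $\tilde E$ because, when the algorithm is stuck, $\tilde E$ contains every edge repelled by either endpoint.
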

This fact follows directly from the definition of repelled edges and because there is no valid flip.
The next three facts are related to the set $Q$ or regular pending flips.
They state invariants that one would also intuitively expect, since they correspond
to the requirements for adding edges to $Q$ or as regular flips.
\begin{fact}\label{gen-fact-Q}
  Let $(e^P_k,\text{\sc raw})$ be a big raw pending flip ($w(e^P_k) > 0.5$). Then
  $e^P_k\in Q$, if and only if
  $w(e^P_k) \le 0.6$ and there is a non-loop edge $e'\in\delta^+_{\tilde E}(t(e^P_k))$ with
  $w(e^P_k) + w(e') \le 1$.
\end{fact}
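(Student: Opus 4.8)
The plan is to read both directions off the way the algorithm maintains $Q$, together with the stability of repelled edges already established in the proof of Lemma~\ref{term-main}. For the ``if'' direction, I would first note that every iteration of the main loop ends by running the inner loop that enlarges $Q$ until no further edge qualifies, and that during this inner loop neither the orientation nor $P$ (hence nor $\tilde E$) changes. So at the moment the algorithm gets stuck this inner loop has run to completion; if at that point $w(e^P_k)\le 0.6$ and there is a non-loop edge $e'\in\delta^+_{\tilde E}(t(e^P_k))$ with $w(e^P_k)+w(e')\le 1$, then $e'$ is an outgoing, non-loop edge of $t(e^P_k)$ repelled by $s(e')=t(e^P_k)$, and together with $1/2<w(e^P_k)\le 0.6$ and $(e^P_k,\text{\sc raw})\in P$ this is exactly the guard of the inner loop. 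Since the loop terminated, no qualifying edge remained, and as $e^P_k$ meets every clause of the guard except possibly $e^P_k\notin Q$, it must be that $e^P_k\in Q$.

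For the ``only if'' direction, suppose $e^P_k\in Q$ at the stuck time; say $e^P_k$ currently sits in the bucket $Q_j$, and consider the last iteration in which it was inserted into $Q_j$. At that moment $\ell=j$, so the guard held relative to the then-current list $P_{\le j}$: this gives $1/2<w(e^P_k)\le 0.6$ immediately (the weight of $e^P_k$ never changes), and a non-loop edge $e'\in\delta^+(t(e^P_k))$ repelled by $t(e^P_k)$ with $w(e^P_k)+w(e')\le 1$. It remains to argue that this $e'$ is still repelled by $t(e^P_k)$ at the stuck time. Since $e^P_k$ is still in $Q_j$, every flip executed since must have truncated the list to a level $m>j$: a flip truncating to level $m$ resets $Q_m$ to $\emptyset$ and discards every bucket $Q_i$ with $i>m$, so a truncation to level $\le j$ would have removed $e^P_k$ from $Q$; in particular $P_{\le j}$ has not changed. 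Moreover no vertex has turned from bad to good in between, since making a bad vertex $v$ good requires executing a valid regular flip with current target $v$, but $v$ repels every incident edge already w.r.t.\ $P_{\le 0}$, so the list would then be truncated all the way down to $P_{\le 0}$, again discarding $e^P_k$. As good vertices never become bad, the good/bad partition is frozen, so each intermediate flip meets the hypotheses of Fact~\ref{not-repelled-B} relative to every prefix up to its truncation level (which exceeds $j$), and the argument from the proof of Lemma~\ref{term-main} shows that edges repelled w.r.t.\ $P_{\le j}$ stay repelled. Hence $e'$ is still repelled by $s(e')=t(e^P_k)$, i.e.\ $e'\in\delta^+_{\tilde E}(t(e^P_k))$, with $w(e^P_k)+w(e')\le 1$, as required.

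I expect the only nonroutine point to be the persistence step in the second direction, namely checking that the precise bookkeeping of the $Q$-buckets under truncation forces both $P_{\le j}$ and the good/bad partition to remain unchanged for as long as $e^P_k$ sits in $Q_j$, so that the stability of repelled edges from Lemma~\ref{term-main} (via Fact~\ref{not-repelled-B}) genuinely applies to the prefix $P_{\le j}$. Everything else is immediate: edge weights are constant, and the guard of the inner loop --- apart from its $Q$-membership clause --- depends only on the frozen orientation and list, not on $Q$ itself.
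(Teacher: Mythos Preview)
Your plan matches the paper's proof: both directions are read off the $Q$-maintenance loop, and the ``only if'' direction hinges on showing that the witnessing edge $e'$ persists.

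There is, however, one small gap. You establish that $e'$ remains repelled by $t(e^P_k)$ w.r.t.\ $P_{\le j}$ and then write $s(e')=t(e^P_k)$ to conclude $e'\in\delta^+_{\tilde E}(t(e^P_k))$; but this equality assumes $e'$ itself was never flipped between the time it was witnessed and the stuck time, and the orientation is \emph{not} frozen (your final paragraph notwithstanding --- intermediate flips at truncation levels $>j$ do change the orientation). The paper handles this point separately, arguing that if $e'$ were flipped then the bucket $Q_{k'}$ would have been deleted. In your framework the same conclusion drops out of machinery you already invoke: by the argument from Lemma~\ref{term-main}, any executed flip is not repelled by its source vertex w.r.t.\ the truncation prefix $P_{\le m}$, whereas your persistence step gives that $e'$ \emph{is} repelled by $s(e')=t(e^P_k)$ w.r.t.\ $P_{\le j}$ and hence (by monotonicity, since $j<m$) also w.r.t.\ $P_{\le m}$. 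This contradiction shows $e'$ cannot be the flipped edge, so $e'$ stays in $\delta^+(t(e^P_k))$. Add this one line and the proof is complete.
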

\begin{proof}
If $w(e^P_k) \le 0.6$ and there is a non-loop edge $e'\in\delta^+_{\tilde E}(t(e^P_k))$ with
$w(e^P_k) + w(e') \le 1$, but $e^P_k\notin Q$, then by definition of the algorithm $e^P_k$ would have
been added to $Q$ in the last iteration of the main loop. A contradiction.

Suppose that $e^P_k \in Q_{k'}$. When it was added to $Q_{k'}$ there was such a non-loop edge
$e'\in\delta^+(t(e^P_k))$ that was repelled by $t(e^P_k)$ w.r.t. $P_{\le k'}$.
If this edge was flipped (and therefore removed from $\delta^+(t(e^P_k))$, then $Q_{k'}$
would have been deleted, which did not happen.
What is left to show is that $e'$ is still repelled by $s(e^P_k)$.
If $e'$ is not repelled by $s(e^P_k)$ w.r.t. $P_{\le k'}$ anymore,
there would have to be some edge $e''$ that was removed from $\delta^-(s(e^P_k))$ and which
was also repelled by $s(e^P_k)$ w.r.t. $P_{\le k'}$. Then, however, $Q_{k'}$ would have been
removed.
\end{proof}
\begin{fact}\label{gen-fact-reg-big}
  Let $e\in E$ be big, i.e., $w(e) > 0.5$,
  and $P_k = (e,\text{\sc raw})$ a raw pending flip for $e$. Then 
  $(e,\text{\sc reg}) \in P$, if and only if $|\delta^-_B(s(e))| \le 1$ and (a) $e\in Q$, or (b) $w(\delta^-_B(s(e))) \le R$ and $\delta^-_F(s(e))\subseteq Q$.
\end{fact}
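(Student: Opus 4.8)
The statement to prove is Fact~\ref{gen-fact-reg-big}: that for a big edge $e$ with a raw pending flip $P_k = (e,\textsc{raw})$, the regular flip $(e,\textsc{reg})$ is in $P$ if and only if $|\delta^-_B(s(e))|\le 1$ and either $e\in Q$ or ($w(\delta^-_B(s(e)))\le R$ and $\delta^-_F(s(e))\subseteq Q$). This is a bookkeeping invariant that mirrors the \IsRegularAddable\ subroutine in Algorithm~\ref{alg-general}, so the proof will be a careful tracing of the algorithm's behaviour, much like the proof of Fact~\ref{gen-fact-Q}.

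\emph{Forward direction.} Suppose $(e,\textsc{reg})\in P$. I would locate the iteration in which it was appended. At that moment \IsRegularAddable\ returned True in the big-edge branch, which required $|\delta^-_B(s(e))|\le 1$ at that time and one of the two disjuncts (a) $e\in Q$ or (b) $w(\delta^-_B(s(e)))\le R$, $\delta^-_B(s(e))\subseteq\tilde E$, and $\delta^-_F(s(e))\subseteq Q$ at that time. The work is then to argue that these conditions persist up to the present (the moment the algorithm is stuck). The key point is monotonicity of the pending list: any change to the current orientation near $s(e)$, or any change to the repelled-edges structure at $s(e)$, would trigger removal of pending flips at index $\ell'+1$ and above, and since $(e,\textsc{reg})$ is still in $P$, no such change has occurred after it was added. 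In particular $|\delta^-_B(s(e))|$ has not changed (by Fact~\ref{cor-at-most-2} it is still $\le 2$, but a flip of a big edge into or out of $s(e)$ would have removed $(e,\textsc{reg})$), the membership $\delta^-_B(s(e))\subseteq\tilde E$ is preserved since an edge leaving $\tilde E$ at $s(e)$ would again trigger a removal, and $\delta^-_F(s(e))\subseteq Q$ together with $e\in Q$ are preserved for the same reason that $Q_{k'}$ is only emptied on flips and reconstructed otherwise (exactly as in the proof of Fact~\ref{gen-fact-Q}). One subtlety: $F$ itself is a derived set depending on which pending flip towards a vertex is smallest, so I would check that $\delta^-_F(s(e))$ can only shrink or stay put without a pending-flip removal, so the inclusion into $Q$ survives.

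\emph{Backward direction.} Suppose the right-hand condition holds currently but $(e,\textsc{reg})\notin P$. I would use that the algorithm is stuck: by Lemma~\ref{stuck-main}'s contrapositive setup, no flip can be added. But since $P_k=(e,\textsc{raw})\in P$ and $(e,\textsc{reg})\notin P$, the guard "$(e,\textsc{reg})\in P$ or $(e,\textsc{raw})\notin P$" in \IsRegularAddable\ fails (i.e.\ does not force a False return), so \IsRegularAddable\ would reach the big-edge branch; the current hypotheses say $|\delta^-_B(s(e))|\le 1$ (so the $=2$ early-return is skipped) and disjunct (a) or (b) holds (matching the two True cases, modulo the extra $\delta^-_B(s(e))\subseteq\tilde E$ clause — note $\delta^-_B(s(e))\subseteq\tilde E$ follows automatically here, since a big incoming edge not in $\tilde E$ could itself be added to $P$, contradicting stuckness). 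Hence \IsRegularAddable\ returns True for $e$, so $e$ could be added as a regular flip — contradicting that the algorithm is stuck. Therefore $(e,\textsc{reg})\in P$.

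\emph{Main obstacle.} The delicate part is the persistence argument in the forward direction: precisely, making sure that every one of the side conditions ($|\delta^-_B(s(e))|\le 1$, $\delta^-_B(s(e))\subseteq\tilde E$, $\delta^-_F(s(e))\subseteq Q$, $e\in Q$, and $w(\delta^-_B(s(e)))\le R$) can only be violated by an event (a flip at or near $s(e)$, or a structural change in repelled edges) that would have forced the removal of $(e,\textsc{reg})$ from $P$. The $F$-related and $Q$-related clauses are the trickiest since $F$ and $Q$ are dynamically recomputed; I expect to reuse, almost verbatim, the reasoning from the proof of Fact~\ref{gen-fact-Q} about when $Q_{k'}$ is emptied, and to invoke Fact~\ref{not-repelled-B} to control how the repelled-edge sets evolve.
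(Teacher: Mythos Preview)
Your proposal is correct and follows essentially the same approach as the paper: the backward direction contradicts stuckness by showing \textsc{IsRegularAddable} would return True (with the extra clause $\delta^-_B(s(e))\subseteq\tilde E$ handled via the raw flip making $s(e)$ repel all big edges), and the forward direction argues that the conditions present when $(e,\textsc{reg})$ was appended persist because any violating event would have truncated $P$ past that entry. Your treatment of the $F$- and $Q$-persistence is in the right spirit; the paper concretely shows that the current element of $\delta^-_F(s(e))$ was already in $F$ (and hence in $Q$) at insertion time, which is exactly what your ``$\delta^-_F(s(e))$ can only shrink or stay put'' claim amounts to.
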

\begin{proof}
If $|\delta^-_B(v)|\le 1$ and (a), but $(e,\text{\sc reg}) \notin P$, then this is a regular
flip that can be added to $P$, which contradicts the assumption that the algorithm is stuck.
A small twist has to be considered,
if $|\delta^-_B(v)|\le 1$ and (b), but $(e,\text{\sc reg}) \notin P$:
Then the flip can only be added, if $\delta^-_B(s(e)) \subseteq \tilde E$.
Fortunately, this is the case. Because of the raw pending flip,
all edges in $\delta^-_B(s(e))$ are repelled by $s(e)$ and $\tilde E$ contains
all edges $e'$ that are repelled by $t(e')$ (not only those repelled by $s(e')$, as elaborated earlier).
Note that during execution of the algorithm (when it is not stuck) this is not a tautology.

Now suppose that the pending flip $P_{k'} = (e,\text{\sc reg})$ exists ($k' > k$).
We will check that the conditions hold. We do know that 
$|\delta^-_B(v)| \le 1$ and (a) or (b) at the time $P_{k'}$ was added.
First, consider the case where (a) was true.
Since $Q_{\le k'-1} \subseteq Q_{\le\ell} = Q$, we have (a) is still true.
Since $P_{k'}$ was added,
no big edges can have been flipped towards $s(e)$, because they are repelled by $s(e)$.
Therefore, $|\delta^-_B(v)| \le 1$ still holds.
Next, assume (b) was true when $P_{k'}$ was added.
As before, $|\delta^-_B(v)| \le 1$ holds.
For the same reason, $w(\delta^-_B(v)) \le R$ must still be true.
Finally, consider the condition $\delta^-_F(s(e))\subseteq Q$.
Assume that $\delta^-_F(s(e)) \neq \emptyset$ and let $\{e_F\} = \delta^-_F(s(e))$.
This edge has already been in $\delta^-(s(e))$ when $P_{k'}$ was added (see above).
It also must have been in $P$ already, because otherwise $e'\in B\setminus \tilde E$,
and $P_{k'}$ cannot be added when such an edge exists.
Furthermore, $e'\in F$ has been true, since if it is not in $F$, then there would have been
a smaller flip towards $s(e')$ and this would remain true. $e'$, however, is in $F$ and
thus, it has been in $F$ when $P_{k'}$ was added. By condition of adding $P_{k'}$
this means $e'$ has been in $Q$ and therefore still is.
We conclude, $\delta^-_F(s(e)) = \{e'\} \subseteq Q$.
\end{proof}
\begin{fact}\label{gen-fact-reg-small}
  Let $e\in E$ be small, i.e., $1/3 < w(e) \le 1/2$.
  Let $P_k = (e,\text{\sc raw})$ be a raw pending flip. Then 
  $(e,\text{\sc reg}) \in P$, if and only if $|\delta^-_B(s(e))| \le 1$ or there
  exists $e_B \in \delta^-_B(s(e))$ with $w(e) + w(e_B)\le 1$.
\end{fact}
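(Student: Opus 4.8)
The statement is the small-edge analogue of Fact~\ref{gen-fact-reg-big}, and I would prove it the same way, exploiting that the criterion for adding a \textsc{reg} flip of a \emph{small} edge in \texttt{IsRegularAddable} is much simpler than for big edges: besides $(e,\text{\sc raw})\in P$ and $(e,\text{\sc reg})\notin P$, it depends only on whether $|\delta^-_B(s(e))|\le 1$ or some $e_B\in\delta^-_B(s(e))$ satisfies $w(e)+w(e_B)\le 1$. So I would split into the two implications of the equivalence. For the ``if'' direction, assume the stated size condition holds but, toward a contradiction, $(e,\text{\sc reg})\notin P$. Since $P_k=(e,\text{\sc raw})$ is in $P$ and the size condition holds at the current moment, \texttt{IsRegularAddable}$(e,P_{\le\ell},Q_{\le\ell})$ returns $\mathrm{True}$; hence when the main loop scans the edges in non-decreasing weight order it either has already appended a pending flip or appends $(e,\text{\sc reg})$ -- either way the algorithm is not stuck, contradicting the standing assumption in the proof of Lemma~\ref{stuck-main}. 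As remarked in the paper, no re-verification of ``repelled by $t(e)$, not by $s(e)$'' is needed when a raw flip becomes a regular flip, so this direction is immediate.

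\textbf{The ``only if'' direction.} Suppose $P_{k'}=(e,\text{\sc reg})$ with $k'>k$. At the moment $P_{k'}$ was appended, \texttt{IsRegularAddable} returned $\mathrm{True}$, so at that time either $|\delta^-_B(s(e))|\le 1$ or there was $e_B\in\delta^-_B(s(e))$ with $w(e)+w(e_B)\le 1$. Since $P_{k'}$ is still present, the prefix $P_{\le k'}$ (hence $P_k$) has been continuously a prefix of $P$ ever since, so by rule (raw) applied to $P_k$ every big edge incident to $s(e)$ has been repelled by $s(e)=s(e^P_k)$ throughout this interval. I would now argue that $\delta^-_B(s(e))$ cannot have changed in a way that breaks the witnessed condition: if a big edge $e''\in\delta^-_B(s(e))$ were flipped \emph{away} from $s(e)$, then at that execution step $e''$ is repelled by $t(e'')=s(e)$ w.r.t.\ $P_{\le k}$, so the minimal reset index $j$ is at most $k<k'$ and $P_{k'}$ would be deleted -- contradiction. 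In particular the witness $e_B$ survives, and since edge weights never change, $w(e)+w(e_B)\le 1$ still holds; and no big edge leaves $\delta^-_B(s(e))$, so $|\delta^-_B(s(e))|\le 1$ is preserved when it held.

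\textbf{Main obstacle.} The delicate point -- exactly as in the proof of Fact~\ref{gen-fact-reg-big} -- is ruling out that $\delta^-_B(s(e))$ \emph{grows} (a big edge $e''$ flipped \emph{towards} $s(e)$) while $P_{k'}$ remains in the list, which could in principle destroy the ``$|\delta^-_B(s(e))|\le 1$'' alternative without restoring the other. Here I would use that flipping $e''$ towards $s(e)$ requires a raw flip $(e'',\text{\sc raw})$ for $e''$ somewhere earlier in $P$, and such a raw flip could only have been added when $e''$ was \emph{not} repelled by $s(e'')=s(e)$; but while $P_k$ is present $s(e)$ repels all big edges incident to it, so $(e'',\text{\sc raw})$ must sit at an index $m<k$, and then one traces through the same indices/reset bookkeeping as in Fact~\ref{gen-fact-reg-big} (together with Fact~\ref{cor-at-most-2} bounding $|\delta^-_B(s(e))|\le 2$) to reach a contradiction with $P_{k'}$ surviving. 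I expect this index-chasing to be the only part requiring genuine care; everything else is a short ``the algorithm would not be stuck'' argument plus the observation that weights are constant.
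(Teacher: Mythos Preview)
Your proposal is correct and follows the same two-direction split as the paper. The one organizational difference is in the ``only if'' part: you argue that $\delta^-_B(s(e))$ is completely frozen while $P_{k'}$ survives (neither shrinks nor grows), whereas the paper only rules out big edges being flipped \emph{towards} $s(e)$ and then handles a possible flip of the witness $e_B$ \emph{away} from $s(e)$ by observing that $|\delta^-_B(s(e))|\le 1$ would then hold instead. Your reset-index argument that a big edge cannot leave $\delta^-_B(s(e))$ without deleting $P_{k'}$ is valid (and actually shows the paper's ``if it was flipped'' subcase is vacuous); conversely, what you flag as the ``main obstacle'' (a big edge entering) is precisely what the paper dispatches in a single line using that $s(e)$ repels all big edges, so the index-chasing you anticipate is not really needed.
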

\begin{proof}
Clearly, if $|\delta^-_B(s(e))| \le 1$ or there is some
$e_B \in \delta^-_B(s(e))$ with $w(e) + w(e_B)\le 1$, but $(e,\text{\sc reg})\notin P$, then
this pending flip could be added, which contradicts the assumption that the algorithm is stuck.

Now suppose that $P_{k'} = (e,\text{\sc reg})$ is a regular pending flip for $e$.
As long as $P_{k'}$ remains in $P$, no big edge can be flipped towards $s(e)$, since
they are all repelled by $s(e)$.
If $|\delta^-_B(s(e))| \le 1$ at the time $P_{k'}$ was added, then this is still true.
Otherwise, there was a big edge $e_B \in \delta^-_B(s(e))$ with $w(e) + w(e_B)\le 1$
at that time. If it is still in $\delta^-(s(e))$, then we are done. If, on the other hand,
it was flipped then at most one big edge remains that is oriented towards $s(e)$, i.e., 
$|\delta^-_B(s(e))| \le 1$.
\end{proof}
\subsection{Proof of Claim~\ref{gen-feas1} ($|\delta^-_{B}(v)| \le 1$)}
\subsubsection*{Case 1: $v$ is a bad vertex.}
Note that no pending flip can go towards a bad vertex, hence
$\delta^+_{F}(v) = \emptyset$.
This implies that if $C$ contains any edge from $F$, it must be in $\delta^-(v)$.
Therefore, $z(C) \le z(\delta^-_F(v)) - w(\delta^-_F(v)) + \beta$.
Thus,
\begin{multline*}
  y_v = z(\delta^-(v)) + b_v - \mu \ge z(\delta^-_F(v)) - w(\delta^-_F(v)) + w(\delta_{\tilde E}^-(v)) - (1 - R) - \mu \\
  > z(\delta^-_F(v)) - w(\delta^-_F(v)) + 1 + R - 1 + R - \mu \ge z(\delta^-_F(v)) - w(\delta^-_F(v)) +  \beta \ge z(C) .
\end{multline*}
For the first inequality we distinguish between $|\delta^-_H(v)| = 0$ and $|\delta^-_H(v)| = 1$.
In the former, we have that $b_v \ge - (1 - R)$ and $z(e) \ge w(e)$ for all $e\in\delta^-_{\tilde E}(v)$.
In the latter, we have that $b_v = 0$ and $z(\delta^-_H(v)) \ge w(\delta^-_H(v)) - (1 - R)$
(and $z(e) \ge w(e)$ for all $e\in\delta^-_{\tilde E\setminus H}(v)$).
For the second to last inequality, we use that $w(\delta_{\tilde E}^-(v)) = w(\delta^-(v)) > 1 + R$ by the definition of a bad vertex.
\subsubsection*{Case 2: $v$ is good and target of a tiny flip.}
Since $v$ is target of a smaller flip than any big edge,
$\delta^+_F(v) = \emptyset$ and therefore again
$z(C) \le z(\delta_F^-(v)) - w(\delta_F^-(v)) + \beta$.
Moreover, since $v$ is target of a tiny flip, i.e., of an edge of weight at most $1/3$, it must be that
$w(\delta^-_{\tilde E}(v)) > 1 + R - 1/3 = R + 2/3$. Thus,
\begin{multline*}
  y_v = z(\delta^-(v)) + b_v - (\beta - 1)
  \ge z(\delta_F^-(v)) - w(\delta_F^-(v)) + w(\delta_{\tilde E}^-(v)) - (1 - R) - (\beta - 1) \\
  \ge z(\delta_F^-(v)) - w(\delta_F^-(v)) + 2 R + 2/3 - \beta
  \ge z(\delta_F^-(v)) - w(\delta_F^-(v)) + \beta \ge z(C) .
\end{multline*}
For the first inequality, we argue in the same way as in the previous case.
\subsubsection*{Case 3: $v$ is good, not target of a tiny flip, but target of a small flip.}
Like in the last case we have $\delta^+_F(v) = \emptyset$.
Let $e_S$ be the smallest edge with a pending flip towards $v$.
If $|\delta_{Q}^-(v)| = 1$, then
\begin{equation*}
  y_v \ge w(\delta^-_{\tilde E}(v)) - \underbrace{(1 - 2 R + w(\delta^-_Q(v)))}_{= b_v}
  > 1 + R - w(e_S) - (1.6 - 2 R) \ge 3 R - 1.1 \ge \beta \ge z(C) .
\end{equation*}
We can therefore assume that $|\delta_{Q}^-(v)| = 0$.
If in addition $|\delta^-_{(F\setminus Q)\cup H}(v)| = 0$ holds, then
\begin{equation*}
  y_v \ge z(\delta^-(v)) \ge 1 + R - w(e_S) \ge 1 + R - 0.5 \ge \beta \ge z(C) .
\end{equation*}
In the remainder of Case~3, assume that $|\delta_{(F\setminus Q)\cup H}^-(v)| = 1$.
In particular, if there is an edge from $F$ in $C$ or $\delta^-(v)$, then it is not in $Q$
and therefore has a $z$-value of $1$.
Moreover, $z(\delta^-_B(v)) + b_v = R$ by a simple case distinction.
\paragraph{Case 3.1: $v$ is critical.}
If $C\cap F \neq \emptyset$, then
\begin{multline*}
  y_v \ge z(\delta^-(v)) + b_v + (\beta - 1) 
  \ge z(\delta^-_F(v)) - w(\delta^-_F(v)) + 1 + R - w(e_S) - (1 - R) + (\beta - 1) \\
  \ge z(\delta^-_F(v)) - w(\delta^-_F(v)) + 2 R - 1.5 + \beta
  \ge z(\delta^-_F(v)) - w(\delta^-_F(v)) + 0.5 + 0.5\beta \ge z(C) .
\end{multline*}
If $C\cap F = \emptyset$,
and $\delta_{\tilde E}^-(v)$ contains a small edge,
\begin{equation*}
  y_v = z(\delta^-(v)) + b_v + (\beta - 1) \ge \underbrace{z(\delta^-_B(v)) + b_v}_{= R} + \underbrace{z(\delta^-_{\tilde E\setminus B}(v))}_{> 1/3 > 1 - R} + (\beta - 1)
  \ge \beta \ge z(C) .
\end{equation*}
If $C\cap F = \emptyset$,
and $\delta_{\tilde E}^-(v)$ contains no small edge,
\begin{multline*}
  y_v = z(\delta^-(v)) + b_v + (\beta - 1) \ge z(\delta^-_B(v)) + b_v + z(\delta^-_{\tilde E\setminus B}(v)) + (\beta - 1) \\
  \ge R + \beta (R - 0.5) + \beta - 1 \ge \beta \ge z(C) .
\end{multline*}
\paragraph{Case 3.2: $v$ is uncritical and $\delta_{\tilde E}^-(v)$ contains a small edge.}
Note that $C\cap T \cap \tilde E\subseteq \delta^-(v)$ and, in particular, $z(C\cap T) \le z(\delta^-_T(v))$,
since otherwise there would be a tiny flip that
can be added to $P$.
If $C$ contains no small edge from $\tilde E$ or $C\cap F = \emptyset$, then
\begin{equation*}
  y_v \ge z(\delta^-_B(v)) + z(\delta^-_T(v)) + 1/3 + b_v \ge 1 + z(\delta^-_T(v)) \ge z(C) .
\end{equation*}
Now assume that $C$ contains $\delta^-_F(v) \neq\emptyset$ and a small edge $e\in\tilde E$.
If $\delta_{\tilde E}^-(v)$ contains exactly one small edge $e'$ and $w(e')\le w(e)$,
\begin{multline*}
  y_v \ge z(\delta^-_F(v)) + \beta (w(\delta_{\tilde E}^-(v)) - w(e') - w(\delta^-_F(v))) + w(e') - (1 - R) \\
  \ge 1 + \beta (1 + R - 0.5 - w(e') - w(\delta^-_F(v))) + w(e') - \beta (R - 0.5) \\
  = 1 + \beta (1 - w(e') - w(\delta^-_F(v))) + w(e')
  \ge 1 + \beta (1 - w(e) - w(\delta^-_F(v))) + w(e) \ge z(C) .
\end{multline*}
If $\delta_{\tilde E}^-(v)$ contains exactly one small edge $e'$ and $w(e') > w(e)$,
then since $e$ is repelled, it must that $w(e) \ge w(e_S)$. Thus,
\begin{multline*}
  y_v \ge z(\delta^-_F(v)) + \beta (w(\delta_{\tilde E}^-(v)) - w(e') - w(\delta^-_F(v))) + w(e') - (1 - R) \\
  \ge 1 + \beta (1 + R - w(e_S) - w(e') - w(\delta^-_F(v))) + w(e') - \beta (R - 0.5) \\
  = 1 + \beta (1.5 - w(e_S) - w(e') - w(\delta^-_F(v))) + w(e') \\
  \ge 1 + \beta (1 - w(e_S) - w(\delta^-_F(v))) + 0.5
  \ge 1 + \beta (1 - w(e) - w(\delta^-_F(v))) + w(e) \ge z(C) .
\end{multline*}
If $\delta_{\tilde E}^-(v)$ contains at least two small edges and one of them is $e$,
\begin{equation*}
  y_v \ge 1 + z(\delta^-_T(v)) + w(e) + 1/3 - (1 - R) \ge 1 + w(e) + z(\delta^-_T(v)) \ge z(C) .
\end{equation*}
If $\delta^-(v)$ contains at least two small edges and none of them is $e$, then $e\in\delta^+_{\tilde E}(v)$. Furthermore, $w(e) + w(\delta^-_F(v))\le 1$. If $w(\delta^-_F(v))\le 0.6$, then the edge
would be in $Q$, which is not the case. This implies that $w(\delta^-_F(v)) > 0.6$
and therefore $w(e) < 0.4$. Thus,
\begin{equation*}
  y_v \ge 1 + z(\delta^-_T(v)) + 2/3 - (1 - R) \ge 1 + 0.4 + z(\delta^-_T(v)) \ge z(C) .
\end{equation*}
\paragraph{Case 3.3: $v$ is not critical and $\delta_{\tilde E}^-(v)$ contains no small edge.}
If $C\cap F \neq \emptyset$, then
\begin{multline*}
  y_v \ge z(\delta^-(v)) - (1 - R) \ge z(\delta^-_F(v)) + \beta (1 + R - 0.5 - w(\delta^-_F(v)) - (1 - R)
  \ge 1 + \beta (1 - w(\delta^-_F(v)) \ge z(C) .
\end{multline*}
We now assume that $C$ contains no edge from $F$.
If $C$ contains exactly one edge that is small or big (not from $F$),
\begin{equation*}
  y_v \ge z(\delta^-_B(v)) + z(\delta^-_T(v)) + b_v
  = R + z(\delta^-_T(v)) \ge z(C) .
\end{equation*}
If $C$ contains a big edge $e_B\notin F$ and a small edge $e\in\tilde E$,
then $w(e) \ge w(e_S)$, since otherwise $e$ would not be repelled and can be added to $P$. Thus,
\begin{multline*}
  y_v \ge z(\delta^-_B(v)) + z(\delta^-_T(v)) + b_v
  \ge R + \beta (1 + R - w(\delta^-_B(v)) - w(e_S))
  \ge R + \beta (R - w(e)) \\
  = R + \beta (0.5 - w(e)) + \underbrace{\beta (R - 0.5)}_{\ge (1 - R)}
  \ge 1 + \beta (1 - w(e_B) - w(e)) \ge z(C) .
\end{multline*}
What is left to resolve is when $C\cap \tilde E$ contains $2$ small edges (in particular, no big edges). Then these edges must be of size at least $w(e_S)$ (again, they could be added to $P$, otherwise). Therefore,
\begin{multline*}
  y_v \ge z(\delta_B^-(v)) + z(\delta^-_{\tilde E\setminus B}(v)) + b_v \ge R + \beta (w(\delta_{\tilde E}^-(v)) - w(\delta_B^-(v))) \\
  \ge R + \beta (w(\delta_{\tilde E}^-(v)) - 1) \ge R + \beta (R - w(e_S)) .
\end{multline*}
It follows that
\begin{multline*}
  z(C) \le 2 \cdot w(e_S) + \beta (1 - 2 \cdot w(e_S)) \le y_v + 2\cdot w(e_S) - R + \beta (1 - R - w(e_S)) \\
  \le y_v + 2\cdot 0.5 - R + \beta (1 - R - 0.5)
  = y_v + 1 - R + \beta (0.5 - R) \le y_v .
\end{multline*}

\subsubsection*{Case 4: $v$ is good and not target of a small/tiny flip.}
Since there can be only one big pending flip towards $v$ and there is no small or tiny flip, we have
that $\delta^+_F(v) = \delta^+_P(v)$.
If $|\delta_{F}^+(v)| = |\delta_{P}^+(v)| = 0$, then $v$ does not repel any edges. In particular, $\delta_{F}^-(v) = \emptyset$ and therefore $b_v = 0$.
Moreover, every $e\in\delta(v)$ with $z_e > 0$ must be in $\delta^-(v)$. This implies
$z(C) \le z(\delta^-(v)) \le y_v$.
We assume in the remainder that $|\delta_{F}^+(v)| = 1$ (note that
$|\delta_{F}^+(v)| \le 1$ always holds).
\paragraph{Case 4.1: $|\delta_{Q}^+(v)| = 0$.}
If $|\delta_{(F\setminus Q)\cup H}^-(v)| = 1$, then by Fact~\ref{gen-fact-reg-big}
there is no regular pending flip towards $v$
Hence, $v$ repels only big edges and
\begin{equation*}
  y_v \ge z(\delta^-_F(v)) + z(\delta^-_{\tilde E\setminus F}(v)) + b_v \ge \begin{cases}
    1 + z(\delta^-_{\tilde E\setminus B}(v)) + 0 \ge z(C) &\text{ if } |\delta^-_{F\setminus Q}(v)| = 1,\\
    R + z(\delta^-_{\tilde E\setminus B}(v)) + (1-R) \ge z(C) &\text{ if } |\delta^-_H(v)| = 1 .
  \end{cases}
\end{equation*}
Assume now that $\delta_{(F\setminus Q)\cup H}^-(v) = \emptyset$.
We will argue that also $\delta^-_{Q\cap F}(v) = \emptyset$:
Suppose toward contradiction there is an edge $e_Q\in\delta^-_{F\cap Q}(v)$.
Let $k$ be the index at which $e_Q$ was added to $Q$, that is to say, $e_Q\in Q_{k}$.
Since then $P_{\le k}$ has not changed or else $Q_k$ would have been deleted.
Also, $e_Q$ was already in $\delta^-(v)$, because $e_Q$ is a flip in $P_{\le k}$ and
if the orientation had changed, $Q_k$ again would have been deleted.
Because $e_Q$ was added to $Q_k$, there was an outgoing non-loop edge $e'\in \delta(v)$ with
$w(e') + w(e_Q) \le 1$ that was repelled by $v$. Since the edge
$\delta^+_{F\setminus Q}(v)$ is
the only pending flip towards $v$, it must be a regular pending flip $P_{k'} = (e_F,\text{\sc reg})$
with $k' \le k$. However, at the point of time when $P_{k'}$ was added, $e_Q$ was not in $Q$, yet,
so it was either not yet in $P$ and therefore in $B\setminus \tilde E$ or it was in $P$
and therefore also in $F$. Either way,
the regular flip $P_{k'}$ would not have been added, since $\delta^-_B(v) \nsubseteq\tilde E$ or
$\delta^-_F(v) \nsubseteq Q$ at that time.
A contradiction.

Together we get $|\delta_{F\cup H}^-(v)| = 0$ and $b_v = 1 - R$. Furthermore, 
by Fact~\ref{gen-fact-reg-big} it must be that
$\delta^+_F(v)$ is a regular pending flip towards $v$.
If $v$ is not critical,
\begin{multline*}
  y_v \ge z(\delta^-(v)) - w(\delta_{\tilde E}^-(v)) + w(\delta_{\tilde E}^-(v)) + (1 - R)
  \ge z(\delta^-(v)) - w(\delta_{\tilde E}^-(v)) + 1 + R - w(\delta^+_F(v)) + (1 - R) \\
  = z(\delta^-(v)) - w(\delta_{\tilde E}^-(v)) + z(\delta^+_F(v)) + (1 - w(\delta^+_F(v))) \ge z(C) .
\end{multline*}
If $v$ is critical,
\begin{equation*}
  y_v \ge 1 + R - w(\delta^+_F(v)) + (1 - R) + (\beta - 1)
  \ge \beta + z(\delta^+_F(v)) - w(\delta^+_F(v)) \ge z(C) .
\end{equation*}
\paragraph{Case 4.2: $|\delta_{Q}^+(v)| = 1$.}
$\delta_{Q}^+(v)$ must be a regular flip by Fact~\ref{gen-fact-reg-big}.
Also, $b_v\ge - (1 - R) \cdot |\delta^-_F(v)| + (1 - 2 R + w(\delta^+_Q(v)))$.
The two cases $|\delta^-_H(v)| = 0$ and $|\delta^-_H(v)| = 1$ work nearly the same way, but
for clarity first assume that $|\delta^-_H(v)| = 0$. This means that all edges $e\in\delta^-_{\tilde E}(v)$ have $z(e) \ge w(e)$.
If $v$ is not critical, we argue that
\begin{multline*}
  y_v \ge z(\delta^-(v)) - w(\delta_{\tilde E}^-(v)) + 1 + R - w(\delta^+_Q(v)) - (1 - R) + (1 - 2 R + w(\delta^-_Q(v))) \\
  = z(\delta^-(v)) - w(\delta_{\tilde E}^-(v)) + 1 \ge z(C) .
\end{multline*}
Here we use that when $z(e) > w(e)$ then either $e$ is tiny and therefore must be in $\delta^-(v)$ or else the flip $e$ could be added to $P$ or $e$ is in $F\setminus Q$ and therefore also in $\delta^-(v)$.
On the other hand, if $v$ is critical, then
\begin{multline*}
  y_v \ge z(\delta_F^-(v)) - w(\delta_F^-(v)) + 1 + R - w(\delta^+_Q(v)) - (1 - R) + (1 - 2 R + w(\delta^-_Q(v))) + (\beta - 1) \\
  = z(\delta_F^-(v)) - w(\delta_F^-(v)) + \beta \ge z(C) .
\end{multline*}
Now assume that $|\delta^-_H(v)| = 1$. In particular, $|\delta^-_F(v)| = 0$.
If $v$ is not critical, then
\begin{multline*}
  y_v
  \ge z(\delta_T^-(v)) - w(\delta_{\tilde E\cap T}^-(v)) + w(\delta^-_{\tilde E}(v)) - (1 - R) \cdot |\delta^-_H(v)| + (1 - 2 R + w(\delta^-_Q(v))) \\
  \ge z(\delta_T^-(v)) - w(\delta_{\tilde E\cap T}^-(v)) + 1 + R - w(\delta^+_Q(v)) + (- R + w(\delta^-_Q(v))) \\
  = z(\delta_T^-(v)) - w(\delta_{\tilde E\cap T}^-(v)) + 1 \ge z(C) .
\end{multline*}
Finally, if $v$ is critical, then
\begin{multline*}
  y_v \ge w(\delta^-_{\tilde E}(v)) - (1 - R) \cdot |\delta^-_H(v)| + (1 - 2 R + w(\delta^-_Q(v))) + (\beta - 1) \\
  \ge 1 + R - w(\delta^+_Q(v)) + (- R + w(\delta^-_Q(v))) + (\beta - 1)
  = \beta \ge z(C) .
\end{multline*}
\subsection{Proof of Claim~\ref{gen-feas2} ($|\delta^-_{B}(v)| = 2$)}
\subsubsection*{Case 1: $v$ is a bad vertex}
Note that no pending flip can go towards a bad vertex, hence
by definition of $F\subseteq P$ it holds that $\delta_{F}^+(v) = \emptyset$ and therefore
$z(C) \le z(\delta^-_F(v)) - w(\delta^-_F(v)) + \beta$.
Moreover, all edges $e \in \delta^-_{E\setminus H}(v)$ have $z(e)\ge w(e)$. Thus,
\begin{multline*}
  y_v \ge z(\delta^-(v)) + b_v - \mu \ge z(\delta^-_{F}(v)) - w(\delta^-_{F}(v)) + w(\delta_{\tilde E}^-(v)) \underbrace{-|\delta_{H}^-(v)|\cdot(1 - R)}_{\le z(\delta^-_H(v)) - w(\delta^-_H(v))} \underbrace{- |\delta^-_{B\setminus H}(v)| \cdot (1 - R)}_{\le b_v} - \mu \\
  > z(\delta^-_{F}(v)) - w(\delta^-_{F}(v)) + \underbrace{1 + R - 2 \cdot (1 - R) - \mu}_{\ge \beta}
  \ge z(C) .
\end{multline*}

\subsubsection*{Case 2: $v$ is good and target of a tiny flip}
Note that by definition of $F$, $\delta_{F}^+(v)$ must be empty.
Let $e_T$ be the tiny edge with a flip towards $v$.
For all $e\in \delta^-_{F\setminus Q}(v)$ we have $w(e) > 0.6$ or $w(e) > 1 - w(e_T) \ge R > 0.6$
(otherwise $e$ would be in $Q$ by Fact~\ref{gen-fact-Q}).
We will make frequent use of the inequality $w(\delta^-_{\tilde E}(v)) > 1 + R - w(e_T) \ge R + 2/3$.
\paragraph{Case 2.1: $|\delta_{(F\setminus Q)\cup H}^-(v)| = 0$.}
Note that
\begin{equation*}
  b_v \ge - \sum_{e\in \delta^-_{F\cap Q}(v)} [1 - 2 R + w(e)]
      \ge 2 \cdot (2 R - 1.6) .
\end{equation*}
Thus,
\begin{equation*}
  y_v \ge w(\delta_{\tilde E}^-(v)) - (\beta - 1) + 4R - 3.2
  > R + 2/3 - \beta + 4R - 2.2
  \ge \beta \ge z(C) .
\end{equation*}
\paragraph{Case 2.2: $|\delta_{(F\setminus Q)\cup H}^-(v)| = 1$.}
Then
\begin{multline*}
  y_v \ge z(\delta_F^-(v)) - w(\delta_F^-(v)) + w(\delta_{\tilde E}^-(v)) \underbrace{-|\delta_{H}^-(v)|\cdot(1 - R)}_{\le z(\delta^-_H(v)) - w(\delta^-_H(v))} \underbrace{- |\delta_{F\setminus Q}^-(v)|\cdot(1 - R) - (1.6 - 2 R)}_{\le b_v} - (\beta - 1) \\
  \ge z(\delta_F^-(v)) - w(\delta_F^-(v)) + R + 2/3 - (1 - R) - (1.6 - 2 R) - (\beta - 1) \\
  = z(\delta_F^-(v)) - w(\delta_F^-(v)) + 4 R + 2/3 - 1.6 - \beta
  \ge z(\delta_F^-(v)) - w(\delta_F^-(v)) + \beta \ge z(C) .
\end{multline*}

\paragraph{Case 2.3: $|\delta_{(F\setminus Q)\cup H}^-(v)| = 2$.}
Here we use that $z(C) \le \beta$, if $C\cap (F\setminus Q) = \emptyset$ and $z(C) \le 1 + \beta (1 - w(C\cap (F\setminus Q))) \le 1 + 0.4\beta$, otherwise. This gives
\begin{equation*}
  y_v \ge |\delta^-_{F\setminus Q}(v)| \cdot 1 + |\delta^-_H(v)| \cdot R - |\delta^-_{F\setminus Q}(v)| \cdot (1 - R) - (\beta - 1) = 2 R + 1 - \beta \ge \max\{\beta, 0.4\beta + 1\} \ge z(C) .
\end{equation*}
\subsubsection*{Case 3: $v$ is good, target of a small flip, but not of a tiny flip}
Note that by definition of $F$, $\delta_{F}^+(v)$ must be empty.
Let $e_S$ be the smallest edge with a flip towards $v$.

First, let us handle the case, where $e_S$ is only a raw pending flip, i.e.,
$(e_S,\text{\sc raw})\in P$, but $(e_S,\text{\sc reg})\notin P$.
This is exactly the case when for both $e_B\in\delta^-_B(v)$ it holds that $w(e_B) + w(e_S) > 1$
(Fact~\ref{gen-fact-reg-small}).
This implies that $v$ only repels big edges and edges of weight greater than $w(e_S)$.
Moreover, $\delta^-_Q(v) = \emptyset$ and therefore
\begin{equation*}
  y_v \ge \sum_{e\in\delta^-_F(v)} [1 - (1 - R)] + \sum_{e\in\delta^-_{B\setminus F}(v)} w(e) + z(\delta^-_{\tilde E\setminus B}(v))
  \ge 1 + z(\delta^-_{\tilde E\setminus B}(v)) \ge z(C) .
\end{equation*}
We can therefore assume what $(e_S,\text{\sc reg})\in P$.
\paragraph{Case 3.1: $|\delta_{(F\setminus Q)\cup H}^-(v)| = 0$.}
We bound $b_v$ by
\begin{equation*}
  b_v \ge -\sum_{e\in F\cap Q} [1 + 2 R - w(e)] 
  \ge 4 R - 3.2 .
\end{equation*}
Thus,
\begin{equation*}
  y_v \ge w(\delta^-_{\tilde E}(v)) + 4 R - 3.2 \ge 1 + R - 0.5 + 4 R - 3.2
  \ge \beta \ge z(C) .
\end{equation*}
\paragraph{Case 3.2: $|\delta_{(F\setminus Q)\cup H}^-(v)| = 1$ and $v$ is uncritical.}
Note that
\begin{equation*}
  z(\delta^-_B(v)) + b_v \ge \begin{cases}
    R + 0.5 \ge 3 R - 1 &\text{ if } |\delta^-_{F\cap Q}(v)| = 0,\\
    R + w(\delta^-_{F\cap Q}(v)) - (1 - 2 R + w(\delta^-_{F\cap Q}(v))) = 3 R - 1 &\text{ if } |\delta^-_{F\cap Q}(v)| = 1 .
  \end{cases}
\end{equation*}
If $C$ does not contain an edge from $F\setminus Q$,
\begin{equation*}
  y_v \ge 3R - 1 \ge \beta \ge z(C) .
\end{equation*}
Assume in the remainder of Case~3.2 that $C$ contains an edge from $\delta^-_{F\setminus Q}(v)\neq\emptyset$.
If $C$ contains no small edge from $\tilde E\setminus\delta^-(v)$
\begin{equation*}
  y_v \ge z(\delta^-_B(v)) + z(\delta^-_{\tilde E\setminus B}(v)) + b_v \ge 3 R - 1 + z(\delta^-_{\tilde E\setminus B}(v)) \ge z(C) .
\end{equation*}
Assume now that $C$ contain $\delta^-_{F\setminus Q}(v)$ and a small edge $e\in\tilde E\setminus\delta^-(v)$.
This implies $w(\delta^-_{F\setminus Q}(v)) > 0.6$, since
$w(\delta^-_{F\setminus Q}(v)) + w(e)\le w(C) \le 1$ and the edge would be in $Q$ otherwise.
If $\delta_{\tilde E}^-(v)$ contains a small edge,
\begin{equation*}
  y_v \ge z(\delta^-_B(v)) + z(\delta^-_{\tilde E\setminus B}(v)) + b_v \ge 3 R - 1 + 1/3
  \ge 1 + \beta (1 - 0.6) \ge z(C) .
\end{equation*}
Assume now $\delta_{\tilde E}^-(v)$ contains no small edge. Then $w(e) \ge w(e_S)$ or else the
edge would not be repelled by $v$. This implies $w(e_S) \le w(e) \le 1 - w(\delta^-_{F\setminus Q}(v)) < 0.4$.
If $|\delta^-_{F\cap Q}(v)| = 1$,
\begin{multline*}
  y_v \ge z(\delta^-(v)) + b_v \ge z(\delta^-_B(v)) + \beta (w(\delta^-_{\tilde E}(v)) - w(\delta^-_B(v))) + b_v \\
  \ge 3 R - 1 + \beta (1 + R - w(e_S) - w(\delta^-_B(v)))
  = 3 R - 1 + \beta (1 - w(e_S) - w(\delta^-_{F\setminus Q}(v))) + \beta (R - w(\delta^-_{F\cap Q}(v))) \\
  \ge \underbrace{3 R - 1 + \beta (R - 0.6)}_{\ge 1.4\ge z(\delta^-_{F\setminus Q}(v)) + z(e)} + \beta (1 - w(e) - w(\delta^-_{F\setminus Q}(v))) \ge z(C) .
\end{multline*}
If $|\delta^-_{F\cap Q}(v)| = 0$,
\begin{multline*}
  y_v \ge z(\delta^-(v)) + b_v \ge z(\delta^-_B(v)) + \beta (w(\delta^-_{\tilde E}(v)) - w(\delta^-_B(v))) + b_v \\
  \ge R + w(\delta^-_{B\setminus F}(v)) + \beta (1 + R - w(e_S) - w(\delta^-_B(v)))
  = R + w(\delta^-_{B\setminus F}(v)) + \beta (R - w(\delta^-_{B\setminus F}(v))) + \beta (1 - w(e_S) - w(\delta^-_F(v))) \\
  \ge \underbrace{R + R + \beta (R - R)}_{\ge 1.4} + \beta (1 - w(e) - w(\delta^-_F(v))) \ge z(C) .
\end{multline*}
Here we use that $w(\delta^-_{B\setminus F}(v)) \le R$, since it is not in $H$.
\paragraph{Case 3.3: $|\delta_{(F\setminus Q)\cup H}^-(v)| = 1$, $|\delta_{F\cap Q}^-(v)| = 0$ and $v$ is critical.}
If $C$ contains $\delta^-_{F\setminus Q}(v)\neq\emptyset$,
\begin{multline*}
  y_v \ge z(\delta^-_F(v)) - w(\delta^-_F(v)) + 1 + R - w(e_S) - (1 - R) + (\beta - 1) \\
  \ge z(\delta^-_F(v)) - w(\delta^-_F(v)) + 2R - 1.5 + \beta
  \ge z(\delta^-_F(v)) - w(\delta^-_F(v)) + 0.5 + 0.5\beta \ge z(C) .
\end{multline*}
If $C\cap F=\emptyset$,
\begin{equation*}
  y_v \ge R + w(\delta^-_{B\setminus (F\cup H)}(v)) + (\beta - 1) \ge R + \beta - 0.5
  \ge \beta \ge z(C) ,
\end{equation*}
since there must be a second big edge on $v$.
\paragraph{Case 3.4: $|\delta_{(F\setminus Q)\cup H}^-(v)| = 1$, $|\delta_{F\cap Q}^-(v)| = 1$ and $v$ is critical.}
Since $e_S$ is a regular pending flip, there must be an edge $e_B\in\delta^-_B(v)$ with $w(e_S) + w(e_B) \le 1$. This implies $w(e_S) + w(\delta^-_{F\cap Q}(v)) \le 1$: If $e_B\in\delta^-_{F\setminus Q}(v)$, it must be that $w(e_B) > 0.6$. Therefore, $w(e_S) + w(\delta^-_{F\cap Q}(v)) \le w(e_S) + 0.6 < w(e_S) + w(e_B) \le 1$.
If $C$ contains $\delta^-_{F\setminus Q}(v)\neq\emptyset$,
\begin{multline*}
  y_v \ge z(\delta^-_F(v)) - w(\delta^-_F(v)) + 1 + R - w(e_S) - (1 - R) - (1 - 2 R + w(\delta^-_{F\cap Q}(v))) + (\beta - 1) \\
  \ge z(\delta^-_F(v)) - w(\delta^-_F(v)) + 4 R - 2 + \beta - (w(\delta^-_{F\cap Q}(v)) + w(e_S))
  \ge z(\delta^-_F(v)) - w(\delta^-_F(v)) + 4 R - 3 + \beta \\
  \ge z(\delta^-_F(v)) - w(\delta^-_F(v)) + 0.5 + 0.5\beta \ge z(C) .
\end{multline*}
If $C\cap (F\setminus Q)=\emptyset$,
\begin{equation*}
  y_v \ge R + w(\delta^-_{F\cap Q}(v)) - (1 - 2 R + w(\delta^-_{F\cap Q}(v))) + (\beta - 1)
  \ge 3 R - 2 + \beta \ge \beta \ge z(C) .
\end{equation*}
\paragraph{Case 3.5: $|\delta_{(F\setminus Q)\cup H}^-(v)| = 2$ and $|\delta_{F\cap Q}^-(v)| = 0$.}
Let $e_B\in\delta^-_B(v)$ be the smaller of the two big edges.
Since $e_S$ is a regular pending flip, $w(e_S) + w(e_B) \le 1$. Therefore $w(e_B) \le 2/3$ and $e_B$ cannot
be in $H$. Since it is then in $F\setminus Q$, it must be greater than $0.6$. In particular,
all edges in $\delta^-_B(v)$ are greater than $0.6$.
If $C$ does not contain an edge from $F\setminus Q$, then
$y_v \ge 2R \ge \beta \ge z(C)$. Otherwise,
$y_v \ge 2R \ge 1 + 0.4\beta \ge z(C)$.
\subsubsection*{Case 4: $v$ is good and is not target of a small/tiny flip}
Note that $|\delta^+_{F}(v)| \le 1$.
If $|\delta^+_{F}(v)| = 0$,
then $v$ does not repel any edges. Thus, $b_v = 0$ and for every $e\in\delta(v)$ with $z(e) > 0$
it holds that $e\in\delta^-(v)$. Therefore, $z(C) \le z(\delta^-(v)) = y_v$.
Assume now $|\delta^+_{F}(v)| = 1$.
$v$ repels only big edges, since it cannot be target of a regular flip. Thus,
$\delta^-_{Q}(v) = \emptyset$ and
\begin{equation*}
  y_v \ge R \cdot |\delta^-_{F\cup H}(v)| + 0.5 \cdot |\delta^-_{B\setminus(F\cup H)}(v)|  + z(\delta^-_{\tilde E\setminus B}(v))
  \ge 1 + z(\delta^-_{\tilde E\setminus B}(v)) \ge z(C) .
\end{equation*}
\section{Conclusion}
We carefully create a structure of tiny/small edges and exploit the constraints on them from the
configuration LP. This gives a tiny margin to improve the bound of $1.75$.  
It is also a sensible approach for improving our bound of $11/6$ for \textsc{Restricted Assignment}. 
In that problem, however, it seems that one would need to simultaneously maintain
a good ratio of the sources and targets of small job moves and the same for tiny jobs as well.
It is unclear how this would work or if this issue could be bypassed.

The proof in this paper is by far the most difficult one
among all integrality gap proofs of the related problems from literature.
Already the analysis in the special case from the beginning of the paper is very complex. 
This can be justified by the fact that the LP we are comparing against
(from~\cite{DBLP:journals/algorithmica/EbenlendrKS14}) is quite strong,
whereas the known LP relaxations (before the configuration LP)
are weaker in problems such as \textsc{Restricted Assignment}.
\bibliography{gb}

\end{document}